\documentclass[12pt,draftcls,journal,onecolumn]{IEEEtran}

\usepackage{amssymb,amsthm, amsmath,latexsym}
\usepackage{graphicx}
\usepackage{mathrsfs}
\usepackage{amsfonts}
\usepackage{amssymb}
\usepackage{longtable}
\usepackage{amsmath}
\usepackage{setspace}
\usepackage{caption}
\usepackage[figuresright]{rotating}
\IfFileExists{ifsym.sty}{\usepackage[misc]{ifsym}}{}
\IfFileExists{bbm.sty}{\usepackage{bbm}}{}
\usepackage{makecell}
\usepackage{arydshln}
\usepackage{supertabular}
\usepackage{booktabs}
\usepackage{color}

\newtheorem{theorem}{Theorem}
\newtheorem{lemma}[theorem]{Lemma}
\newtheorem{remark}[theorem]{Remark}
\newtheorem{proposition}[theorem]{Proposition}

\newtheorem{example}[theorem]{Example}

\newtheorem{conj}[theorem]{Conjecture}

\newcommand{\Tr}{{\mathrm{Tr}}}

\newcommand{\wt}{{\mathtt{wt}}}

\newcommand{\F}{{\mathbb{F}}}

\newcommand{\C}{{\mathcal{C}}}

\usepackage{blindtext}

\ifCLASSINFOpdf

\else

\fi

\begin{document}
%
% paper title
% can use linebreaks \\ within to get better formatting as desired
\title{Six Families of Binary Codes  Arising from Ding's Conjectures
}
\author{ Xiaoqiang Wang,\, Shiyan Xiong,\, Mu yuan*,\, Jing Qiu,\, Dabin Zheng,\, Jiawei He
}

\renewcommand{\thefootnote}{\empty}%{footnote}}
\footnotetext{\thanks{~*Corresponding author.}
\newline \indent Xiaoqiang Wang, Shiyan Xiong, Mu yuan, Jing Qiu, Dabin Zheng are with the Hubei Key Laboratory of Applied Mathematics, Faculty of Mathematics and Statistics, Hubei University, Wuhan 430062, China (E-mail: waxiqq@163.com; xionshiyan@163.com; yuanmu847566@outlook.com; jingqiu0202@163.com; dzheng@hubu.edu.cn).
\newline \indent Jiawei He is with the School of Mathematics and Information Science, Nanchang Hangkong University, Nanchang 330036,
China (E-mail: hjwywh@mails.ccnu.edu.cn).
}

\maketitle

\begin{abstract}
Ding \cite{Ding2016} proposed ten conjectures on binary linear codes
arising from Boolean functions. Four of them, namely Conjectures
38--41, were subsequently proved by G\"olo\u{g}lu and Krasnayov\'a
\cite{GologluKrasnayova2019}. In this paper, we investigate the
remaining six conjectures, namely Conjectures 19, 27, 30, 33, 34,
and 37.
For Conjectures~19 and~27, we obtain common weight restrictions and
several infinite five-weight families. For Conjecture~30, we prove
that every admissible code has three, four, or five nonzero weights,
and an explicit four-weight example disproves the original
``three or five weights'' assertion. For Conjecture~33, an infinite
five-weight family is obtained. For Conjecture~34, we obtain a general
$(2h+1)$-weight upper bound and give an explicit six-weight
counterexample, showing that the original ``three or five weights''
assertion is false in general, where $h$ is a positive integer. The case $h=3$ with $3\nmid m$ is also completely determined. Finally, Conjecture~37 is completely resolved
by combining the known results of Ahmadi and Shafaeiabr
\cite{AhmadiShafaeiabr2023} with the treatment of the two remaining
classes $(a)$ and $(b)$.
\end{abstract}

%Motivated by recent developments on twisted Reed--Solomon (TRS) codes, we investigate LCP and LCD codes from TRS codes. We first establish a simple matrix criterion for the LCP property and then translate this criterion into conditions on the exponent positions of the twist terms appearing in the corresponding generator matrices. Based on this viewpoint, we derive necessary conditions and several explicit sufficient conditions for two TRS codes to form an LCP. The proofs are formulated in terms of the Vandermonde basis associated with the evaluation points, which considerably simplifies the analysis of the different overlap patterns among these exponent positions.
%
%We further study LCD TRS codes by combining the above approach with the known description of the dual of a TRS code whose evaluation points form a multiplicative subgroup. Several families of LCD TRS codes are obtained. Under suitable subfield-chain hypotheses, the resulting codes are MDS. Consequently, the associated LCP constructions attain the optimal security parameter for fixed length and dimension.
\textbf{2020 Mathematics Subject Classification:} 94B05, 94B15.

\textbf{Keywords:} Binary linear codes, $o$-polynomials, quadratic forms, exponential sums.

% For peer review papers, you can put extra information on the cover
% page as needed:
% \ifCLASSOPTIONpeerreview
% \begin{center} \bfseries EDICS Category: 3-BBND \end{center}
% \fi
%
% For peerreview papers, this IEEEtran command inserts a page break and
% creates the second title. It will be ignored for other modes.
\IEEEpeerreviewmaketitle

% REVIEW E01: This label is duplicated at the Preliminaries heading.
% REPLACE with: \section{Introduction}\label{sec:introduction}
\section{Introduction}\label{sec-introduction}

Let $\mathbb{F}_2$ be the binary field. A binary linear code
$\mathcal{C}$ of length $n$ and dimension $k$ is a $k$-dimensional
subspace of $\mathbb{F}_2^n$. For a codeword
$\mathbf{c}=(c_1,\ldots,c_n)\in\mathcal{C}$, its Hamming weight $\wt(\mathbf{c})$ is the
number of nonzero coordinates of $\mathbf{c}$, and the minimum distance of
$\mathcal{C}$ is
$
d=\min\{\wt(\mathbf{c}):\mathbf{c}\in\mathcal{C},\ \mathbf{c}\ne0\}.
$
Such a code is usually denoted by an $[n,k,d]$ code; see, for example, \cite{HuffmanPless2003}. If $A_i$ denotes
the number of codewords of weight $i$, then
$
W_{\mathcal{C}}(z)=\sum_{i=0}^{n}A_i z^i
$
is the weight enumerator of $\mathcal{C}$. In general, binary codes with few nonzero weights are of particular interest because of their
connections with  secret sharing
schemes~\cite{ref1,ref5,ref7,ref49}, strongly regular
graphs~\cite{ref4}, association schemes~\cite{ref3}, and authentication
codes~\cite{ref19}. Further constructions and results on few-weight
linear codes can be found, for example, in
\cite{ref13,Ding2016,ref16,ref18,ref23,ref25,ref27,ref29,ref31,ref32,ref34,ref37,ref40,ref41,ref43,ref44,ref45,ref46,ref47,ref50,ref51}.

A useful way to construct binary linear codes with few weights is the
defining-set construction. Let
\[
D=\{d_1,d_2,\ldots,d_n\}\subseteq \mathbb{F}_{2^m},
\]
where $D$ may be a multiset. The defining-set trace code associated
with $D$ is
\[
\mathcal{C}_D
=
\left\{
\left(
\operatorname{Tr}_1^m(xd_1),
\operatorname{Tr}_1^m(xd_2),
\ldots,
\operatorname{Tr}_1^m(xd_n)
\right)
:
x\in\mathbb{F}_{2^m}
\right\},
\]
where $\operatorname{Tr}_1^m$ denotes the absolute trace from
$\mathbb{F}_{2^m}$ to $\mathbb{F}_2$. Then $\mathcal{C}_D$ is a
binary linear code of dimension at most $m$, and $D$ is
called the defining set of $\mathcal{C}_D$~\cite{ref18}. Ding~\cite{ref15} further pointed
out a connection, within this trace-construction framework, between
the weight distributions of projective binary linear codes and the
Walsh spectra of Boolean functions.

For $a\in\F_{2^m}$, let $\mathbf{c}(a)=\bigl(\Tr(ad)\bigr)_{d\in D}$. Then its Hamming weight can be written as
\begin{equation}
\wt(\mathbf{c}(a))
=
\frac{|D|}{2}
-
\frac{1}{2}
\sum_{d\in D}(-1)^{\Tr(ad)}.
\label{eq:intro-weight-character}
\end{equation}
Consequently, determining the weight distribution of $\mathcal{C}_D$
is equivalent to determining the value distribution of a family of
additive character sums.

The construction becomes especially effective when $D$ is obtained
from a function over $\F_{2^m}$. For a map $F:\F_{2^m}\to \F_{2^m}$, one may take $D$ to be
its image set, a punctured image set, or a derivative image set of the
form
\[
D=\{F(x+1)+F(x)+\varepsilon:x\in \F_{2^m}\},
\qquad \varepsilon\in\mathbb{F}_2.
\]
If the defining map is two-to-one, then $|D|=2^{m-1}$, and
\eqref{eq:intro-weight-character} can be rewritten in terms of a Walsh
transform or a Weil-type exponential sum. Thus differential
properties of $F$, Walsh spectra of Boolean components of $F$, and the
weight distribution of the resulting trace code become different
aspects of the same problem. This point of view is the basis of the
generic construction developed systematically by Ding \cite{Ding2016}.

In \cite{Ding2016}, Ding surveyed this construction for several
important classes of Boolean and vectorial Boolean functions and
formulated a collection of open problems concerning the associated
defining sets and binary codes. Among them, the ten coding conjectures
most directly related to the image-set construction considered here
are Conjectures
\[
19,\ 27,\ 30,\ 33,\ 34,\ 37,\ 38,\ 39,\ 40,\ 41.
\]
The first five of these concern codes arising from the Glynn,
Cherowitzo, Payne, Welch, and Kasami-type functions, respectively;
Conjecture~37 concerns eleven trinomial value sets and predicts a
uniform three-weight distribution together with dual distance three;
and Conjectures~38--41 concern further code and difference-set
families in even dimension. A closely related auxiliary statement,
Conjecture~36, asserts that the punctured value sets of the eleven
trinomials occurring in Conjecture~37 are Singer difference sets.
Although Conjecture~36 is not one of the ten coding conjectures listed
above, it plays an essential role in the subsequent study of
Conjecture~37.

Several of Ding's conjectures have since been settled. G\"olo\u{g}lu
and Krasnayov\'a \cite{GologluKrasnayova2019} proved Conjectures
38--41 in full. Hence these four conjectures no longer belong to the
unresolved part of Ding's list. For the trinomial family, Ahmadi and
Shafaeiabr \cite{AhmadiShafaeiabr2023} proved Conjecture~36, showing
that the punctured value set of each of the eleven trinomials is a
Singer difference set. They also obtained a partial resolution of
Conjecture~37: the conjectured three-weight distribution and dual
distance were proved for the nine classes $(c)$--$(k)$, while the two
classes $(a)$ and $(b)$ were left outside the coding-theoretic theorem.

There has also been substantial progress on some of the functions
underlying the remaining conjectures. In the Welch case, Helleseth,
Li, and Xia \cite{HellesethLiXia2025} determined the differential and
Walsh spectra of the permutation naturally associated with the Welch
APN power function and investigated several related binary linear
codes. These results provide important structural input for
Conjecture~33, but do not by themselves establish the asserted
five-weight property in every odd dimension. For the Kasami family,
autocorrelation identities for important parameter ranges were
obtained in the study of vectorial Boolean functions; see, for
example, Canteaut et al.\ \cite{CanteautEtAl2021}. Such results prove
special three-weight cases related to Conjecture~34, but do not yield
the general ``three or five weights'' assertion. Earlier results on
three-zero cyclic codes, cross-correlation spectra, and APN functions
also provide indispensable tools, but they do not settle the full
parameter ranges of Conjectures~19, 27, 30, 33, and 34.

Consequently, before the present work, the coding-theoretic status of
the ten conjectures can be summarized as follows. Conjectures 38--41
were solved by G\"olo\u{g}lu and Krasnayov\'a. Conjecture~37 was
solved for classes $(c)$--$(k)$ by Ahmadi and Shafaeiabr, leaving classes
$(a)$ and $(b)$. The general forms of Conjectures~19, 27, 30, 33, and 34
remained unresolved. Moreover, for Conjectures~30 and 34 it was not
known from the conjectural statements themselves whether the proposed
``three or five weights'' dichotomy was actually valid for all
admissible parameters.

The purpose of this paper is to determine as much as possible of this
remaining part of Ding's program. We concentrate on Conjectures
19, 27, 30, 33, 34, and 37. Our main results are as follows.

\begin{enumerate}
\item
For Conjectures~19 and 27, we transform the relevant character sums
to a common three-zero cyclic-code form. This gives the uniform
Walsh-value restriction
\[
\{0,\pm 2^{(m+1)/2},\pm2^{(m+3)/2}\},
\]
and hence a common set of at most five nonzero Hamming weights. For
Conjecture~19 we further prove that at least one of the two inner values $\pm 2^{(m+1)/2}$ occurs in every odd dimension, and that both occur when $3\mid m$. We also establish several infinite five-weight families. For Conjecture~27 we construct several infinite
five-weight subfamilies by lifting explicit base-field witnesses
through suitable odd extensions. The full occurrence problem for all
admissible parameters remains open.

\item
For Conjecture~30, we show that the original statement is false. An
explicit code with parameters
$
[64,7,24]
$
has four nonzero weights, contradicting the predicted ``three or five
weights'' alternative. We replace the conjecture by the correct
universal statement: every admissible code in this family has exactly
three, four, or five nonzero weights. We also determine the complete
weight distribution of an infinite subfamily.

\item
For Conjecture~33, we reduce the weight problem to a one-parameter
family of quadratic Boolean functions associated with the Welch
permutation. A radical-lifting argument, together with the
dimension-nine base spectrum, proves that
$
m=9d$
gives an infinite family of five-weight codes, where $d$ is odd and $3\nmid d$. The conjecture in all
remaining odd dimensions is not claimed here.

\item
For Conjecture~34, we reduce the relevant character sums to quadratic
Gauss sums and obtain a general upper bound of
$
2h+1
$
nonzero weights after the natural normalization of $h$. We then
give an explicit six-weight counterexample. Thus the general ``three or five weights'' assertion in Conjecture~34 is false. The case $h=3$ with $3\nmid m$ is completely determined, while the cases $h=2$ and $h=5$ are related to previously studied families. The general rank and sign classification remains open.

\item
Finally, we complete Conjecture~37. The nine classes $(c)$--$(k)$ are
those already covered by Ahmadi and Shafaeiabr
\cite{AhmadiShafaeiabr2023}. For classes $(a)$ and $(b)$, their value-set
identities give a common defining set $D$ satisfying
$
D^{17}=\F_{2^m}\setminus D_5
$
for $h=4$. A multiplicative power relation alone does not preserve additive trace
spectra; the additional ingredient is the precise Dillon--Dobbertin
Walsh identity \cite{DillonDobbertin2004,Hertel2006}. It yields
\[
W_{1_D}(a)
=
W_{\Tr(x^3)}\!\left(a^{17/3}\right),
\]
so the two missing classes reduce to the classical three-valued Walsh
spectrum of the Gold cubic trace function. This determines their
complete weight distributions. The third Pless power moment then gives dual distance three. Hence Conjecture~37 is proved for all
eleven classes.
\end{enumerate}

The results above also show why it is important to distinguish three
different levels of information: an upper bound on the set of
candidate weights, proof that every candidate weight actually occurs,
and determination of the complete multiplicity distribution. These
three statements are not equivalent, and several of the remaining
parts of Conjectures~19, 27, 33, and 34 concern precisely the passage
from the first level to the second or third.

The remainder of the paper is organized as follows. Section~II collects the preliminary material on trace codes, character sums, quadratic forms, Gauss sums, Pless power moments, and extension lifting. Section~III treats Conjectures~19, 27, 30, 33, 34, and 37 in turn. Section~IV concludes the paper.

\section{Preliminaries}\label{sec:preliminaries}

Throughout this paper, $m\ge5$ is odd unless otherwise stated, and
$\F_{2^m}$ denotes the finite field with $2^m$ elements. The absolute
trace from $\F_{2^m}$ to $\F_2$ is denoted by
$\Tr=\Tr_{\F_{2^m}/\F_2}$, and
$
\chi(x)=(-1)^{\Tr(x)}
$
denotes the canonical additive character of $\F_{2^m}$.

We first collect several standard facts on trace codes, character
sums, quadratic Boolean functions, and extension arguments that will
be used repeatedly in the sequel.

\subsection{Trace codes and additive characters}

For a finite set $D\subseteq\F_{2^m}$, define the binary trace code
\[
\C_D=
\left\{
\mathbf{c}(a):=(\Tr(ad))_{d\in D}:a\in\F_{2^m}
\right\}.
\]
Its length is $|D|$. For $a\in\F_{2^m}$, the Hamming weight of
$\mathbf{c}(a)$ can be written as
\[
\wt(\mathbf{c}(a))
=
\frac12
\left(
|D|-\sum_{d\in D}\chi(ad)
\right).
\]
This standard character-sum expression for the Hamming weight of a defining-set code will be used repeatedly; see \cite{ref13,Ding2016}.

We shall repeatedly use the standard additive-character orthogonality relation \cite[Ch.~5]{LidlNiederreiter1997}
\[
\sum_{a\in\F_{2^m}}\chi(at)
=
\begin{cases}
2^m, & t=0,\\
0, & t\ne0.
\end{cases}
\]
We also use the standard invariance of the absolute trace under Frobenius automorphisms \cite[Ch.~2]{LidlNiederreiter1997}. In particular, for $0\le j<m$,
$\Tr(cx^{2^j})=\Tr\!\left(c^{2^{m-j}}x\right)$.

When an exponent such as $r/s$ occurs and
$\gcd(s,2^m-1)=1$, it is interpreted modulo $2^m-1$; namely,
$r/s$ means $rs^{-1}\pmod{2^m-1}$. Thus the power map
$x\mapsto x^{r/s}$ is well defined on $\F_{2^m}^{*}$.

Two binary codes of the same length are said to be permutation equivalent if one can be obtained from the other by a permutation of coordinates \cite{HuffmanPless2003}. More generally, if $M:\F_{2^m}\to\F_{2^m}$ is an
invertible $\F_2$-linear map and $D'=M(D)$, then
$\C_D$ and $\C_{D'}$ are permutation equivalent. Indeed, by the standard nondegeneracy of the trace pairing \cite[Ch.~2]{LidlNiederreiter1997}, there is a unique invertible
$\F_2$-linear map $M^*$ satisfying
$\Tr(aM(x))=\Tr(M^*(a)x)$ for all $a,x\in\F_{2^m}$.

\subsection{Two-to-one maps and character sums}

Several defining sets considered in this paper arise as images of two-to-one maps. Codes from image sets of $e$-to-one functions and the corresponding character-sum weight formula are standard in the defining-set construction; see \cite{ref13,Ding2016}. We record the binary two-to-one case, together with the first two moment identities obtained from character orthogonality.

\begin{lemma}\label{lem:2to1}
Let $\Phi:\F_{2^m}\to\F_{2^m}$ be two-to-one onto
$D=\operatorname{Im}(\Phi)$, and assume that $\Phi(0)=0$. For
$a\in\F_{2^m}$, define
\[
T(a)=\sum_{x\in\F_{2^m}}\chi(a\Phi(x)).
\]
Then $|D|=2^{m-1}$ and, for $a\ne0$,
$
\wt(\mathbf{c}(a))
=
2^{m-2}-\frac14T(a).
$
Moreover,
\[
\sum_{a\ne0}T(a)=2^m,
\qquad
\sum_{a\ne0}T(a)^2=2^{2m}.
\]
\end{lemma}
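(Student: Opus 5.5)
Since $\Phi$ is two-to-one onto $D$, every $d\in D$ has exactly two preimages, so $|D|=2^m/2=2^{m-1}$. The same fact lets us rewrite the character sum over $D$ as half of a sum over all of $\F_{2^m}$:
\[
\sum_{d\in D}\chi(ad)=\tfrac12\sum_{x\in\F_{2^m}}\chi(a\Phi(x))=\tfrac12 T(a).
\]
Inserting this into the weight formula of the previous subsection gives
\[
\wt(\mathbf{c}(a))=\tfrac12\bigl(2^{m-1}-\tfrac12T(a)\bigr)=2^{m-2}-\tfrac14T(a).
\]
The weight formula does not need $a\ne0$; that restriction only matters for the moment identities below.

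For the first moment, we sum over all $a$ and use orthogonality:
\[
\sum_{a}T(a)=\sum_x\sum_a\chi(a\Phi(x))=2^m\cdot\#\{x:\Phi(x)=0\}.
\]
Because $\Phi(0)=0$ and $\Phi$ is two-to-one, the fibre $\Phi^{-1}(0)$ has exactly two elements, so this total is $2^{m+1}$. Subtracting the term $T(0)=2^m$ leaves $\sum_{a\ne0}T(a)=2^m$.

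For the second moment, we expand the square and again apply orthogonality:
\[
\sum_a T(a)^2=\sum_{x,y}\sum_a\chi\bigl(a(\Phi(x)+\Phi(y))\bigr)=2^m\cdot\#\{(x,y):\Phi(x)=\Phi(y)\}.
\]
Each of the $2^m$ choices of $x$ has exactly two matching $y$, so the count of pairs is $2\cdot 2^m$ and the sum equals $2^{2m+1}$. Subtracting $T(0)^2=2^{2m}$ gives $\sum_{a\ne0}T(a)^2=2^{2m}$.

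The whole argument is routine. The one point needing care is using the hypothesis $\Phi(0)=0$ to pin down the size of the fibre over $0$ in the first moment.
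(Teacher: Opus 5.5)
Your proof is correct and follows the paper's argument essentially step for step. The fibre-size count gives $|D|=2^{m-1}$ and $T(a)=2\sum_{d\in D}\chi(ad)$, and both moments come from additive-character orthogonality with the $a=0$ term removed; the only cosmetic difference is that you count the pairs with $\Phi(x)=\Phi(y)$ as two per $x$, whereas the paper counts four ordered pairs over each of the $2^{m-1}$ image points, giving the same total $2^{m+1}$.
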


\begin{proof}
Since $\Phi$ is two-to-one onto $D$, every element of $D$ has exactly
two preimages. Hence $|D|=2^{m-1}$ and
$T(a)=2\sum_{d\in D}\chi(ad)$. Substituting this identity into the standard character expression for the Hamming weight \cite{ref13,Ding2016} gives
$\wt(\mathbf{c}(a))=2^{m-2}-\frac14T(a)$.

For the first moment, additive-character orthogonality \cite[Ch.~5]{LidlNiederreiter1997} gives
\[
\sum_{a\in\F_{2^m}}T(a)
=
\sum_{x\in\F_{2^m}}
\sum_{a\in\F_{2^m}}\chi(a\Phi(x))
=
2^m\#\{x:\Phi(x)=0\}.
\]
Since $\Phi$ is two-to-one and $0\in D$, the last cardinality is $2$.
As $T(0)=2^m$, subtraction of the term $a=0$ yields
$\sum_{a\ne0}T(a)=2^m$.

Similarly, $\sum_{a\in\F_{2^m}}T(a)^2=2^m\#\{(x,y):\Phi(x)=\Phi(y)\}$.
There are $2^{m-1}$ image points and four ordered pairs of preimages
above each of them, so this number equals $2^{m+1}$. Therefore the
complete second moment is $2^{2m+1}$. Removing
$T(0)^2=2^{2m}$ gives
$\sum_{a\ne0}T(a)^2=2^{2m}$.
\end{proof}

The two-to-one property above occurs naturally for $o$-polynomials.
In the form needed here, if $f$ is an $o$-polynomial over
$\F_{2^m}$, then for every $u\in\F_{2^m}^{*}$ the map
$x\mapsto f(x)+ux$ is two-to-one onto its image; see, for example,
\cite{Maschietti1998}. This observation will be used for several of
the families below.

\subsection{Five-valued character sums}

For Conjectures~19 and~27, the relevant character sums take values in
$\{0,\pm2^{(m+1)/2},\pm2^{(m+3)/2}\}$.
The following notation will be used to determine the corresponding
frequencies.

Assume that
$T(a)\in\{0,\pm2^{(m+1)/2},\pm2^{(m+3)/2}\}$ for every $a\ne0$, and
put
\[
N_j
=
\#\left\{
a\ne0:
T(a)=j2^{(m+1)/2}
\right\},
\qquad
j\in\{-2,-1,0,1,2\}.
\]
Define $B=N_2+N_{-2}$ and $\Delta=N_2-N_{-2}$.

\begin{proposition}\label{prop:freq}
Under the above assumptions,
\[
N_{\pm2}=\frac{B\pm\Delta}{2},
\qquad
N_{\pm1}
=
\frac{
2^{m-1}-4B
\pm
\left(2^{(m-1)/2}-2\Delta\right)
}{2},
\qquad
N_0=2^{m-1}-1+3B.
\]
Consequently, all five values occur for $a\ne0$ if and only if
\[
B>|\Delta|
\quad\text{and}\quad
2^{m-1}-4B>
\left|2^{(m-1)/2}-2\Delta\right|.
\]
Furthermore,
\[
B
=
\frac{
\sum_{a\ne0}T(a)^4-2^{3m+1}
}{
48\cdot2^{2m}
},
\qquad
\Delta
=
\frac{
\sum_{a\ne0}T(a)^3-2^{2m+1}
}{
6\cdot2^{3(m+1)/2}
}.
\]
\end{proposition}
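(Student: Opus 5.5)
The proof is linear algebra on the five unknowns $N_{-2},\dots,N_2$. I assume, as in the intended applications, that $T$ comes from a two-to-one map $\Phi$ with $\Phi(0)=0$ as in Lemma~\ref{lem:2to1}, so that the moment identities $\sum_{a\ne0}T(a)=2^m$ and $\sum_{a\ne0}T(a)^2=2^{2m}$ are available. I will write $s=2^{(m+1)/2}$, so that $T(a)\in\{0,\pm s,\pm 2s\}$ for $a\ne0$.

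Grouping the sums over $a\ne0$ by the value of $T(a)$ gives three linear equations:
\[
\sum_{j}N_j=2^m-1,\qquad s\,(N_1-N_{-1}+2\Delta)=2^m,\qquad s^2\,(N_1+N_{-1}+4B)=2^{2m}.
\]
Since $2^m/s=2^{(m-1)/2}$ and $2^{2m}/s^2=2^{m-1}$, the second and third equations give
\[
N_1-N_{-1}=2^{(m-1)/2}-2\Delta,\qquad N_1+N_{-1}=2^{m-1}-4B.
\]
Solving for $N_{\pm1}$ gives the stated formula, and $N_{\pm2}=(B\pm\Delta)/2$ holds by definition. The first equation then yields
\[
N_0=2^m-1-B-(2^{m-1}-4B)=2^{m-1}-1+3B.
\]

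For the occurrence criterion I will use these closed forms directly. Both $N_{\pm2}>0$ holds if and only if $B>|\Delta|$. Both $N_{\pm1}>0$ holds if and only if $2^{m-1}-4B>|2^{(m-1)/2}-2\Delta|$. The count $N_0$ is automatically positive, since $B\ge0$ and $m\ge5$. Hence all five values occur exactly under the two stated inequalities.

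For the expressions of $B$ and $\Delta$ I compute the third and fourth moments in the same grouped form. The third moment is
\[
\sum_{a\ne0}T(a)^3=s^3(N_1-N_{-1}+8\Delta)=s^3\bigl(2^{(m-1)/2}+6\Delta\bigr).
\]
Here $s^3\cdot2^{(m-1)/2}=2^{2m+1}$, and solving for $\Delta$ gives the stated formula. The fourth moment is
\[
\sum_{a\ne0}T(a)^4=s^4(N_1+N_{-1}+16B)=2^{2m+2}\bigl(2^{m-1}+12B\bigr)=2^{3m+1}+48\cdot2^{2m}B,
\]
which gives $B$.

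There is no real obstacle in this argument. The only points needing care are the dependence on the two moment identities of Lemma~\ref{lem:2to1}, which must be stated as an assumption or inherited from the setting, and the bookkeeping of powers of two.
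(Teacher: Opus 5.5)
Your proposal is correct and follows the paper's proof essentially step for step. The zeroth moment and the first two moments from Lemma~\ref{lem:2to1} give $N_{\pm1}$ and $N_0$ in terms of $B$ and $\Delta$, the third and fourth moments are inverted to give $\Delta$ and $B$, and you also state explicitly that $N_0>0$ holds automatically, which the paper uses without comment in the occurrence criterion.
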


\begin{proof}
The zeroth moment gives $N_{-2}+N_{-1}+N_0+N_1+N_2=2^m-1$.
By Lemma~\ref{lem:2to1}, the first and second moments are
$\sum_{a\ne0}T(a)=2^m$ and
$\sum_{a\ne0}T(a)^2=2^{2m}$. Dividing these two equations by
$2^{(m+1)/2}$ and $2^{m+1}$, respectively, gives
$N_1-N_{-1}+2\Delta=2^{(m-1)/2}$ and
$N_1+N_{-1}+4B=2^{m-1}$.
Solving these equations gives the stated expressions for
$N_{\pm1}$, and the zeroth moment then gives the formula for $N_0$.
The conditions for all five values to occur are precisely
$N_{\pm2}>0$ and $N_{\pm1}>0$.

For the third moment,
\[
\sum_{a\ne0}T(a)^3
=
2^{3(m+1)/2}
\left(
N_1-N_{-1}+8\Delta
\right).
\]
Using
$N_1-N_{-1}=2^{(m-1)/2}-2\Delta$ gives
\[
\sum_{a\ne0}T(a)^3
=
2^{2m+1}
+
6\cdot2^{3(m+1)/2}\Delta.
\]
This yields the stated expression for $\Delta$. Similarly,
\[
\sum_{a\ne0}T(a)^4
=
2^{2m+2}
\left(
N_1+N_{-1}+16B
\right),
\]
and substituting
$N_1+N_{-1}=2^{m-1}-4B$ gives the formula for $B$.
\end{proof}

\subsection{Walsh transforms}

Let $F:\F_{2^m}\to\F_2$ be a Boolean function. We use the standard Walsh-transform convention \cite[Chs.~2,5]{Carlet2021}, defined by
\[
W_F(a)
=
\sum_{x\in\F_{2^m}}
(-1)^{F(x)+\Tr(ax)},
\qquad
a\in\F_{2^m}.
\]
If $F=1_D$ is the indicator function of a subset
$D\subseteq\F_{2^m}$, then for $a\ne0$,
\[
W_{1_D}(a)
=
-2\sum_{d\in D}\chi(ad).
\]
Consequently,
\[
\wt(\mathbf{c}(a))
=
\frac{|D|}{2}
+\frac14W_{1_D}(a),
\qquad a\ne0.
\]
Notice that the sign here is opposite to the one in
Lemma~\ref{lem:2to1}. The reason is that the two quantities are
defined differently:
\[
T(a)=2\sum_{d\in D}\chi(ad),
\qquad
W_{1_D}(a)=-2\sum_{d\in D}\chi(ad).
\]
This distinction will be important in the proof of
Conjecture~37.

\subsection{Quadratic Boolean functions}

Let $V$ be an $m$-dimensional vector space over $\F_2$, and let
$Q:V\to\F_2$ be a quadratic Boolean function. We use the standard
theory of quadratic Boolean functions and their Walsh spectra; see
\cite[Ch.~5]{Carlet2021}.

The associated bilinear form of $Q$, also called its polarization, is
\[
B_Q(x,y)=Q(x+y)+Q(x)+Q(y).
\]
Since the characteristic is two, $B_Q(x,x)=0$ for every $x\in V$,
so $B_Q$ is alternating.

The radical of $Q$ is defined by
\[
R_Q=\{z\in V:B_Q(z,x)=0\text{ for every }x\in V\}.
\]
Write $r=\dim_{\F_2}R_Q$. Since the rank of an alternating bilinear
form is even, $m-r$ is even. In particular, when $m$ is odd, $r$ is
odd.

The following standard result on quadratic Boolean functions will be
used repeatedly; see, for example, \cite[Ch.~5]{Carlet2021}.

\begin{lemma}\label{lem:quadratic}
Let $Q:V\to\F_2$ be a quadratic Boolean function, and let
$r=\dim_{\F_2}R_Q$. Then
\[
\sum_{x\in V}(-1)^{Q(x)}
=
\begin{cases}
0, & Q|_{R_Q}\ne0,\\[2mm]
\varepsilon_Q\,2^{(m+r)/2}, & Q|_{R_Q}=0,
\end{cases}
\]
where $\varepsilon_Q\in\{1,-1\}$.
\end{lemma}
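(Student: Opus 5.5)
The plan is the classical one: use the radical to split the sum, then reduce the nondegenerate part to a symplectic normal form. First I show that $Q|_{R_Q}$ is an $\F_2$-linear functional on $R_Q$. For $z,z'\in R_Q$ we have $B_Q(z,z')=0$, so $Q(z+z')=Q(z)+Q(z')$. More generally, for $z\in R_Q$ and $x\in V$,
\[
Q(x+z)=Q(x)+Q(z)+B_Q(x,z)=Q(x)+Q(z).
\]

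Next I choose a complement $U$ with $V=U\oplus R_Q$ and write each $x$ uniquely as $x=u+z$ with $u\in U$, $z\in R_Q$. The identity above gives
\[
\sum_{x\in V}(-1)^{Q(x)}=\Bigl(\sum_{u\in U}(-1)^{Q(u)}\Bigr)\Bigl(\sum_{z\in R_Q}(-1)^{Q(z)}\Bigr).
\]
If $Q|_{R_Q}\neq0$, it is a nonzero linear functional, so the second factor is a character sum of a nontrivial character over $R_Q$. That factor vanishes, which gives the first case. If $Q|_{R_Q}=0$, the second factor equals $2^r$, and it remains to evaluate the sum over $U$, where $\dim U=m-r=2k$.

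On $U$ the restricted form $B_Q|_U$ is nondegenerate. Indeed, if $u\in U$ pairs trivially with all of $U$, then it also pairs trivially with $R_Q$, so $u\in R_Q\cap U=0$. Since the form is alternating and nondegenerate, $U$ has a symplectic basis $e_1,f_1,\dots,e_k,f_k$ with $B_Q(e_i,f_j)=\delta_{ij}$ and all other pairings zero. Writing $u=\sum(x_ie_i+y_if_i)$ and expanding by the polarization formula gives
\[
Q(u)=\sum_i\bigl(x_iy_i+Q(e_i)x_i+Q(f_i)y_i\bigr)+Q(0).
\]
The constant $Q(0)$ contributes only a global sign. The sum therefore factors as $(-1)^{Q(0)}\prod_{i=1}^k S_i$, where each $S_i$ is a sum over $(x,y)\in\F_2^2$ of $(-1)^{xy+\alpha x+\beta y}$.

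A direct check shows each $S_i=\pm2$: the value is $2(-1)^{\alpha\beta}$. Hence the sum over $U$ equals $\pm2^k=\pm2^{(m-r)/2}$. Multiplying by $2^r$ gives $\varepsilon_Q2^{(m+r)/2}$ with $\varepsilon_Q\in\{\pm1\}$.

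I expect the main obstacle to be the existence of the symplectic basis together with the bookkeeping of the linear terms in the expansion of $Q(u)$. A quadratic Boolean function is a quadratic form plus an affine part, so $Q(u)$ has to be expanded carefully through $B_Q$ rather than by assuming $Q$ is homogeneous. The symplectic basis itself is obtained by the standard induction: pick $e$ and $f$ with $B_Q(e,f)=1$, then pass to the orthogonal complement of $\langle e,f\rangle$ in $U$, on which the form is again nondegenerate.
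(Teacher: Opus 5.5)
Your proof is correct and complete. You split off the radical, observe that the $R_Q$-factor is a character sum of the linear functional $Q|_{R_Q}$, and reduce the nondegenerate part $U$ to a symplectic basis; this gives exactly $2^r\prod_i 2(-1)^{Q(e_i)Q(f_i)}=\pm2^{(m+r)/2}$.

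The paper gives no proof of its own and only refers to the standard theory in \cite[Ch.~5]{Carlet2021}. The quickest standard route there is the squaring trick: substituting $y=x+z$ and using $Q(x)+Q(x+z)=Q(z)+B_Q(x,z)$ gives
\[
\Bigl(\sum_{x\in V}(-1)^{Q(x)}\Bigr)^2=\sum_{x,z\in V}(-1)^{Q(x)+Q(x+z)}=\sum_{z\in V}(-1)^{Q(z)}\sum_{x\in V}(-1)^{B_Q(x,z)}=2^m\sum_{z\in R_Q}(-1)^{Q(z)}.
\]
This is $0$ or $2^{m+r}$ according as the linear functional $Q|_{R_Q}$ is nonzero or zero. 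That argument needs neither a complement $U$ nor a symplectic basis. It proves exactly what the lemma claims: the absolute value, with an unspecified sign.

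Your normal-form route costs more bookkeeping but gives strictly more. It identifies the sign as $\varepsilon_Q=(-1)^{\sum_i Q(e_i)Q(f_i)}$, the Arf invariant of $Q|_U$. That is the kind of information one would need for the sign questions the paper leaves open, for instance for Conjecture~34.

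One minor correction concerns $Q(0)$. With the paper's polarization $B_Q(x,y)=Q(x+y)+Q(x)+Q(y)$, three things you use all force $Q(0)=0$: bilinearity, the alternating property, and $0\in R_Q$. The lemma is also false without it. With the polarization $Q(x+y)+Q(x)+Q(y)+Q(0)$, the function $Q\equiv1$ has $R_Q=V$ and $Q|_{R_Q}\ne0$, yet its sum is $-2^m$. So rather than saying the constant $Q(0)$ ``contributes only a global sign'', you should state that $Q(0)=0$ is implicitly assumed, as it is in all of the paper's applications.
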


A quadratic Boolean function that occurs repeatedly in the sequel has
the form
\[
Q(z)=\Tr\!\left(cz^{2^h+1}+dz^3+\ell z\right),
\qquad c,d,\ell\in\F_{2^m}.
\]
The linear term $\Tr(\ell z)$ does not contribute to the
polarization. Thus
\[
B_Q(x,z)
=
\Tr\!\left(
c(x^{2^h}z+xz^{2^h})
+d(x^2z+xz^2)
\right).
\]
Using the Frobenius invariance of the absolute trace, this can be
written as $B_Q(x,z)=\Tr(L(z)x)$, where
\[
L(z)
=
cz^{2^h}
+(cz)^{2^{m-h}}
+dz^2
+(dz)^{2^{m-1}}.
\]
Since the trace pairing is nondegenerate, $z\in R_Q$ if and only if
$L(z)=0$, that is,
\[
cz^{2^h}
+(cz)^{2^{m-h}}
+dz^2
+(dz)^{2^{m-1}}
=0.
\]
Therefore, the possible values of the quadratic exponential sum are
determined by the dimension of the solution space of this linearized
equation together with the condition whether $Q$ vanishes on its
radical.

\subsection{Multiplicative characters and Gauss sums}

Let $\omega$ be a multiplicative character of $\F_{2^m}^{*}$ of
order $2^m-1$. For $1\le k\le2^m-2$, define the Gauss sum
\[
G(k)
=
\sum_{x\in\F_{2^m}^{*}}
\chi(x)\omega(x)^{-k}.
\]
If
$k=\sum_{i=0}^{m-1}k_i2^i$ with $k_i\in\{0,1\}$, write
$w_2(k)=\sum_{i=0}^{m-1}k_i$ for the number of ones in the binary
expansion of $k$.

\begin{lemma}[Binary Stickelberger formula]
\label{lem:stickelberger}
Let $\mathfrak P$ be a prime ideal above $2$, with the valuation
normalized by $v_{\mathfrak P}(2)=1$. Then, for
$1\le k\le2^m-2$,
\[
v_{\mathfrak P}(G(k))=w_2(k).
\]
\end{lemma}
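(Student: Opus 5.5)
The statement is the classical Stickelberger congruence specialized to $p=2$. I would derive it from the Gross--Koblitz formula rather than from a direct prime-ideal factorization. First I fix the setting. Let $\zeta$ be a primitive $(2^m-1)$-th root of unity and note that $\chi$ takes values $\pm1$. Then $G(k)\in\mathbb{Z}[\zeta]$, and $\mathfrak P$ is a prime of $\mathbb{Q}(\zeta)$ over $2$. The prime $2$ is unramified in $\mathbb{Q}(\zeta)$, since $2^m-1$ is odd. Hence the normalization $v_{\mathfrak P}(2)=1$ is the standard one, and the claim is independent of the choice of $\mathfrak P$ once $\omega$ is taken to be the Teichmüller character attached to $\mathfrak P$. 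I would state this identification explicitly. For a different identification of $\omega$, the exponent $k$ is replaced by $k\cdot 2^j$ modulo $2^m-1$, which is a cyclic shift of binary digits and leaves $w_2(k)$ unchanged. So the formula holds for any $\omega$ of order $2^m-1$, with $\omega$ paired to $\mathfrak P$ in this way.

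The main step is to pass to the completion $\mathbb{Q}_2(\zeta)$, which is the unramified extension of degree $m$, and to apply Gross--Koblitz. For $p=2$ the Dwork uniformizer $\pi$ satisfies $\pi^{p-1}=-p$, that is, $\pi=-2$. The formula reads
\[
G(k)=-\pi^{\,w_2(k)}\prod_{i=0}^{m-1}\Gamma_2\!\left(\left\langle \frac{2^i k}{2^m-1}\right\rangle\right),
\]
up to the sign convention relating $\omega^{-k}$ to the Teichmüller character. Since Morita's $2$-adic gamma function takes unit values, $v_{\mathfrak P}(G(k))=w_2(k)\,v(\pi)=w_2(k)$.

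A more elementary alternative avoids Gross--Koblitz. I would first check the base case $w_2(k)=1$: the Gauss sum for $k=2^j$ has valuation $1$, which follows from $|G|^2=2^m$ together with a congruence computation modulo $\mathfrak P^2$. I would then induct on the binary digits of $k$ using the Jacobi-sum relation $G(a)G(b)=J(a,b)G(a+b)$, valid when $a+b\not\equiv0$. Writing $k=a+b$ with disjoint binary supports, one needs $v(J(a,b))=0$, and in general $v(J)$ equals the carry count $w_2(a)+w_2(b)-w_2(a+b)$. Establishing that carry count is itself the content of Stickelberger's theorem, so this route is circular unless the Jacobi-sum congruence is proved directly.

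For that reason I expect the main obstacle to be the precise normalization: matching $\omega^{-k}$ with the Teichmüller character and confirming $v(\pi)=1$ when $p=2$. I would resolve it simply by citing the standard statement, for example Lang's \emph{Cyclotomic Fields} or Berndt--Evans--Williams. In the paper this lemma is used as a known tool, so a citation together with the digit-rotation remark should suffice.
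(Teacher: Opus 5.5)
Your argument is correct. It differs from the paper only in that the paper gives no argument: it states the lemma and cites the classical Stickelberger valuation formula \cite{AubryKatzLangevin2015}, which is also your fallback. Your Gross--Koblitz derivation goes through. For $p=2$ one has $\pi=-2$, and the Dwork character $\zeta_\pi^{\Tr(x)}$ is necessarily $(-1)^{\Tr(x)}=\chi(x)$. Then $v_{\mathfrak P}(\pi)=v_{\mathfrak P}(2)=1$, and $\Gamma_2$ is unit-valued, so $v_{\mathfrak P}(G(k))=w_2(k)$. This route gives more than is needed (the exact unit part of $G(k)$) at the price of heavier machinery; Stickelberger's theorem gives exactly the valuation.

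Pairing $\omega$ with $\mathfrak P$ is the right move, and it is sharper than the paper's wording ``Let $\mathfrak P$ be a prime ideal above $2$''. However, drop the phrase ``independent of the choice of $\mathfrak P$''. Write $\sigma_s$ for the automorphism $\zeta\mapsto\zeta^s$ of $\mathbb Q(\zeta)$. For a \emph{fixed} $\omega$, the formula holds only at the unique prime $\mathfrak P$ for which $\omega$ is a Teichm\"uller character, i.e.\ $x\mapsto\omega(x)\bmod\mathfrak P$ (with $0\mapsto0$) is a field isomorphism onto the residue field. Indeed, $\sigma_s^{-1}(G(k))=G(ks^{-1})$, so $v_{\sigma_s(\mathfrak P)}(G(k))=w_2(ks^{-1}\bmod(2^m-1))$. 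At the conjugate prime $\bar{\mathfrak P}=\sigma_{-1}(\mathfrak P)\ne\mathfrak P$ this equals $m-w_2(k)$. The Frobenius shifts $k\mapsto2^jk$ you mention only account for the $m$ identifications of a single residue field with $\F_{2^m}$.

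This does not affect the paper, because Proposition~\ref{thm:inner} needs only one prime at which the formula holds. For the same reason, your aside that $|G|^2=2^m$ settles the base case would not work as stated: that identity only constrains $v_{\mathfrak P}(G)+v_{\bar{\mathfrak P}}(G)$. You are right not to rely on that route.
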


This is the characteristic-two form of the Stickelberger valuation
formula for Gauss sums; see \cite{AubryKatzLangevin2015}. General
background on additive characters, multiplicative characters, and
Gauss sums may be found in \cite[Ch.~5]{LidlNiederreiter1997}.

\subsection{Pless power moments}

Let $\C$ be a binary $[n,k]$ code with weight distribution
$(A_0,\ldots,A_n)$, and let
$(A_0^\perp,\ldots,A_n^\perp)$ be the weight distribution of its
dual. The following are the first four Pless power-moment identities \cite{Pless1963}; in binomial form,
\[
\sum_{i=0}^n A_i=2^k,
\]
\[
\sum_{i=0}^n iA_i
=
2^{k-1}(n-A_1^\perp),
\]
\[
\sum_{i=0}^n\binom{i}{2}A_i
=
2^{k-2}
\left[
\binom{n}{2}
-(n-1)A_1^\perp
+A_2^\perp
\right],
\]
and
\[
\sum_{i=0}^n\binom{i}{3}A_i
=
2^{k-3}
\left[
\binom{n}{3}
-\binom{n-1}{2}A_1^\perp
+(n-2)A_2^\perp
-A_3^\perp
\right].
\]
These identities will be used only after the complete weight
distribution of the corresponding code has been determined. In
particular, if $A_1^\perp=A_2^\perp=0$, the third moment determines
$A_3^\perp$ directly.

\subsection{Odd-extension lifting}

The following lemma records the extension argument used later for
several infinite families.

\begin{lemma}[Odd-extension lifting]\label{lem:lifting}
Let $s$ and $d$ be odd, and for every divisor $e\mid d$ let
$P_e:\F_{2^{se}}\to\F_{2^{se}}$. Assume that whenever
$e_1\mid e_2\mid d$, the function $P_{e_2}$ restricts to $P_{e_1}$
on $\F_{2^{se_1}}$, and the phase
\[
x\longmapsto
(-1)^{\Tr_{se_2}(P_{e_2}(x))}
\]
is constant on the Frobenius orbits over $\F_{2^{se_1}}$. Put
$
S_{se}
=
\sum_{x\in\F_{2^{se}}}
(-1)^{\Tr_{se}(P_e(x))}.
$
Assume further that, for every $e\mid d$,
\[
\frac{S_{se}}{2^{(se+1)/2}}
\in\{0,\pm1,\pm2\}.
\]
If $3\nmid d$, then
$
S_{sd}
=
\left(\frac{2}{d}\right)
2^{(sd-s)/2}S_s,
$
where $(2/d)$ denotes the Jacobi symbol.

If all the phases are quadratic and, throughout the intermediate
extensions, the radical dimension and the condition $Q|_R=0$ remain
unchanged, the same conclusion also holds for arbitrary odd $d$.
\end{lemma}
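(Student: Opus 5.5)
The plan is to work with the quadratic-form interpretation of each $S_{se}$. The hypothesis $S_{se}/2^{(se+1)/2}\in\{0,\pm1,\pm2\}$, combined with Lemma~\ref{lem:quadratic} and the fact that the radical dimension $r_e$ is odd (since $se$ is odd), forces $S_{se}=0$ or $S_{se}=\varepsilon_e 2^{(se+r_e)/2}$ with $r_e\in\{1,3\}$. The lemma thus reduces to three claims. First, $S_s=0$ if and only if $S_{sd}=0$. Second, $r_d=r_1$. Third, $\varepsilon_d=(2/d)\varepsilon_1$. Indeed, once $r_d=r_1$ we get $2^{(sd+r_d)/2}=2^{(sd-s)/2}\,2^{(s+r_1)/2}$, which is exactly the claimed identity.

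\emph{Step 1: the radical.} Let $Q_e(x)=\Tr_{se}(P_e(x))$. The restriction property and the orbit-constancy hypothesis make the polarization $B_{Q_d}$ Frobenius-equivariant over $\F_{2^s}$, so the radical $R_d$ is a Frobenius-stable $\F_2$-subspace of $\F_{2^{sd}}$. I will compare $R_d$ with $R_1\otimes$ (the obvious lift) by looking at the linearized polynomial $L$ from the preliminaries. Its coefficients are compatible along the tower, so $R_d\cap\F_{2^{se}}=R_e$ for every $e\mid d$. Since $\dim R_d\le 3$ and $R_d$ is stable under $x\mapsto x^{2^s}$, any element of $R_d$ outside $\F_{2^s}$ generates an orbit spanning a subspace of dimension at least $3$ that lies in a subfield $\F_{2^{se}}$ with $e>1$ and $e\mid d$. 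If $\dim R_d=3$ arose this way while $R_1$ had dimension $1$, the Frobenius would act on $R_d/R_1$, a space of dimension $2$, or on $R_d$ itself, with an orbit of length $3$. This would force $3\mid e\mid d$. So when $3\nmid d$ we get $R_d=R_1$, and in particular $r_d=r_1$. The vanishing condition on the radical transfers for the same reason: for $z\in R_1$ we have $Q_d(z)=\Tr_{s}(\Tr_{sd/s}P(z))=d\cdot Q_1(z)=Q_1(z)$ because $d$ is odd. This gives the first claim. In the second part of the statement, the invariance of $r$ and of the condition $Q|_R=0$ is simply assumed, so this step is bypassed.

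\emph{Step 2: the sign.} With the radical fixed, both $Q_1$ and $Q_d$ factor through nondegenerate forms, on $\F_{2^s}/R_1$ and on $\F_{2^{sd}}/R_1$ respectively. The sign $\varepsilon$ is $(-1)^{\mathrm{Arf}}$. My plan is to write $Q_d=\Tr_s\circ\widetilde Q$, where $\widetilde Q$ is the $\F_{2^s}$-valued form obtained by extension of scalars of the $\F_{2^s}$-form underlying $Q_1$ to $\F_{2^{sd}}$. The Arf invariant of a tensor product of an (odd-dimensional-over-$\F_{2^s}$) extension with the trace form $\Tr_{\F_{2^{sd}}/\F_{2^s}}(y^2)$ is computed by the classical formula for the Gauss sum of the trace form, namely $\sum_{y\in\F_{2^{d}}}(-1)^{\Tr(y^{2^k+1})}$-type evaluations. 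For odd $d$ this contributes exactly the factor $(2/d)$. Concretely, I would verify $\varepsilon_d=\varepsilon_1\cdot(2/d)$ first for prime $d$, using Frobenius-orbit counting of zeros of $Q_d$: points outside $\F_{2^s}$ come in orbits of length $d$, which gives a congruence of $S_{sd}$ modulo $d$ against $S_s$. I would then pass to general $d$ by multiplicativity of the Jacobi symbol along the tower $e_1\mid e_2\mid d$.

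I expect Step 2 to be the main obstacle. The orbit-counting congruence $S_{sd}\equiv S_s\pmod d$, together with the known magnitude $2^{(sd+r)/2}$, gives $\varepsilon_d2^{(sd-s)/2}\equiv\varepsilon_1\pmod d$. The sign is then pinned down by Euler's criterion, $2^{(d-1)/2}\equiv(2/d)\pmod d$ for prime $d$, since $2^{(sd-s)/2}=(2^{(d-1)/2})^s$ and $s$ is odd. This route is cleaner than computing Arf invariants directly, and I would try it first. It requires $d>2$ prime, so that $\pm1$ are distinct mod $d$. For composite $d$ I would induct over the chain of divisors using the restriction hypothesis.
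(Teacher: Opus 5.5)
\textbf{There is a genuine gap: your argument assumes quadratic phases, which the first assertion of the lemma does not.} Everything you do runs through Lemma~\ref{lem:quadratic}: radicals, $r_e\in\{1,3\}$, Arf signs. But the case $3\nmid d$ makes no quadraticity assumption on the $P_e$; only the second assertion does.

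The paper genuinely needs the general version. In Proposition~\ref{pro:12} and Theorem~\ref{thm:C1families} it is applied to the phase $\Tr(x^{d_m}+x)$ with $d_m=2^{(m+3)/2}+2^{(m+1)/2}+2^2$. This exponent has binary weight $3$, so the phase is cubic and Lemma~\ref{lem:quadratic} does not apply. As written, your proposal proves only the quadratic special case of the first assertion.

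The missing idea is that for $3\nmid d$ no structural input is needed at all.
\begin{itemize}
\item For a prime $p\mid d$ and $v=se$ with $ep\mid d$, the Frobenius-orbit count gives $S_{vp}\equiv S_v\pmod p$.
\item Write $S_v=c_v2^{(v+1)/2}$ and $S_{vp}=c_{vp}2^{(vp+1)/2}$. Euler's criterion with $v$ odd gives $c_{vp}\equiv\left(\frac2p\right)c_v\pmod p$.
\item Both $c_{vp}$ and $\left(\frac2p\right)c_v$ lie in $\{0,\pm1,\pm2\}$, so their difference has absolute value at most $4$. Once $p\ge5$ this is less than $p$, hence $c_{vp}=\left(\frac2p\right)c_v$.
\item Iterating over the prime factors of $d$ gives the formula.
\end{itemize}
This is exactly where the hypothesis on the normalized values and the condition $3\nmid d$ are used: the congruence fixes the magnitude as well as the sign. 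Your Step~2 is the right mechanism for the second assertion, where $p=3$ is allowed and the assumed stability of $r$ and of $Q|_R=0$ supplies the magnitude. There you agree with the paper.

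\textbf{Even for quadratic phases, Step~1 is circular.} The hypothesis forces $r_e\in\{1,3\}$ only when $S_{se}\ne0$. If $S_{se}=0$, then $Q_e|_{R_e}\ne0$ and Lemma~\ref{lem:quadratic} says nothing about $r_e$. So you cannot use $\dim R_d\le3$ to obtain $R_d=R_1$ and then deduce $S_s=0\iff S_{sd}=0$ from $R_d=R_1$. The case $S_s\ne0$, $S_{sd}=0$ with large $r_d$ is never excluded.

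The zero-transfer must instead come from the congruence, one prime at a time: $S_{vp}\equiv S_v\pmod p$, and no nonzero $\pm2^k$ is divisible by $p$. This is how the proof of Theorem~\ref{thm:C4lift} handles it.

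Similarly, $R_d\cap\F_{2^s}=R_1$ does not follow from ``compatible coefficients'', which are not among the hypotheses. It can be derived from the hypotheses, but you have to argue it:
\begin{itemize}
\item Orbit-constancy makes $Q_d$ invariant under $\phi(z)=z^{2^s}$.
\item Hence $B_{Q_d}(w,\phi(v)+v)=0$ for all $w\in\F_{2^s}$.
\item So $\F_{2^s}$ is $B_{Q_d}$-orthogonal to $\{\phi(v)+v\}=\ker\Tr_{\F_{2^{sd}}/\F_{2^s}}$, which is a complement of $\F_{2^s}$ because $d$ is odd.
\end{itemize}
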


\begin{proof}
It is enough to consider an extension of odd prime degree $p$. Let
$v$ denote the degree of the smaller field. Every Frobenius orbit in
$\F_{2^{vp}}\setminus\F_{2^v}$ has length $p$, while the elements of
$\F_{2^v}$ are fixed. Since the phase is constant on each orbit, $S_{vp}\equiv S_v\pmod p$.
Write
$S_v=c_v2^{(v+1)/2}$ and
$S_{vp}=c_{vp}2^{(vp+1)/2}$, where
$c_v,c_{vp}\in\{0,\pm1,\pm2\}$. Dividing the above congruence by
$2^{(v+1)/2}$ and applying Euler's criterion gives
$c_{vp}\equiv\left(\frac2p\right)c_v\pmod p$.
When $p\ge5$, both integers
$c_{vp}$ and $(2/p)c_v$ lie between $-2$ and $2$, so the congruence
implies equality. Iterating over the prime factors of $d$ yields
\[
S_{sd}
=
\left(\frac2d\right)
2^{(sd-s)/2}S_s.
\]

For $p=3$, the above congruence alone does not distinguish all
possibilities in $\{0,\pm1,\pm2\}$. In the quadratic case, however,
the assumed stability of the radical dimension and of the condition
$Q|_R=0$ fixes the normalized absolute value of the quadratic sum.
The congruence then determines the sign, giving the same formula
with $(2/3)=-1$.
\end{proof}

\begin{remark}
The compatible-family formulation in Lemma~\ref{lem:lifting} is
needed because some exponents occurring later depend on the ambient
extension degree. For example, an exponent of the form
$3\cdot2^{(v+1)/2}+4$ changes when $v$ changes.
\end{remark}

\section{Solutions to Six Conjectures}
We now consider the six remaining conjectures, namely Conjectures 19,
27, 30, 33, 34, and 37 in \cite{Ding2016}. The following subsections
give the corresponding corrections, partial resolutions, or complete
solutions obtained in this paper.

\subsection{Conjecture 19: The Glynn Family}

The first remaining case is Conjecture~19, arising from the Glynn $o$-polynomial \cite{Glynn1983,Ding2016}. We recall the conjecture below before presenting the
corresponding character-sum analysis and the resulting weight
restrictions.

\begin{conj}(Conjecture 19 \cite{Ding2016})\label{Conjecture 19}
Let $m\geq 3$ be odd, $d_1=3\cdot 2^{(m+1)/2}+4$ and $f(x)=x^{d_1}$ be the Glynn $o$-polynomial. Let
$D_1=\{x^{d_1}+ux:x\in \F_{2^m}\}$, where $u\in \F_{2^m}^*.$
When $m\in \{5,7\}$, $\mathcal{C}_{D_1}$ is a $[2^{m-1},m]$ code with weight enumerator
$$1+(2^{m-2}+2^{(m-3)/2})z^{2^{m-2}-2^{(m-3)/2}}+(2^{m-1}-1)z^{2^{m-2}}+(2^{m-2}-2^{(m-3)/2})z^{2^{m-2}+2^{(m-3)/2}}.$$
When $m\geq 9$, $\mathcal{C}_{D_1}$ is a $[2^{m-1},m]$ code with five nonzero weights.
\end{conj}

For $a\in \F_{2^m}$, define
\begin{equation}\label{0925}
 T_u(a)=\sum_{x\in \F_{2^m}}\chi\!\left(a(x^{3\cdot 2^{(m+1)/2}+4}+ux)\right).
\end{equation}
Put $E=2^{(m+1)/2}+1$.
Hollmann and Xiang \cite{HollmannXiang2001}  determined the weight distribution of the dual
of the binary cyclic code with defining zeros
$\alpha$, $\alpha^E$ and $\alpha^{E^2}$,
where $\alpha$ is a primitive element of $\F_{2^m}$.
Equivalently, the codewords in its trace representation are of the form
\[
\left(
\Tr\!\left(
\alpha_1 z^{E^2}+\alpha_2 z^E+\alpha_3 z
\right)
\right)_{z\in\F_{2^m}^{*}} .
\]
Their result yields, for every
$(\alpha_1,\alpha_2,\alpha_3)\ne(0,0,0)$,
\begin{equation}\label{eq:sdfre926}
\sum_{z\in\F_{2^m}}
\chi\!\left(
\alpha_1z^{E^2}+\alpha_2z^E+\alpha_3z
\right)
\in
\left\{
0,\,
\pm2^{(m+1)/2},\,
\pm2^{(m+3)/2}
\right\}.
\end{equation}

It is easy to check that $x=z^{2^{(m-1)/2}}$ is a permutation of $\F_{2^m}$, and
\begin{equation*}
 (3\cdot 2^{(m+1)/2}+4)2^{(m-1)/2}\equiv E^2\pmod{2^m-1},
\end{equation*}
while $\Tr(auz^{2^{(m-1)/2}})=\Tr((au)^{2^{(m+1)/2}} z)$. Therefore, (\ref{0925}) can be written as
\begin{equation*}
 T_u(a)=\sum_{z\in \F_{2^m}}\chi\!\left(az^{E^2}+(au)^{2^{(m+1)/2}} z\right).
\end{equation*}

\begin{theorem}\label{thm:C1support}
For every odd $m\ge5$ and $u\ne0$, $\C_1(u)$ has parameters $[2^{m-1},m]$, and every nonzero weight belongs to
\begin{equation}\label{eq:fiveweights}
 \left\{2^{m-2},\ 2^{m-2}\pm2^{(m-3)/2},\ 2^{m-2}\pm2^{(m-1)/2}\right\}.
\end{equation}
\end{theorem}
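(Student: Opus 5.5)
The argument runs through the two-to-one structure and the Hollmann--Xiang five-value restriction. Here $\C_1(u)$ is $\C_{D_1}$ with $D_1=\{x^{d_1}+ux:x\in\F_{2^m}\}$. Since the Glynn monomial $f(x)=x^{d_1}$ is an $o$-polynomial, the map $\Phi(x)=x^{d_1}+ux$ is two-to-one onto $D_1$ for every $u\ne0$, as recalled after Lemma~\ref{lem:2to1}. Moreover $\Phi(0)=0$. Lemma~\ref{lem:2to1} then gives $|D_1|=2^{m-1}$, which is the length, and
\[
\wt(\mathbf{c}(a))=2^{m-2}-\tfrac14T_u(a),\qquad a\ne0,
\]
with $T_u(a)$ as in \eqref{0925}.

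Next we bound the values of $T_u(a)$ for $a\ne0$. By the rewriting just before the theorem,
\[
T_u(a)=\sum_{z}\chi\!\left(az^{E^2}+(au)^{2^{(m+1)/2}}z\right).
\]
This is the sum in \eqref{eq:sdfre926} with $(\alpha_1,\alpha_2,\alpha_3)=(a,0,(au)^{2^{(m+1)/2}})$, and this triple is nonzero because $a\ne0$. Hence
\[
T_u(a)\in\{0,\pm2^{(m+1)/2},\pm2^{(m+3)/2}\}.
\]
Dividing by $4$ and substituting into the weight formula gives exactly the five candidate weights in \eqref{eq:fiveweights}.

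It remains to show that the dimension is $m$, i.e.\ that $a\mapsto\mathbf{c}(a)$ is injective. For $a\ne0$ the weight formula requires $T_u(a)=2^m$ for a zero codeword. That is impossible, since $|T_u(a)|\le2^{(m+3)/2}<2^m$ once $m\ge5$; this is where the hypothesis $m\ge5$ enters. Alternatively, every candidate weight is positive because $2^{m-2}>2^{(m-1)/2}$ for $m\ge5$. So the kernel is trivial and $\C_1(u)$ is a $[2^{m-1},m]$ code.

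Two points need care. The first is checking the congruence $(3\cdot2^{(m+1)/2}+4)2^{(m-1)/2}\equiv E^2 \pmod{2^m-1}$, which is a direct computation using $2^m\equiv1$. The second is confirming that the Hollmann--Xiang restriction holds for all nonzero triples, including the degenerate one with $\alpha_2=0$ used here. I would cite their result for the full dual code, whose nonzero codewords are exactly those with nonzero triples. Everything else is routine.
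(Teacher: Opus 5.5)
Your proposal is correct and follows essentially the same route as the paper. Both arguments use the two-to-one property of the Glynn $o$-polynomial translate to get the length $2^{m-1}$ and the weight formula, apply the Hollmann--Xiang five-value restriction to the rewritten sum, and use positivity of the smallest candidate weight for $m\ge5$ to get dimension $m$. The only difference is that you invoke Lemma~\ref{lem:2to1} (after checking $\Phi(0)=0$), whereas the paper rederives the weight formula directly.
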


\begin{proof}
Since
$
f(x)=x^{3\cdot2^{(m+1)/2}+4}
$
is the Glynn $o$-polynomial for odd $m$~\cite{Glynn1983},
the standard characterization of $o$-polynomials implies that
$
f_u(x)=f(x)+ux
$
is two-to-one on $\F_{2^m}$ for every
$u\in\F_{2^m}^{*}$~\cite{Maschietti1998}.
Consequently, $f_u$ is two-to-one from $\F_{2^m}$ onto its image $D_u$, and $|D_u|=2^{m-1}$.

For $a\in\F_{2^m}$, let $\mathbf{c}(a)=\bigl(\Tr(ay)\bigr)_{y\in D_u}$.
Since every element of $D_u$ has exactly two preimages under $f_u$,
we have
\[
\begin{aligned}
T_u(a)=\sum_{x\in\F_{2^m}}
\chi\!\left(a f_u(x)\right)=
2\sum_{y\in D_u}\chi(ay).
\end{aligned}
\]
Therefore,
\[
\begin{aligned}
\wt(\mathbf{c}(a))=
\frac12
\sum_{y\in D_u}
\left(1-\chi(ay)\right)=
\frac12
\left(
2^{m-1}-\sum_{y\in D_u}\chi(ay)
\right)=
2^{m-2}-\frac14 T_u(a).
\end{aligned}
\]

By~\eqref{eq:sdfre926}, for every $a\ne0$,
\[
T_u(a)\in
\left\{
0,\,
\pm2^{(m+1)/2},\,
\pm2^{(m+3)/2}
\right\}.
\]
Consequently,
\[
\wt(\mathbf{c}(a))
\in
\left\{
2^{m-2},
\;
2^{m-2}\pm2^{(m-3)/2},
\;
2^{m-2}\pm2^{(m-1)/2}
\right\}.
\]

 Since $2^{m-2}-2^{(m-1)/2}>0$ for $m\ge5$, no nonzero parameter $a$ maps to the zero codeword, so the dimension is $m$.
This completes the proof.
\end{proof}

\begin{proposition}\label{prop:C1moments}
Let $B$ and $\Delta$ be the quantities introduced in Proposition~\ref{prop:freq}, and let
\begin{equation}\label{eq:dweer}
D(z)=(z+1)^{d_1}+z^{d_1},\qquad
\delta(b)=\#\{z\in\F_{2^m}:D(z)=b\}.
\end{equation}
For $\C_1(u)$, we have
\begin{equation}\label{eq:C1Delta}
\Delta=
\begin{cases}
0,&3\nmid m,\\
2^{(m-3)/2},&3\mid m,
\end{cases}
\end{equation}
and
\begin{align}\label{eq:dweer01}
 \sum_{a\ne0}T_u(a)^3=2^{2m}\,2^{\gcd(m,3)}, \qquad
 \sum_{a\ne0}T_u(a)^4=2^{2m}\sum_{b\in \F_{2^m}}\delta(b)^2, \qquad
 48B=\sum_{b\in \F_{2^m}}\delta(b)(\delta(b)-2).
\end{align}
In particular, $B=0$ if and only if the power function $x\mapsto x^{d_1}$ is APN on $\F_{2^m}$.
\end{proposition}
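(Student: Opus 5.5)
The plan is to reduce each power moment of $T_u(a)$ to a count of solutions of a single equation. The parameter $u$ should then disappear by a scaling argument, and the remaining counts become statements about the derivative $D(z)$ of the Glynn power map. Write $f(x)=x^{d_1}$. By orthogonality,
\[
\sum_{a\in\F_{2^m}}T_u(a)^k=2^m N_k(u),
\]
where
\[
N_k(u)=\#\Bigl\{(x_1,\ldots,x_k):\ \sum_i f(x_i)=u\sum_i x_i\Bigr\}.
\]

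The key observation is the following. We have $d_1-1=3(2^{(m+1)/2}+1)$, and since $m$ is odd, $\gcd(3,2^m-1)=\gcd(2^{(m+1)/2}+1,2^m-1)=1$. Hence $\lambda\mapsto\lambda^{d_1-1}$ is a bijection of $\F_{2^m}^*$.

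Put $s=\sum x_i$ and $F=\sum f(x_i)$. Scaling a tuple by $\lambda$ sends $(F,s)$ to $(\lambda^{d_1}F,\lambda s)$. Tuples with $s\neq0$ and $F=0$ never solve $F=us$. Each orbit $\{\lambda\mathbf{x}\}$ with $s\neq0$ and $F\neq0$ has size $2^m-1$ and contains exactly one solution, since $\lambda^{d_1-1}=us/F$ has a unique solution $\lambda$. Tuples with $s=0$ are solutions exactly when $F=0$. Therefore
\[
N_k(u)=M_k+\frac{2^{km}-\#\{s=0\}-\#\{F=0\}+M_k}{2^m-1},
\qquad M_k=\#\{s=0,\ F=0\},
\]
which is independent of $u$. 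Here $\#\{s=0\}=2^{(k-1)m}$. Since the Glynn $o$-polynomial is a permutation, $\sum_x\chi(af(x))=0$ for $a\ne0$, so $\#\{F=0\}=2^{(k-1)m}$. Everything therefore reduces to computing $M_3$ and $M_4$, followed by subtracting the term $T_u(0)^k=2^{km}$.

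For $k=4$, I will pair the tuple as $(x,x+\alpha,z,z+\alpha)$. The choice $\alpha=0$ contributes $2^{2m}$ tuples. For $\alpha\ne0$, homogeneity gives $f(x)+f(x+\alpha)=\alpha^{d_1}D(x/\alpha)$, so the condition becomes $D(x/\alpha)=D(z/\alpha)$ and contributes $\sum_b\delta(b)^2$ for each $\alpha$. Hence
\[
M_4=2^{2m}+(2^m-1)\sum_b\delta(b)^2.
\]
Substituting into the formula above should give $\sum_{a\ne0}T_u(a)^4=2^{2m}\sum_b\delta(b)^2$; I will check this arithmetic carefully.

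For $k=3$, $M_3$ counts pairs with $f(x)+f(y)+f(x+y)=0$. Setting $y=xt$, this becomes $2^m-1$ times $\#\{t:D(t)=1\}$, plus the $2^m$ tuples with $x=0$. This yields $\sum_{a\ne0}T_u(a)^3=2^{2m}\delta(1)$.

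The main obstacle is to show $\delta(1)=2^{\gcd(m,3)}$. I would apply the permutation $z\mapsto z^{2^{(m-1)/2}}$ to rewrite $D(t)=1$ as $(t+1)^{E^2}+t^{E^2}=1$, and then count roots through the structure of the Hollmann--Xiang/Welch--Niho setting. The approach is to expand $E^2=2^{m+2}+2^{(m+3)/2}+\ldots$ reduced modulo $2^m-1$, and to reduce the equation to a linearized polynomial whose kernel is $\F_{2^{\gcd(m,3)}}$. Alternatively, the value could be read off from the known third moment in the Hollmann--Xiang weight distribution.

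Once this is done, Proposition~\ref{prop:freq} gives $\Delta=(2^{2m}2^{\gcd(m,3)}-2^{2m+1})/(6\cdot2^{3(m+1)/2})$. This equals $0$ when $3\nmid m$, and equals $2^{(m-3)/2}$ when $3\mid m$. Next, $\sum_b\delta(b)=2^m$ gives
\[
48B=\frac{2^{2m}\sum_b\delta(b)^2-2^{3m+1}}{2^{2m}}=\sum_b\delta(b)\bigl(\delta(b)-2\bigr).
\]
Finally, every $\delta(b)$ is even, since $z$ and $z+1$ have the same image. Hence each term $\delta(b)(\delta(b)-2)$ is nonnegative, and it vanishes exactly when $\delta(b)\in\{0,2\}$. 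Because $d_1$ is a power map, all derivatives are equivalent to $D$ by scaling, so this condition is precisely the APN property, giving $B=0$ if and only if $x\mapsto x^{d_1}$ is APN.
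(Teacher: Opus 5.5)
Your reduction of the moments is correct, but it is organized differently from the paper's. The paper substitutes $x=\lambda y$ with $a\lambda^{d_1}=1$, which identifies the multiset $\{T_u(a):a\ne0\}$ with the Weil spectrum $\{W_{d_1}(c):c\ne0\}$ of $x^{d_1}$. It then uses orthogonality in the linear coefficient $c$, which directly imposes $\sum x_i=0$. You instead use orthogonality in $a$ and handle $F=us$ by counting scaling orbits. Your bookkeeping checks out: $N_3=2^{2m}+2^m\delta(1)$ and $N_4=2^{3m}+2^m\sum_b\delta(b)^2$, which give exactly the two moment identities. Your $\delta(1)$ is the paper's root count $R$ of $P(z)=D(z)+1$. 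The paper's route yields the full identification with the Walsh spectrum of $x^{d_1}$, which is reused in Theorem~\ref{thm:C1families}. Yours gives $u$-independence of every moment, and since only five values are admissible, that also fixes the multiset.

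The step you leave open, $\delta(1)=2^{\gcd(m,3)}$, carries all the content of \eqref{eq:C1Delta}, and as sketched it is not yet a proof. First, $E^2=2^{m+1}+2^{(m+3)/2}+1\equiv2^{(m+3)/2}+3\pmod{2^m-1}$, not $2^{m+2}+\cdots$. Second, the equation $(s+1)^{E^2}+s^{E^2}=1$ is not itself linearized, so a factor has to be split off. With $Q=2^{(m+3)/2}$, expanding $(s+1)^{Q+3}=(s^Q+1)(s^3+s^2+s+1)$ gives
\[
(s+1)^{E^2}+s^{E^2}+1=(s^{Q}+s)(s^2+s+1).
\]
The factor $s^2+s+1$ has no root in $\F_{2^m}$ because $m$ is odd. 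The roots of $s^Q+s$ form $\F_{2^{\gcd((m+3)/2,\,m)}}=\F_{2^{\gcd(m,3)}}$, since a common divisor of $(m+3)/2$ and $m$ divides $3$, and $3\mid m$ forces $3\mid(m+3)/2$. Hence $\delta(1)=2^{\gcd(m,3)}$. Under $z=s^{2^{(m-1)/2}}$ this is exactly the paper's factorization $P(z)=(z^{2^{(m+3)/2}}+z^{2^{(m+1)/2}}+1)(z^4+z^{2^{(m+1)/2}})$.

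Your fallback of reading $\delta(1)$ off the Hollmann--Xiang distribution does not work directly. That distribution runs over all coefficient triples, not over the one-parameter slice $\{T_u(a)\}$. Once this step is filled in, the rest of your argument (the value of $\Delta$, the formula for $48B$, the evenness of $\delta(b)$, and the APN equivalence) agrees with the paper.
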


\begin{proof}
Recall that
$
d_1=3\cdot 2^{(m+1)/2}+4.
$
Since $x^{d_1}$ is the Glynn $o$-monomial over
$\F_{2^m}$, it is a permutation of $\F_{2^m}$, and hence
$\gcd(d_1,2^m-1)=1$.
Since $m$ is odd, it is obvious that $\gcd(3, 2^m-1)=1$ and
$\gcd(2^m-1,2^{(m+1)/2}+1)\mid\gcd\left(2^m-1,2^{m+1}-1\right)=1$.
Then
\begin{equation}\label{eq:dq}
\gcd(d_1-1,2^m-1)=1.
\end{equation}

Hence, for each $a\in\F_{2^m}^{*}$, there is a unique
$\lambda\in\F_{2^m}^{*}$ such that
$a\lambda^{d_1}=1.$
With the substitution $x=\lambda y$, we obtain
\[
\begin{aligned}
T_u(a)=\sum_{x\in\F_{2^m}}
  \chi\!\left(a(x^{d_1}+ux)\right)=\sum_{y\in\F_{2^m}}
  \chi\!\left(y^{d_1}+au\lambda y\right)=\sum_{y\in\F_{2^m}}
  \chi\!\left(y^{d_1}
  +u\lambda^{1-d_1}y\right).
\end{aligned}
\]
where $T_u(a)$ is given in (\ref{0925}).
From (\ref{eq:dq}), for any $u \in \mathbb{F}_{2^m}^*$, we know that the map
$\lambda\longmapsto u\lambda^{1-d_1}$
is a permutation of $\F_{2^m}^{*}$.
Therefore, as $a$ runs through $\F_{2^m}^{*}$, the multiset of values $\{T_u(a):a\in\F_{2^m}^*\}$ coincides with the multiset of standard Weil sums
\begin{equation}\label{eq:Wd1}
W_{d_1}(c)=\sum_{x\in\F_{2^m}}\chi(x^{d_1}+cx),
\qquad c\in\F_{2^m}^*.
\end{equation}
Moreover, $W_{d_1}(0)=0$, because $x\mapsto x^{d_1}$ is a permutation of $\F_{2^m}$. Hence the moments over $c\ne0$ may be evaluated by summing over all $c\in\F_{2^m}$.

We first consider the third moment. Expanding \eqref{eq:Wd1} and using the orthogonality relation
\[
\sum_{c\in\F_{2^m}}\chi(cy)=
\begin{cases}
2^m,&y=0,\\
0,&y\ne0,
\end{cases}
\]
gives
\begin{equation}\label{eq:C1M3expand}
\begin{split}
\sum_{c\in\F_{2^m}} W_{d_1}(c)^3
&=
\sum_{x_1,x_2,x_3\in\F_{2^m}}
\chi\!\left(x_1^{d_1}+x_2^{d_1}+x_3^{d_1}\right)
\sum_{c\in\F_{2^m}}
\chi\!\left(c(x_1+x_2+x_3)\right)\\
&=2^m\!\sum_{x_1+x_2+x_3=0}
\chi(x_1^{d_1}+x_2^{d_1}+x_3^{d_1}) \\
&=2^m\!\sum_{x,y\in\F_{2^m}}
\chi\bigl(x^{d_1}+y^{d_1}+(x+y)^{d_1}\bigr).
\end{split}
\end{equation}

The terms with $y=0$ contribute $2^m$, since
$
x^{d_1}+0^{d_1}+(x+0)^{d_1}=0
$
for every $x\in\F_{2^m}$. For $y\ne0$, write
$z=\frac{x}{y}$ and
$x=zy$.
Then
\[
x^{d_1}+y^{d_1}+(x+y)^{d_1}
=
y^{d_1}\Bigl(z^{d_1}+1+(z+1)^{d_1}\Bigr).
\]
Define
\[
P(z):=(z+1)^{d_1}+z^{d_1}+1.
\]
Hence,
\[
\sum_{x\in\F_{2^m}}\sum_{y\in\F_{2^m}^{*}}
\chi\bigl(x^{d_1}+y^{d_1}+(x+y)^{d_1}\bigr)=
\sum_{z\in\F_{2^m}}
\sum_{y\in\F_{2^m}^{*}}
\chi\!\left(y^{d_1}P(z)\right).
\]
Since $\gcd(d_1,2^m-1)=1$, the map
$y\mapsto y^{d_1}$ permutes $\F_{2^m}^{*}$, and therefore
\[
\sum_{y\in\F_{2^m}^{*}}
\chi\!\left(y^{d_1}P(z)\right)
=
\begin{cases}
2^m-1, & P(z)=0,\\[2mm]
-1, & P(z)\ne0.
\end{cases}
\]

Let $R$ denote the number of roots of $P(z)$ in $\F_{2^m}$. The inner double sum in \eqref{eq:C1M3expand} equals
$$
2^m+R(2^m-1)+(2^m-R)(-1)=2^mR.
$$
Consequently,
\begin{equation}\label{eq:M3rootcount}
\sum_{c\in\F_{2^m}}W_{d_1}(c)^3
=
2^{2m}R.
\end{equation}

A direct calculation in characteristic two gives
\begin{equation*}
P(z)=
\left(z^{2^{(m+3)/2}}+z^{2^{(m+1)/2}}+1\right)
\left(z^4+z^{2^{(m+1)/2}}\right).
\end{equation*}
For the first factor, put $y=z^{2^{(m+1)/2}}$. Since this map is a permutation of $\F_{2^m}$, a root would give
\[
y^2+y+1=0.
\]
Such a root is a nontrivial cube root of unity, which cannot lie in $\F_{2^m}$ because $m$ is odd and hence $3\nmid(2^m-1)$. Thus the first factor has no root in $\F_{2^m}$.

For the second factor, $z=0$ is one root, while a nonzero root satisfies
$
z^{2^{(m+1)/2}-4}=1.
$
Hence, the number of its nonzero roots is
\begin{align*}
\gcd\left(2^{(m+1)/2}-4,2^m-1\right)=\gcd\left(2^{(m-3)/2}-1,2^m-1\right)=2^{\gcd((m-3)/2,m)}-1=2^{\gcd(m,3)}-1.
\end{align*}
Therefore,
$R=2^{\gcd(m,3)}.$

Combining this with \eqref{eq:M3rootcount}, and recalling that $W_{d_1}(0)=0$, yields
\[
\sum_{a\ne0}T_u(a)^3
=2^{2m}2^{\gcd(m,3)}.
\]
 Proposition~\ref{prop:freq} now gives
\[
\Delta=
\frac{2^{2m}2^{\gcd(m,3)}-2^{2m+1}}
{6\left(2^{(m+1)/2}\right)^3}.
\]
Since $\gcd(m,3)=1$ when $3\nmid m$ and $\gcd(m,3)=3$ when $3\mid m$, this simplifies exactly to \eqref{eq:C1Delta}.

We next compute the fourth moment. Again using additive-character orthogonality,
\begin{align}
\sum_{c\in\F_{2^m}}W_{d_1}(c)^4
&=2^m\!\sum_{x_1+x_2+x_3+x_4=0}
\chi(x_1^{d_1}+x_2^{d_1}+x_3^{d_1}+x_4^{d_1}).
\label{eq:C1M4expand}
\end{align}
Write
\[
x_1+x_2=x_3+x_4=v.
\]
When $v=0$, we have $x_2=x_1$ and $x_4=x_3$, so the corresponding contribution to the inner sum in \eqref{eq:C1M4expand} is $2^{2m}$.

Now let $v\ne0$ and write
\[
x_1=vz,\qquad x_2=v(z+1),\qquad
x_3=vw,\qquad x_4=v(w+1).
\]
The phase then becomes
\[
v^{d_1}\bigl(D(z)+D(w)\bigr),
\]
where $D(\cdot)$ is given in (\ref{eq:dweer}).
Since $v\mapsto v^{d_1}$ permutes $\F_{2^m}^*$, summing over $v\ne0$ gives $2^m-1$ when $D(z)=D(w)$ and $-1$ otherwise. The number of ordered pairs $(z,w)$ satisfying $D(z)=D(w)$ is
$\sum_{b\in\F_{2^m}}\delta(b)^2.$

Thus, the contribution of all $v\ne0$ to the inner sum in \eqref{eq:C1M4expand} is
$
2^m\sum_b\delta(b)^2-2^{2m}.
$
After adding the $v=0$ contribution, the inner sum is
$
2^m\sum_b\delta(b)^2.
$
Therefore, using again $W_{d_1}(0)=0$,
\begin{equation}\label{eq:C1M4}
\sum_{a\ne0}T_u(a)^4
=2^{2m}\sum_{b\in\F_{2^m}}\delta(b)^2.
\end{equation}

Finally, Proposition~\ref{prop:freq} gives
\[
B=\frac{\sum_{a\ne0}T_u(a)^4-2^{3m+1}}
{48\cdot2^{2m}}.
\]
Substituting \eqref{eq:C1M4} and using
$
\sum_{b\in\F_{2^m}}\delta(b)=2^m
$
yields
\[
48B
=\sum_b\delta(b)^2-2^{m+1}
=\sum_b\delta(b)(\delta(b)-2).
\]

For every $b\in\F_{2^m}$, if $z$ satisfies $D(z)=b$, then
$
D(z+1)=D(z)=b.
$
Since $z\ne z+1$, the solutions of $D(z)=b$ occur in disjoint
pairs $\{z,z+1\}$. Hence $\delta(b)$ is even. Therefore,
\[
\delta(b)(\delta(b)-2)\ge0,
\]
with equality if and only if $\delta(b)\in\{0,2\}$. It follows from \eqref{eq:dweer01} that $B=0$ if and only if every nonempty fiber of $D$ has size two. For the power function $F(x)=x^{d_1}$, every derivative with increment $r\ne0$ is obtained from the derivative with increment $1$ by the scaling
\[
F(x+r)+F(x)
=r^{d_1}\left(F(x/r+1)+F(x/r)\right).
\]
Therefore every nonzero derivative of $F$ is at most two-to-one if and only if $D$ is at most two-to-one. This is precisely the APN property of $x^{d_1}$ on $\F_{2^m}$.
\end{proof}

The next theorem proves that the inner values never disappear.

\begin{proposition}\label{thm:inner}
For every odd $m\ge5$ and every $u\in\F_{2^m}^{*}$, at least one of the values $2^{(m+1)/2}$ and $-2^{(m+1)/2}$ occurs in the value distribution of $T_u(a)$. If $3\mid m$, then both values occur.
\end{proposition}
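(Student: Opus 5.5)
Both claims follow from the frequency formulas of Proposition~\ref{prop:freq} together with the moment information of Proposition~\ref{prop:C1moments}. By Theorem~\ref{thm:C1support}, every $T_u(a)$ with $a\neq 0$ lies in $\{0,\pm2^{(m+1)/2},\pm2^{(m+3)/2}\}$, so Proposition~\ref{prop:freq} applies. It gives
\[
N_1-N_{-1}=2^{(m-1)/2}-2\Delta,\qquad N_1+N_{-1}=2^{m-1}-4B .
\]
Here $\Delta$ is known exactly from \eqref{eq:C1Delta}, while $B$ is only known to be nonnegative.

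For the first claim, we use the difference identity. If $3\nmid m$, then $\Delta=0$, so $N_1-N_{-1}=2^{(m-1)/2}\neq0$. If $3\mid m$, then $\Delta=2^{(m-3)/2}$, so $N_1-N_{-1}=2^{(m-1)/2}-2^{(m-1)/2}=0$. The latter difference vanishes and gives nothing by itself. So for $3\nmid m$ we get $N_1>N_{-1}\ge0$, and hence $2^{(m+1)/2}$ occurs. For $3\mid m$ the difference identity only gives $N_1=N_{-1}$, so both values occur as soon as $N_1+N_{-1}>0$, that is, as soon as $4B<2^{m-1}$. This also yields the first claim in that case. So everything reduces to the bound $B<2^{m-3}$ when $3\mid m$.

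To bound $B$, we use \eqref{eq:dweer01}: $48B=\sum_b\delta(b)(\delta(b)-2)$. An upper bound on $B$ follows from an upper bound on the fibre sizes of $D(z)=(z+1)^{d_1}+z^{d_1}$, i.e.\ on the differential uniformity of $x^{d_1}$. A crude estimate already suffices: $\sum_b\delta(b)(\delta(b)-2)\le(\max_b\delta(b)-2)\sum_b\delta(b)=(\max\delta-2)2^m$. Thus $48B<48\cdot2^{m-3}=6\cdot2^m$ holds provided $\max_b\delta(b)<8$. More efficiently, one can bound $B$ through the fourth moment. The plan is to bound the number of solutions of $D(z)=b$ by rewriting it with $y=z^{2^{(m+1)/2}}$. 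Using the factorized shape of $P(z)$ found in the proof of Proposition~\ref{prop:C1moments}, the equation becomes a low-degree polynomial system in $(z,y)$ with $y^{2^{(m+1)/2}}=z^2$. Then $\delta(b)$ is bounded by a constant independent of $m$, via a resultant or Bezout argument on the curve pair $y=z^{2^{(m+1)/2}}$ after raising to a suitable Frobenius power.

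The main obstacle is this uniform fibre bound. If a Bezout-type bound gives $\max\delta(b)\le 2^{c}$ with $c$ small, the estimate $B\le(2^c-2)2^m/48$ beats $2^{m-3}$ only when $2^c\le 7$, which may fail. In that case the fallback is to avoid pointwise bounds and use the Cauchy--Schwarz-free inequality $B\le (\sum_{a\ne0}T_u(a)^2)/(16\cdot 2^{m+1})=2^{m-5}$. This inequality comes directly from $T(a)^2=2^{m+3}$ on the $\pm2$ values and the second moment $2^{2m}$. It gives $4B\le 2^{m-3}<2^{m-1}$, hence $N_1+N_{-1}>0$ unconditionally, which settles the case $3\mid m$ directly. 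I would present this second-moment bound as the actual argument, since it needs only Lemma~\ref{lem:2to1}.
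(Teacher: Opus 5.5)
Your argument for $3\nmid m$ is correct, and it is more elementary than the paper's. The paper proves the ``at least one inner value'' claim for all odd $m$ by a Stickelberger argument, whereas you get $N_1-N_{-1}=2^{(m-1)/2}>0$ directly from $\Delta=0$. For $3\mid m$, your step $N_1=N_{-1}$ is exactly the paper's second step.

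The gap is the remaining step $N_1+N_{-1}>0$ when $3\mid m$, on which both claims in that case depend. The ``second-moment bound'' you present as the actual argument has an arithmetic slip. On the outer values $T_u(a)^2=(2\cdot 2^{(m+1)/2})^2=2^{m+3}=4\cdot 2^{m+1}$, not $16\cdot 2^{m+1}$. Corrected, the inequality reads $2^{m+3}B\le 2^{2m}$, i.e.\ $4B\le 2^{m-1}$, which is just $N_1+N_{-1}\ge 0$. You have dropped the nonnegative term $2^{m+1}(N_1+N_{-1})$ from the same second-moment identity that produced $N_1+N_{-1}=2^{m-1}-4B$. The argument is therefore circular and cannot give strict positivity.

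Indeed, consider the configuration $N_{\pm1}=0$, $N_{\pm2}=(2^{m-3}\pm 2^{(m-3)/2})/2$, $N_0=2^{m-1}-1+3\cdot 2^{m-3}$. It satisfies the moment identities of orders $0$ through $3$, including $\sum_{a\ne0}T_u(a)^3=2^{2m+3}$. It also satisfies the fourth-moment formula of Proposition~\ref{prop:C1moments} if every nonempty fibre of $D(z)=(z+1)^{d_1}+z^{d_1}$ has size $8$. So moment bookkeeping alone cannot exclude it. Your pointwise alternative would need $x^{d_1}$ to have differential uniformity at most $6$, which you do not prove. An elimination or Bezout argument only gives some constant independent of $m$, with no reason for it to be below $8$, as you note yourself.

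The missing ingredient is arithmetic: you must show that not every $T_u(a)$ is divisible by $2^{(m+3)/2}$. The paper does this with Gauss sums. Put $e=2^{(m+3)/2}+3\equiv d_1 2^{(m-1)/2}\pmod{2^m-1}$. Then the multiset $\{T_u(a):a\ne0\}$ equals $\{W_e(b):b\ne0\}$, where $W_e(b)=\sum_x\chi(x^e+bx)$, and one has $\sum_{b\ne0}W_e(b)\omega(b)^{-k}=G(k)G(-ke^{-1})$. Take $k\equiv -ej$ with $j=2^m+4-2^{(m+5)/2}$. Then $w_2(k)=2$ and $w_2(j)=(m-3)/2$, so Lemma~\ref{lem:stickelberger} gives the right side $\mathfrak P$-adic valuation exactly $(m+1)/2$. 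This is impossible if all $W_e(b)$ were divisible by $2^{(m+3)/2}$. Hence $N_1+N_{-1}>0$ for every odd $m$, and combined with your $N_1=N_{-1}$ this gives both inner values when $3\mid m$.
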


\begin{proof}
Let $d_1=3\cdot2^{(m+1)/2}+4$ and put $D=2^{(m+3)/2}+3$. Since $d_1$ is a permutation exponent, $\gcd(D,2^m-1)=1$. Moreover,
$D\equiv d_1 2^{(m-1)/2}\pmod{2^m-1}$, so the invariance of the absolute trace under Frobenius powers shows that, for each fixed $u\ne0$, the multiset $\{T_u(a):a\in\F_{2^m}^{*}\}$ coincides with $\{W_D(b):b\in\F_{2^m}^{*}\}$, where
$W_D(b)=\sum_{x\in\F_{2^m}}\chi(x^D+bx)$.

Let $\omega$ be a multiplicative character of $\F_{2^m}^{*}$ of order $2^m-1$, and let $G(k)=\sum_{x\in\F_{2^m}^{*}}\chi(x)\omega(x)^{-k}$. The multiplicative character transform gives
\[
\sum_{b\in\F_{2^m}^{*}}W_D(b)\omega(b)^{-k}=G(k)G(-kD^{-1}),
\]
where $D^{-1}$ denotes the inverse of $D$ modulo $2^m-1$.
Choose $k\equiv-Dj\pmod{2^m-1}$ with $j=2^m+4-2^{(m+5)/2}$. A direct calculation gives
$w_2(j)=(m-3)/2$ and $w_2(-Dj)=2$. Hence, by Lemma~\ref{lem:stickelberger},
\[
v_{\mathfrak P}\!\left(G(k)G(-kD^{-1})\right)=\frac{m+1}{2}.
\]
If every $W_D(b)$ were divisible by $2^{(m+3)/2}$, then the left-hand side of the transform identity would have $\mathfrak P$-adic valuation at least $(m+3)/2$, a contradiction. Thus some $W_D(b)$ is not divisible by $2^{(m+3)/2}$. Since the only possible values are $0$, $\pm2^{(m+1)/2}$, and $\pm2^{(m+3)/2}$, at least one of the two inner values $\pm2^{(m+1)/2}$ occurs.

Now assume $3\mid m$. By Proposition~\ref{prop:C1moments}, $\Delta=2^{(m-3)/2}$. The first-moment relation in Proposition~\ref{prop:freq} gives
$N_1-N_{-1}=2^{(m-1)/2}-2\Delta=0$. Hence $N_1=N_{-1}$. Since at least one of them is positive, both are positive, and therefore both inner values occur.
\end{proof}

\begin{proposition}\label{pro:12}
Let \(m=3^k r\) with \(3\nmid r\). Set
\[
d_m=3\cdot 2^{(m+1)/2}+4
\qquad
S_m=\sum_{x\in\mathbb F_{2^m}}
(-1)^{\operatorname{Tr}_m(x^{d_m}+x)}.
\]
Then
\begin{equation}\label{eq:explicitC1}
\begin{split}
S_m=
\begin{cases}
\displaystyle
\left(\frac{2}{m}\right)2^{(m+1)/2},
& k=0,\\[6pt]
\displaystyle
(-1)^{k-1}\left(\frac{2}{r}\right)2^{(m+3)/2},
& k\ge 1.
\end{cases}
\end{split}
\end{equation}
\end{proposition}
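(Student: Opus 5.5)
The plan is to recognize $S_m$ as the value $T_1(1)$ of \eqref{0925} and to apply the odd-extension lifting of Lemma~\ref{lem:lifting} to the compatible family $P_v(x)=x^{d_v}+x$ on $\F_{2^v}$, for odd $v\mid m$.

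\textbf{Compatibility.} If $v\mid m$ with $m=vp$ and $p$ odd, then
\[
\tfrac{m+1}{2}-\tfrac{v+1}{2}=\tfrac{v(p-1)}{2}\equiv 0\pmod v .
\]
Hence $2^{(m+1)/2}\equiv 2^{(v+1)/2}\pmod{2^v-1}$, so $d_m\equiv d_v\pmod{2^v-1}$. Thus $P_m$ restricts to $P_v$ on $\F_{2^v}$.

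\textbf{Orbit constancy.} We have $\Tr_m(P_m(x^2))=\Tr_m(P_m(x)^2)=\Tr_m(P_m(x))$. So the phase is constant on Frobenius orbits.

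\textbf{Range hypothesis.} For every odd level $v\ge5$, \eqref{eq:sdfre926} (Hollmann--Xiang) gives $S_v/2^{(v+1)/2}\in\{0,\pm1,\pm2\}$. The small levels $v=1,3$ are checked directly below.

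\textbf{Case $k=0$.} Take $s=1$. On $\F_2$, both $x=0$ and $x=1$ give exponent $0$, so $S_1=2=2^{(1+1)/2}$. Since $3\nmid m$, Lemma~\ref{lem:lifting} applies with $d=m$ and gives
\[
S_m=\left(\tfrac{2}{m}\right)2^{(m-1)/2}S_1=\left(\tfrac{2}{m}\right)2^{(m+1)/2}.
\]

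\textbf{Case $k\ge1$, base level.} Take $s=3$. Here $d_3=16\equiv2\pmod 7$, so $\Tr_3(x^{16}+x)=\Tr_3(x^2+x)=0$. Hence $S_3=8=2\cdot2^{(3+1)/2}$, i.e.\ normalized value $c_3=2$.

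\textbf{Case $k\ge1$, prime steps $p\ge5$.} Climbing from $\F_{2^3}$ to $\F_{2^m}$ one prime at a time, the steps of degree $p\ge5$ (prime factors of $r$) are handled exactly as in the proof of Lemma~\ref{lem:lifting}. The congruence $c_{vp}\equiv(2/p)c_v\pmod p$ with both sides in $[-2,2]$ forces $c_{vp}=(2/p)c_v$.

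\textbf{Case $k\ge1$, steps of degree $3$ (main obstacle).} The phase $x^{d_m}$ has binary weight three, so it is not quadratic, and the quadratic clause of Lemma~\ref{lem:lifting} is unavailable. The congruence only gives $c_{3v}\equiv -c_v\pmod 3$. Starting from $c_v=\pm2$, this leaves both $c_{3v}=\mp2$ and $c_{3v}=\pm1$.

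To exclude $\pm1$, I will prove the divisibility $2^{(m+3)/2}\mid S_m$ whenever $3\mid m$. My first attempt is the Gauss-sum expansion used in the proof of Proposition~\ref{thm:inner}, with $D=2^{(m+3)/2}+3$. It expresses $W_D(1)$ as
\[
\frac{1}{2^m-1}\sum_k G(k)\,G(-kD^{-1}).
\]
By Lemma~\ref{lem:stickelberger}, it then suffices to show the binary-weight bound
\[
w_2(k)+w_2(-kD^{-1})\ge \tfrac{m+3}{2}\quad\text{for all }k, \text{ when } 3\mid m.
\]
This combinatorial estimate is the real work. I expect it to mirror the analysis behind Proposition~\ref{thm:inner}, where the minimum $(m+1)/2$ is attained only when $3\nmid m$.

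A fallback is to combine $N_1=N_{-1}$ (Proposition~\ref{thm:inner}) with an argument that the particular value $T_1(1)$ avoids the inner values.

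\textbf{Assembling the sign.} Granting the divisibility, $c_{3v}\in\{0,\pm2\}$. The congruence $c_{3v}\equiv -c_v\pmod 3$ with $c_v=\pm2$ then forces $c_{3v}=-c_v$; this holds at every level divisible by $3$. Each $3$-step contributes a factor $-1=(2/3)$ and each step through $r$ contributes $(2/r)$. Starting from $c_3=2$ and performing $k-1$ further $3$-steps gives $c_m=(-1)^{k-1}(2/r)\cdot 2$. Hence
\[
S_m=(-1)^{k-1}\left(\tfrac{2}{r}\right)2^{(m+3)/2},
\]
as claimed in \eqref{eq:explicitC1}.
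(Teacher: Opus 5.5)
Your case $k=0$, the compatibility $d_m\equiv d_v\pmod{2^v-1}$, the base value $S_3=8$ and the prime steps $p\ge5$ all agree with the paper. There is a genuine gap at the degree-$3$ steps, where you put the main obstacle, and your proposed remedy cannot work.

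\textbf{The proposed Stickelberger bound is false.} The bound $w_2(k)+w_2(-kD^{-1})\ge (m+3)/2$ fails for every odd $m\ge5$, including $3\mid m$. Take the pair used in the proof of Proposition~\ref{thm:inner}: $k\equiv -Dj$ with $j=2^m+4-2^{(m+5)/2}$. Then $k\equiv 2^{(m+3)/2}+1$ and $-kD^{-1}\equiv j$, so the weight sum is $2+(m-3)/2=(m+1)/2$. For example, $m=9$ gives $k=65$ and $j=388$. Nothing in that computation uses $3\nmid m$, so you have misread Proposition~\ref{thm:inner}.

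More fundamentally, a termwise valuation bound is blind to the point $b$, since $b$ enters the Gauss-sum expansion only through roots of unity. Such a bound would therefore give $2^{(m+3)/2}\mid W_D(b)$ for every $b\ne0$. That contradicts Proposition~\ref{thm:inner}, which shows that both inner values $\pm2^{(m+1)/2}$ occur when $3\mid m$. The divisibility of $S_m$ is a property of the particular point $b=1$ and could only come from cancellation among terms, which you have no handle on. Your fallback simply restates what has to be proved.

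\textbf{The missing idea.} Use full absolute Frobenius orbits, on which your orbit-constancy observation already applies, instead of relative orbits of length $3$.
\begin{itemize}
\item For $j\ge2$, every $x\in\F_{2^{3^j}}\setminus\F_{2^{3^{j-1}}}$ generates $\F_{2^{3^j}}$, because all proper subfields lie in $\F_{2^{3^{j-1}}}$. So its orbit under $x\mapsto x^2$ has length exactly $3^j$.
\item Together with $d_{3^j}\equiv d_{3^{j-1}}\pmod{2^{3^{j-1}}-1}$, this gives $S_{3^j}\equiv S_{3^{j-1}}\pmod{3^j}$.
\item Write $c_j=S_{3^j}/2^{(3^j+1)/2}$, so $c_1=2$ by your computation of $S_3$. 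Divide the congruence by $2^{(3^{j-1}+1)/2}$ and use $2^{3^{j-1}}\equiv-1\pmod{3^j}$, which follows from $v_3(2^{3^{j-1}}+1)=j$. This yields $c_j\equiv -c_{j-1}\pmod{3^j}$.
\item Since $|c_j+c_{j-1}|\le 4<9\le 3^j$, the congruence forces $c_j=-c_{j-1}$. Hence $S_{3^k}=(-1)^{k-1}2^{(3^k+3)/2}$.
\item Your $p\ge5$ steps through $r$ then give the stated formula.
\end{itemize}
This is exactly the paper's argument; no divisibility statement or Gauss sums are needed.

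The same reasoning works for any step $\F_{2^v}\subset\F_{2^{3v}}$ with $3\mid v$. Every element outside $\F_{2^v}$ then has degree divisible by $3^{v_3(v)+1}\ge9$. So your interleaved climbing order is also fine once this replaces the Gauss-sum detour.
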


\begin{proof}
We first consider the case $3\nmid m$, i.e., $k=0$.  For $m=1$,
$d_1=10$, and $x\in\F_2$, we have $x^{10}=x$.  Hence,
\[
\mathcal S_1
=\sum_{x\in\F_2}(-1)^{\Tr_1(x^{10}+x)}
=2.
\]

For $m=1$, we have $\mathcal S_1=2$.
Applying Lemma \ref{lem:lifting} to the extension
$
\F_2\subseteq\F_{2^m},
$
whose degree is $m$, gives
\[
\mathcal S_m
=
\left(\frac{2}{m}\right)
2^{(m-1)/2}\mathcal S_1
=
\left(\frac{2}{m}\right)
2^{(m+1)/2},
\]
which proves the first line of \eqref{eq:explicitC1}.

Now assume \(k\geq 1\). We first determine \(S_{3^k}\).
For \(k=1\), we have \(d_3=16\). Since \(x^{16}=x^2\) for
\(x\in\mathbb F_{2^3}\) and
\[
\Tr_3(x^2+x)=\Tr_3(x)^2+\Tr_3(x)=0,
\]
we obtain $S_3=8.$

For \(j\geq2\), we compare \(S_{3^j}\) with
\(S_{3^{j-1}}\). For every
$
x\in
\mathbb F_{2^{3^j}}\setminus
\mathbb F_{2^{3^{j-1}}},
$
the \(3^j\) elements
$
x,\ x^2,\ x^{2^2},\ldots,x^{2^{3^j-1}}
$
are distinct. Indeed, if
$
x^{2^\ell}=x
\,\, (0<\ell<3^j),
$
then \(x\) would belong to a proper subfield of
\(\mathbb F_{2^{3^j}}\). Since every proper divisor of \(3^j\)
divides \(3^{j-1}\), every proper subfield of
\(\mathbb F_{2^{3^j}}\) is contained in
\(\mathbb F_{2^{3^{j-1}}}\), a contradiction.

Moreover, these \(3^j\) elements give the same summand in
\(S_{3^j}\), because
\[
\Tr_{3^j}\!\left((x^{d_{3^j}}+x)^2\right)
=
\Tr_{3^j}(x^{d_{3^j}}+x).
\]
Therefore, the total contribution from
$
\mathbb F_{2^{3^j}}\setminus
\mathbb F_{2^{3^{j-1}}}
$
is divisible by \(3^j\).

It remains to consider
\(x\in\mathbb F_{2^{3^{j-1}}}\). Since
\[
\frac{3^j+1}{2}
=
3^{j-1}+\frac{3^{j-1}+1}{2},
\]
we have
$
d_{3^j}
\equiv d_{3^{j-1}}
\pmod{2^{3^{j-1}}-1}.
$
Hence,
$
x^{d_{3^j}}=x^{d_{3^{j-1}}}$
for all
$x\in\mathbb F_{2^{3^{j-1}}}.
$
Also, since
\[
[\mathbb F_{2^{3^j}}:
 \mathbb F_{2^{3^{j-1}}}]=3,
\]
we have
$
\Tr_{3^j}(y)=\Tr_{3^{j-1}}(y),
$ for
$y\in\mathbb F_{2^{3^{j-1}}}.
$
Thus, the contribution from the subfield is exactly
\(S_{3^{j-1}}\). Consequently,
$
S_{3^j}\equiv S_{3^{j-1}}\pmod{3^j}.
$

Set
$
c_j=\frac{S_{3^j}}{2^{(3^j+1)/2}}.
$
By Theorem \ref{thm:C1support}, we have
$
c_j\in\{0,\pm1,\pm2\},
$
and the base case gives \(c_1=2\). From the congruence above, it follows that
$
c_j2^{3^{j-1}}
\equiv c_{j-1}\pmod{3^j}.
$
Moreover,
$
2^{3^{j-1}}\equiv-1\pmod{3^j}.
$
Indeed,
\[
v_3\!\left(2^{3^{j-1}}+1\right)
=
v_3(2+1)+v_3(3^{j-1})
=j.
\]
Therefore,
$
-c_j\equiv c_{j-1}\pmod{3^j}.
$

Since \(c_j,c_{j-1}\in\{0,\pm1,\pm2\}\) and \(j\geq2\), we have
$
|-c_j-c_{j-1}|\leq4<3^j.
$
Hence,
$
c_j=-c_{j-1}.
$
Starting from \(c_1=2\), induction gives
$
c_k=2(-1)^{k-1},
$
and therefore,
$
S_{3^k}
=
(-1)^{k-1}2^{(3^k+3)/2}.
$

Finally, since \(m=3^k r\) with \(3\nmid r\), Lemma~\ref{lem:lifting} applied
from \(\mathbb F_{2^{3^k}}\) to \(\mathbb F_{2^m}\) gives
\[
S_m
=
\left(\frac{2}{r}\right)
2^{(m-3^k)/2}S_{3^k}.
\]
Substituting the preceding formula for $S_{3^k}$, we obtain
\[
S_m
=
(-1)^{k-1}
\left(\frac{2}{r}\right)
2^{(m+3)/2},
\]
which proves the second line of \eqref{eq:explicitC1}.
\end{proof}

\begin{theorem}\label{thm:C1families}
Let $m=3^k r\ge9$ with $3\nmid r$. Conjecture \ref{Conjecture 19} holds for every $u\ne0$ if one of the following is satisfied:
\begin{enumerate}
\item $k$ is a positive even integer;
\item $k$ is a positive odd integer and $r$ has a prime divisor $p\equiv3$ or $5\pmod8$;
\item $k=0$ and $11\mid m$ or $13\mid m$.
\end{enumerate}
In each case the five nonzero weights are exactly those in \eqref{eq:fiveweights}.
\end{theorem}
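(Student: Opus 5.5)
The plan is to reduce the statement to exhibiting witnesses for all five values
\[
T_u(a)\in\{0,\pm2^{(m+1)/2},\pm2^{(m+3)/2}\},\qquad a\ne0.
\]
By Theorem~\ref{thm:C1support}, the weights are confined to \eqref{eq:fiveweights}, and each weight corresponds to exactly one of these values. In the proof of Proposition~\ref{prop:C1moments}, the multiset $\{T_u(a):a\ne0\}$ was shown to coincide with $\{W_{d_1}(c):c\ne0\}$, so it does not depend on $u$. It therefore suffices to show $N_{\pm1}>0$ and $N_{\pm2}>0$ for the Weil sums $W_{d_1}(c)=\sum_x\chi(x^{d_1}+cx)$. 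I will use two sources of information: the moment identities of Propositions~\ref{prop:freq} and~\ref{prop:C1moments}, and explicit values of $W_{d_1}(c)$ for $c$ in a subfield, transported to $\F_{2^m}$ by Lemma~\ref{lem:lifting}. The compatibility hypotheses of the lemma are available because $d_m\equiv d_s\pmod{2^s-1}$ whenever $s\mid m$ with $m/s$ odd, exactly as in the proof of Proposition~\ref{pro:12}.

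\emph{Case $k\ge1$.} Here $\Delta=2^{(m-3)/2}>0$, so $N_2=N_{-2}+\Delta>0$ automatically. Proposition~\ref{thm:inner} gives $N_1=N_{-1}>0$, since $3\mid m$. So only $N_{-2}>0$ remains.

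In case~1 ($k$ even), Proposition~\ref{pro:12} with $c=1$ gives
\[
W_{d_1}(1)=S_m=-\left(\tfrac2r\right)2^{(m+3)/2}.
\]
I have not found a way to make the sign come out negative uniformly. If $(2/r)=-1$ the witness is already in hand. When $(2/r)=+1$, I will look for a second $c$ in an intermediate field $\F_{2^s}$, $3^k\mid s\mid m$, whose base value has the opposite sign, and lift it. The lifted value picks up the common factor $(2/(m/s))$, so it has sign opposite to $S_m$.

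In case~2 ($k$ odd, and some prime $p\mid r$ with $(2/p)=-1$), I will use the field $\F_{2^{m/p}}$. I will find an element $c$ there with
\[
W^{(m/p)}_{d}(c)=\pm2^{(m/p+3)/2},
\]
of the same sign as $S_{m/p}$ but for a different $c$. Lifting across the degree-$p$ step flips the sign relative to the lift of $c=1$. Failing that, I will fall back to the moment route: bound $B$ from below via
\[
48B=\sum_b\delta(b)(\delta(b)-2),
\]
using large fibres of $D(z)$. The fibre over $b=1$ is the root set of $P$ and has size $R=8$. I would aim to show $B>\Delta$.

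\emph{Case $k=0$.} Now $\Delta=0$. This gives $N_2=N_{-2}$ and $N_1-N_{-1}=2^{(m-1)/2}$, so $N_1>0$ automatically. It remains to get $N_{\pm2}>0$, i.e.\ $B>0$, and $N_{-1}>0$, i.e.\ $2^{m-1}-4B>2^{(m-1)/2}$. The first follows from a single witness of absolute value $2^{(m+3)/2}$. I will take the base field $\F_{2^{s}}$ with $s\in\{11,13\}$ and compute, by direct evaluation, a $c\in\F_{2^s}$ with $W_{d_s}(c)=\pm2^{(s+3)/2}$, together with a $c'$ with $W_{d_s}(c')=-2^{(s+1)/2}$. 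Since $3\nmid m/s$, Lemma~\ref{lem:lifting} applies and multiplies both by the same sign $(2/(m/s))$. This yields $|W|=2^{(m+3)/2}$ (hence $B>0$, hence both $N_{\pm2}>0$) and an inner value of each sign once combined with $N_1>0$. If the lifted $c'$ lands on $+2^{(m+1)/2}$ instead, I will choose $c'$ at the base with the other sign, so the base computation must supply both inner signs.

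\emph{Main obstacle.} The hard step is producing the opposite-sign outer value $-2^{(m+3)/2}$ in case~2, and when $(2/r)=1$ in case~1. Lifting preserves relative signs, so the witnesses must already differ in sign at the base level, and this is where a concrete base computation (dimension $9$, $27$, or $3p$) will be needed.
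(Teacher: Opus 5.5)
Your case $k=0$ is fine and is essentially the paper's argument: a base computation over $\F_{2^{11}}$ or $\F_{2^{13}}$, followed by Lemma~\ref{lem:lifting}. Using $\Delta=0$ so that a single outer witness suffices is a harmless economy.

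Cases 1 and 2, however, have a genuine gap, exactly at the step you call the main obstacle. You leave the negative outer value $-2^{(m+3)/2}$ to an unspecified search, or to base computations in dimension $9$, $27$ or $3p$. No finite list of computations covers all $k$ and all $p$. Nor can you lift from a fixed $\F_{2^9}$ to $\F_{2^{3^kr}}$ with $k\ge3$: the extension degree is then divisible by $3$, and the phase $\Tr(x^{d_m}+cx)$ is cubic rather than quadratic, so Lemma~\ref{lem:lifting} does not apply.

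Your case-2 plan is also wrong as stated. Lifting multiplies every lifted value by the same Jacobi symbol. So a witness in $\F_{2^{m/p}}$ with the same sign as $S_{m/p}$ lifts to a value with the same sign as $S_m$, never the opposite one. The moment fallback does not help either. The fibre $\delta(1)=R=8$ only gives $48B\ge48$, i.e.\ $B\ge1$, which is far from the needed $B>\Delta=2^{(m-3)/2}$. There is also a sign slip in case 1: $S_m=-(2/r)2^{(m+3)/2}$ is negative when $(2/r)=+1$, not when $(2/r)=-1$.

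The missing idea is to apply $\Delta>0$ at the base level rather than at level $m$. You already used $\Delta>0\Rightarrow N_2>0$ over $\F_{2^m}$, where it only gives the positive outer value you did not need. Take $s=3^k$ in case 1 and $s=3^kp$ in case 2.

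\begin{itemize}
\item[(a)] Proposition~\ref{pro:12} gives $S_s=-2^{(s+3)/2}$ in both cases. This is precisely why the hypotheses read ``$k$ even'' or ``$k$ odd and $(2/p)=-1$''. Hence $N_{-2}^{(s)}\ge1$ over $\F_{2^s}$.
\item[(b)] Since $3\mid s$, Proposition~\ref{prop:C1moments} over $\F_{2^s}$ gives $N_2^{(s)}=N_{-2}^{(s)}+2^{(s-3)/2}\ge1$. So both outer signs already occur over $\F_{2^s}$, with no computation at all.
\item[(c)] The degree $m/s$ (equal to $r$, resp.\ $r/p$) is odd and prime to $3$. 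So Lemma~\ref{lem:lifting} multiplies both witnesses by the same factor $(2/(m/s))$, and both values $\pm2^{(m+3)/2}$ occur over $\F_{2^m}$.
\item[(d)] The inner values come from Proposition~\ref{thm:inner}, as you said.
\end{itemize}
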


\begin{proof}
By the scaling argument in Proposition \ref{prop:C1moments}, for every fixed
\(u\in\mathbb F_{2^m}^{*}\), the multiset $\{T_u(a):a\in\mathbb F_{2^m}^{*}\}$
is independent of \(u\). Hence it is enough to prove that all five
values
$0$,
$\pm 2^{(m+1)/2}$ and
$\pm 2^{(m+3)/2}$
occur.

By Proposition~\ref{prop:freq}, $N_0=2^{m-1}-1+3B>0$, so the value $0$ always occurs. In Cases~(i) and~(ii), we have $3\mid m$, and Proposition~\ref{thm:inner} shows that both inner values $\pm2^{(m+1)/2}$ occur. It therefore remains in those two cases only to prove that both outer values $\pm2^{(m+3)/2}$ occur. In Case~(iii), all five values will be obtained directly from the base-field computation and lifting.

We treat the three cases separately.

\medskip
\noindent {\bf Case (i)}: \(k\) is positive and even.
Let
$
s=3^k.
$
Applying Proposition \ref{pro:12} with \(r=1\), we obtain
$
S_s=(-1)^{k-1}2^{(s+3)/2}
=-2^{(s+3)/2}.
$
Thus the negative outer value
$
-2^{(s+3)/2}
$
occurs over \(\mathbb F_{2^s}\).

Since \(3\mid s\), Proposition \ref{prop:C1moments} gives
\[
\Delta=N_2-N_{-2}
=2^{(s-3)/2}>0.
\]
As \(N_{-2}>0\), it follows that
$
N_2=N_{-2}+\Delta>0.
$
Hence, both
$
-2^{(s+3)/2}$ and
$2^{(s+3)/2}
$
occur over \(\mathbb F_{2^s}\).

Now \(m=sr\) with \(3\nmid r\). Applying Lemma \ref{lem:lifting} to each of the
corresponding character sums gives
\[
S_m=
\left(\frac{2}{r}\right)
2^{(m-s)/2}S_s.
\]
Since multiplication by
\(\left(\frac{2}{r}\right)\in\{\pm1\}\) can only interchange the two
signs, both outer values
$
\pm2^{(m+3)/2}
$
occur over \(\mathbb F_{2^m}\).

\medskip
\noindent {\bf Case (ii)}: \(k\) is positive and odd, and \(r\) has a prime
divisor \(p\equiv3\) or \(5\pmod 8\).
Put
$
s=3^kp.
$
Since
$
\left(\frac{2}{p}\right)=-1
$
for \(p\equiv3,5\pmod8\) and $k$ is odd, Proposition \ref{pro:12} gives
\[
S_s
=
(-1)^{k-1}
\left(\frac{2}{p}\right)
2^{(s+3)/2}
=
-2^{(s+3)/2}.
\]
Thus the negative outer value
$
-2^{(s+3)/2}
$
occurs over \(\mathbb F_{2^s}\).

Since \(3\mid s\), Proposition \ref{prop:C1moments} gives
$
\Delta=N_2-N_{-2}=2^{(s-3)/2}>0.
$
The occurrence of the negative outer value implies \(N_{-2}>0\).
Hence
$
N_2=N_{-2}+\Delta>0,
$
and therefore both outer values
$
\pm2^{(s+3)/2}
$
occur over \(\mathbb F_{2^s}\).

Since
$
m=s\frac{r}{p}$
and
$3\nmid\frac{r}{p},
$
applying Lemma~\ref{lem:lifting} to the odd extension
$
\mathbb F_{2^s}\subseteq\mathbb F_{2^m}
$
shows that the two outer values over \(\mathbb F_{2^s}\) give the
corresponding two outer values over \(\mathbb F_{2^m}\).
Consequently,
$
\pm2^{(m+3)/2}
$
both occur over \(\mathbb F_{2^m}\).

\medskip
\noindent {\bf Case (iii)}: \(k=0\), and \(11\mid m\) or \(13\mid m\).
For \(s=11\) and \(s=13\), direct exhaustive computation in
Magma \cite{BosmaCannonPlayoust1997} shows that
\[
\{T_1(a):a\in\mathbb F_{2^s}^{*}\}
=
\left\{
0,\,
\pm2^{(s+1)/2},\,
\pm2^{(s+3)/2}
\right\}.
\]
Thus all five possible character-sum values occur over
\(\mathbb F_{2^{11}}\) and \(\mathbb F_{2^{13}}\).

If \(11\mid m\), applying Lemma~\ref{lem:lifting} to witnesses for each of the five base-field values over $\mathbb F_{2^{11}}$ shows that all five corresponding values occur over $\mathbb F_{2^m}$. The same argument applies from $\mathbb F_{2^{13}}$ when \(13\mid m\).

Therefore, Conjecture \ref{Conjecture 19} holds in each of the stated cases.
This completes the proof.
\end{proof}

\begin{remark}
Theorem~\ref{thm:C1families} proves several infinite parameter families, but it does not cover every odd $m\ge9$. A general closed formula for the five multiplicities is also not obtained here.
\end{remark}

\subsection{Conjecture 27: The Cherowitzo Family}

The Cherowitzo \(o\)-polynomial over \(\mathbb F_{2^m}\), where \(m\) is odd, is the family used in Ding's Conjecture~27 \cite{Ding2016}, and is given by
\[
f(x)=x^{3\cdot 2^{(m+1)/2}+4}
     +x^{2^{(m+1)/2}+2}
     +x^{2^{(m+1)/2}}.
\]
A parameterized form of this polynomial is
\begin{equation}\label{eq:ffd}
f_b(x)
=
x^{3\cdot 2^{(m+1)/2}+4}
+b^{2^{(m+1)/2}+1}x^{2^{(m+1)/2}+2}
+b^{2^{(m+1)/2}+2}x^{2^{(m+1)/2}},
\qquad b\in\mathbb F_{2^m}.
\end{equation}
For \(b\ne0\), let \(\lambda\in\mathbb F_{2^m}^{*}\) satisfy
\(\lambda^2=b\). Then
\[
f_b(x)
=
\lambda^{3\cdot2^{(m+1)/2}+4}
f(\lambda^{-1}x),
\]
so \(f_b\) is obtained from the Cherowitzo \(o\)-polynomial by
nonzero scalings of the input and output. When \(b=0\), it reduces to
\[
f_0(x)=x^{3\cdot2^{(m+1)/2}+4},
\]
which is the Glynn \(o\)-polynomial considered in the previous
section. Ding's Conjecture~27 concerns the binary linear codes
obtained from the images of \(f_b(x)+ux\), as stated below.

\begin{conj}(Conjecture 27 \cite{Ding2016})\label{Conjecture 27}
Let $m$ be odd, $b\in \F_{2^m}$ and $u\in \F_{2^m}^*$. Let $f_b(x)$ be given in (\ref{eq:ffd})
and $D_2={\rm Im}(f_b+ux)$. Then $\mathcal{C}_{D_2}$ has parameters $[2^{m-1},m]$, at most five nonzero weights for $m=5,7$, and exactly five for all odd $m\ge9$.
\end{conj}

Define
\begin{equation*}
T_{b,u}(a)=\sum_{x\in \F_{2^m}}\chi(a(f_b(x)+ux)).
\end{equation*}
Again set $E=2^{(m+1)/2}+1$ and substitute $x=z^{2^{(m-1)/2}}$. Using
\begin{equation*}
(3\cdot 2^{(m+1)/2}+4)2^{(m-1)/2}\equiv E^2,\quad
(2^{(m+1)/2}+2)2^{(m-1)/2}\equiv E,\quad
2^{(m+1)/2}2^{(m-1)/2}\equiv1\pmod{2^m-1},
\end{equation*}
and the trace-shift identity gives
\begin{equation}\label{eq:C2HX}
T_{b,u}(a)=\sum_z\chi\!\left(
 az^{E^2}+ab^{2^{(m+1)/2}+1}z^E+\bigl(ab^{2^{(m+1)/2}+2}+(au)^{2^{(m+1)/2}}\bigr)z
\right).
\end{equation}
In \eqref{eq:C2HX}, the first coefficient is $a\ne0$, so the coefficient triple is nonzero. For $b\ne0$, $f_b$ is equivalent by nonzero input and output scalings to the Cherowitzo $o$-polynomial, while $b=0$ gives the Glynn $o$-polynomial. Hence the standard $o$-polynomial characterization implies that $x\mapsto f_b(x)+ux$ is two-to-one for every $u\ne0$ \cite{Maschietti1998,Ding2016}. Applying \eqref{eq:sdfre926} and Lemma~\ref{lem:2to1} gives the following result.

\begin{theorem}\label{thm:C2support}
For every odd $m\ge5$, $b\in \F_{2^m}$, and $u\ne0$, the code $\mathcal{C}_{D_2}$ has parameters $[2^{m-1},m]$ and at most five nonzero weights, all belonging to \eqref{eq:fiveweights}.
\end{theorem}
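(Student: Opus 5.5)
The argument follows the proof of Theorem~\ref{thm:C1support} almost verbatim, with the Glynn monomial replaced by $f_b$. It has three ingredients: the two-to-one property of $f_b(x)+ux$, the reduction \eqref{eq:C2HX} of $T_{b,u}(a)$ to a Hollmann--Xiang trace sum, and the weight formula of Lemma~\ref{lem:2to1}.

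\textbf{Step 1: the map is two-to-one.} We show that $\Phi(x)=f_b(x)+ux$ is two-to-one with $\Phi(0)=0$.
\begin{itemize}
\item For $b=0$, $f_0$ is the Glynn $o$-polynomial \cite{Glynn1983}.
\item For $b\ne0$, choose $\lambda$ with $\lambda^2=b$; this is possible since squaring is bijective. Then $f_b(x)=\lambda^{d_1}f(\lambda^{-1}x)$ with $d_1=3\cdot2^{(m+1)/2}+4$. Writing $x=\lambda y$ gives
\[
f_b(\lambda y)+u\lambda y=\lambda^{d_1}\bigl(f(y)+u\lambda^{1-d_1}y\bigr).
\]
The coefficient $u\lambda^{1-d_1}$ is nonzero, so the $o$-polynomial characterization \cite{Maschietti1998} shows that $y\mapsto f(y)+u\lambda^{1-d_1}y$ is two-to-one. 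Composing with the bijections $y\mapsto\lambda y$ and multiplication by $\lambda^{d_1}$ preserves this property.
\end{itemize}
In both cases every exponent of $f_b$ is positive, so $\Phi(0)=0$. Lemma~\ref{lem:2to1} therefore applies: $|D_2|=2^{m-1}$ and $\wt(\mathbf{c}(a))=2^{m-2}-\tfrac14T_{b,u}(a)$ for $a\ne0$.

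\textbf{Step 2: the character sum takes only five values.} We use \eqref{eq:C2HX}, checking the three exponent congruences modulo $2^m-1$ where needed, together with the trace-shift identity for the linear term. The substitution $x=z^{2^{(m-1)/2}}$ permutes $\F_{2^m}$. It turns $T_{b,u}(a)$ into
\[
\sum_{z\in\F_{2^m}}\chi\bigl(\alpha_1z^{E^2}+\alpha_2z^E+\alpha_3z\bigr),\qquad \alpha_1=a\ne0.
\]
Since the coefficient triple is nonzero, \eqref{eq:sdfre926} gives $T_{b,u}(a)\in\{0,\pm2^{(m+1)/2},\pm2^{(m+3)/2}\}$. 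Substituting into the weight formula yields exactly the five weights in \eqref{eq:fiveweights}.

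\textbf{Step 3: the dimension is $m$.} For $m\ge5$ the smallest candidate weight is $2^{m-2}-2^{(m-1)/2}>0$. Hence $\mathbf{c}(a)\ne0$ for every $a\ne0$, the map $a\mapsto\mathbf{c}(a)$ is injective, and the dimension is $m$.

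The only step needing real care is the exponent bookkeeping in Step 2. The middle congruence $(2^{(m+1)/2}+2)2^{(m-1)/2}\equiv E$ uses $2^m\equiv1$. It must also be verified that the coefficients of $z^{E}$ and $z$ transform correctly, although only $\alpha_1\ne0$ is actually needed.
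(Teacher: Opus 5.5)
Your proposal is correct and follows essentially the same route as the paper. The paper also obtains the two-to-one property of $f_b(x)+ux$ from the $o$-polynomial characterization, using the Glynn case for $b=0$ and scaling to the Cherowitzo polynomial for $b\ne0$; it then reduces $T_{b,u}(a)$ via \eqref{eq:C2HX} to the Hollmann--Xiang sum \eqref{eq:sdfre926} with first coefficient $a\ne0$, and concludes with Lemma~\ref{lem:2to1} and positivity of the smallest candidate weight, as in Theorem~\ref{thm:C1support}.
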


If $b=0$, Conjecture \ref{Conjecture 27} becomes Conjecture \ref{Conjecture 19}.
The following families force all five values to occur for $b\neq 0$.

\begin{theorem}\label{thm:conj27-five}
Let \(m\ge 9\) be odd and \(b\in\mathbb F_{2^m}^{*}\).
Then the code \(C_{D_2}\) has exactly five nonzero weights
in each of the following cases.

\begin{enumerate}
\item \(m=9r\), where \(r\) is odd and \(3\nmid r\), and
$
u^2=b^{\,3\cdot2^{(m+1)/2}+3}.
$

\item \(m=7r\), where \(r\) is odd and \(3\nmid r\). Let
\(\lambda\in\mathbb F_{2^m}^{*}\) be the unique element satisfying
\(\lambda^2=b\). There exists $u$ satisfying $u=\alpha\lambda^{\,3\cdot2^{(m+1)/2}+3}$, where
$
\mathbb F_{128}
=
\mathbb F_2[\alpha]/(\alpha^7+\alpha+1)
\subseteq\mathbb F_{2^m}.
$
\end{enumerate}
\end{theorem}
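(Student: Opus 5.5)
The plan is to reduce, by the scaling $f_b(x)=\lambda^{d}f(\lambda^{-1}x)$ with $d=3\cdot2^{(m+1)/2}+4$ and $\lambda^2=b$, to a normalized family, and then lift base-field witnesses with Lemma~\ref{lem:lifting}. Substituting $x=\lambda y$ in $T_{b,u}(a)$ gives
\[
T_{b,u}(a)=\sum_{y}\chi\!\left(a\lambda^{d}\bigl(f(y)+u\lambda^{1-d}y\bigr)\right)=T_{1,u\lambda^{1-d}}(a\lambda^{d}).
\]
As $a$ runs over $\F_{2^m}^*$, so does $a\lambda^d$. Hence the multiset $\{T_{b,u}(a)\}$ equals that of the Cherowitzo sums with $b=1$ and linear coefficient $v=u\lambda^{1-d}$.

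In case (ii) the hypothesis is $u=\alpha\lambda^{d-1}$, and since $d-1=3\cdot2^{(m+1)/2}+3$ this gives $v=\alpha$, the generator of $\F_{128}$. In case (i), $u^2=b^{d-1}=\lambda^{2(d-1)}$ gives $u=\lambda^{d-1}$ (squaring is bijective), so $v=1$.

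By Theorem~\ref{thm:C2support} the weights lie in \eqref{eq:fiveweights}. It therefore suffices to show that all five values $0,\pm2^{(m+1)/2},\pm2^{(m+3)/2}$ occur among $T_{1,v}(a)$, $a\ne0$. Note that $N_0>0$ need not follow from Proposition~\ref{prop:freq}'s formula alone unless we know $B\ge0$, but $N_0=2^{m-1}-1+3B$ with $B\ge0$ still works because $B$ counts values.

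The base computation comes first. For $s=9$ with $v=1$, and for $s=7$ with $v=\alpha\in\F_{128}$, I would compute by exhaustive search (Magma, as in Case~(iii) of Theorem~\ref{thm:C1families}) the full value set
\[
\{T_{1,v}(a):a\in\F_{2^s}^*\}
\]
and check that it is all five values. For each of the five values I fix a witness $a_0\in\F_{2^s}^*$.

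Next comes the lifting. For $m=sr$ with $r$ odd and $3\nmid r$, consider the family $P_e(x)=a_0\bigl(f^{(se)}(x)+vx\bigr)$ over $\F_{2^{se}}$ for $e\mid r$. Here $f^{(se)}$ is the Cherowitzo polynomial of dimension $se$, whose exponents $3\cdot2^{(se+1)/2}+4$, $2^{(se+1)/2}+2$ and $2^{(se+1)/2}$ depend on $se$. Since $a_0,v\in\F_{2^s}$, the phase is constant on Frobenius orbits, and by Theorem~\ref{thm:C2support} every normalized sum lies in $\{0,\pm1,\pm2\}$. Lemma~\ref{lem:lifting} then gives $S_{sr}=(2/r)2^{(sr-s)/2}S_s$. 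Applied to each witness, this shows that $0$, both inner values and both outer values all occur over $\F_{2^m}$; the sign $(2/r)$ only permutes $\pm$, and both signs are present at the base.

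The main obstacle is the restriction hypothesis of Lemma~\ref{lem:lifting}: $P_{e_2}$ must restrict to $P_{e_1}$ on $\F_{2^{se_1}}$. This requires the exponents to be congruent modulo $2^{se_1}-1$. Since $(se_2+1)/2\equiv(se_1+1)/2\pmod{se_1}$ fails in general, I would first check the exponent arithmetic, as was done for $3^j$ in Proposition~\ref{pro:12}. Writing $se_2=se_1t$ with $t$ odd, we have
\[
\frac{se_2+1}{2}=se_1\cdot\frac{t-1}{2}+\frac{se_1+1}{2},
\]
so $2^{(se_2+1)/2}\equiv2^{(se_1+1)/2}\pmod{2^{se_1}-1}$ and the congruence does hold. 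I would also verify that the trace $\Tr_{se_2}$ restricted to the subfield equals $t\cdot\Tr_{se_1}=\Tr_{se_1}$ for odd $t$.
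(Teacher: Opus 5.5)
Your proposal is correct and follows the paper's route. The substitution $x=\lambda y$ reduces to $b=1$ with linear coefficient $v=u\lambda^{1-d}\in\{1,\alpha\}$. A Magma computation over $\F_{2^9}$ (respectively $\F_{2^7}$) exhibits all five values, and Lemma~\ref{lem:lifting} with $3\nmid r$ carries them to $\F_{2^m}$. Your check that $2^{(se_2+1)/2}\equiv 2^{(se_1+1)/2}\pmod{2^{se_1}-1}$, together with the trace restriction for odd $t$, verifies the lemma's compatibility hypothesis, which the paper uses without spelling it out.
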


\begin{proof}
For (1), Let  \(b=u=1\) and $m=9$, by Magma \cite{BosmaCannonPlayoust1997} we obtain
\[
\{T_{1,1}(a):a\in\mathbb F_{2^9}^{*}\}
=
\{0,\pm32,\pm64\}.
\]
Thus all five possible character-sum values occur over
\(\mathbb F_{2^9}\). Since \(m=9r\), Lemma~\ref{lem:lifting} extends these
five values to \(\mathbb F_{2^m}\).

For general \(b\ne0\), let \(\lambda\in\mathbb F_{2^m}^{*}\)
satisfy \(\lambda^2=b\). Since
$
f_b(\lambda x)
=
\lambda^{\,3\cdot2^{(m+1)/2}+4}f_1(x),
$
we have
\[
f_b(\lambda x)+u\lambda x
=
\lambda^{\,3\cdot2^{(m+1)/2}+4}
\bigl(f_1(x)+vx\bigr),
\]
where
$
v
=
u\lambda^{\,1-(3\cdot2^{(m+1)/2}+4)}.
$
The condition
$
u^2=b^{\,3\cdot2^{(m+1)/2}+3}
$
gives \(v^2=1\), and hence \(v=1\). Therefore the result
follows from the case \(b=u=1\).

For (2), for \(b=1\) and \(u=\alpha\), by Magma \cite{BosmaCannonPlayoust1997} we obtain
\[
\{T_{1,\alpha}(a):a\in\mathbb F_{128}^{*}\}
=
\{0,\pm16,\pm32\}.
\]
Hence, all five possible character-sum values occur over
\(\mathbb F_{128}\). Since \(m=7r\), Lemma~\ref{lem:lifting} extends these
five values to \(\mathbb F_{2^m}\).

For general \(b\ne0\), the same scaling as above gives
$v=u\lambda^{\,1-(3\cdot2^{(m+1)/2}+4)}.$
Under the assumption
$
u=\alpha\lambda^{\,3\cdot2^{(m+1)/2}+3},
$
we obtain \(v=\alpha\). Thus all five possible character-sum
values occur. By Theorem~\ref{thm:C2support}, these give exactly five nonzero
weights of \(\mathcal{C}_{D_2}\).
This completes the proof.
\end{proof}

\subsection{Conjecture 30: The Payne family}

The third family is derived from the Payne \(o\)-polynomial appearing in Ding's Conjecture~30 \cite{Ding2016}.
For odd \(m\), the polynomial
\[
p_b(x)=x^{5/6}+bx^{3/6}+b^2x^{1/6},
\qquad b\in\mathbb F_{2^m},
\]
gives rise to the defining set
\[
D_3=\operatorname{Im}(p_b(x)+ux),
\qquad u\in\mathbb F_{2^m}^{*}.
\]
The corresponding conjecture concerns the number of nonzero weights
of the trace code \(\mathcal C_{D_3}\).
\begin{conj}(Conjecture 30 \cite{Ding2016})\label{Conjecture 30}
Let
\begin{equation*}
 p_b(x)=x^{5/6}+bx^{3/6}+b^2x^{1/6},\qquad D_3={\rm Im}(p_b(x)+ux),
\end{equation*}
with $b\in \F_{2^m}$. If $m\geq 7$, $\mathcal{C}_{D_3}$ is a three-weight or five-weight code with length $2^{m-1}$ and dimension $m$ for all $u \in \mathbb{F}_{2^m}^*$.
\end{conj}

The conjecture predicts that only three or five nonzero weights can
occur. However, this assertion is not valid for all choices of the
parameters. We first give a counterexample with four nonzero weights,
and then determine the possible weights in the general case.
\begin{example}\label{ex:C3counter}
Let
\begin{equation*}
 m=7,\qquad \F_{2^m}=\F_2[\alpha]/(\alpha^7+\alpha+1),\qquad b=\alpha+1,\quad u=1.
\end{equation*}
Then $\C_3(b,1)$ has weight enumerator
\begin{equation*}
1+2z^{24}+30z^{28}+69z^{32}+26z^{36}.
\end{equation*}
Thus it is a $[64,7,24]$ code with exactly four nonzero weights.
\end{example}

A general correction of the conjectured weight-count statement is available.

\begin{theorem}\label{thm:C3general}
For every odd \(m\ge5\), \(b\in\mathbb F_{2^m}\), and
\(u\in\mathbb F_{2^m}^{*}\), the code
\(\mathcal C_{D_3}\) has parameters
$
[2^{m-1},m]
$
and exactly three, four, or five nonzero weights. Moreover, every
nonzero weight belongs to \eqref{eq:fiveweights}.
\end{theorem}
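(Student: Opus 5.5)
Following the pattern used for the Glynn and Cherowitzo families, I will convert the character sum
\[
T_{b,u}(a)=\sum_{x\in\F_{2^m}}\chi\bigl(a(p_b(x)+ux)\bigr)
\]
into the Hollmann--Xiang form \eqref{eq:sdfre926}. Since $m$ is odd, $\gcd(6,2^m-1)=1$, so $x\mapsto x^6$ is a permutation of $\F_{2^m}$. Substituting $x=y^6$ gives
\[
p_b(y^6)+uy^6=y^5+by^3+b^2y+uy^6.
\]
This is not directly of the form $\alpha_1z^{E^2}+\alpha_2z^E+\alpha_3z$. The preferred route is a Frobenius twist: write $y^6=(y^3)^2$, and apply the trace identity $\Tr(auy^6)=\Tr((au)^{2^{m-1}}y^3)$. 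The phase then becomes
\[
\Tr\bigl(ay^5+\bigl(ab+(au)^{2^{m-1}}\bigr)y^3+ab^2y\bigr).
\]
This is the trace representation of the dual of the binary cyclic code with zeros $\alpha,\alpha^3,\alpha^5$, i.e.\ the triple-error-correcting BCH code. For odd $m$, the Kasami/Berlekamp theorem says this dual has the same five-valued Walsh spectrum $\{0,\pm2^{(m+1)/2},\pm2^{(m+3)/2}\}$ for every nonzero coefficient triple. Here the coefficient of $y^5$ is $a\ne0$, so the triple is nonzero. (Alternatively, $y^5+cy^3+dy$ is a quadratic form. Lemma~\ref{lem:quadratic} applies, and its radical is the kernel of a linearized polynomial whose $\F_2$-dimension is at most $3$ and is odd, giving the same value set.) I will invoke the BCH result as the analogue of \eqref{eq:sdfre926}.

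Next comes the two-to-one property. For $b=0$, $p_0(x)=x^{5/6}$ is the Payne monomial; for $b\ne0$, $p_b$ is the standard Payne $o$-polynomial up to input and output scaling, as in the Cherowitzo case. The $o$-polynomial characterization \cite{Maschietti1998} then makes $x\mapsto p_b(x)+ux$ two-to-one for every $u\ne0$, and $p_b(0)=0$. So Lemma~\ref{lem:2to1} applies: $|D_3|=2^{m-1}$ and $\wt(\mathbf c(a))=2^{m-2}-T_{b,u}(a)/4$. The value set above then yields exactly the weights in \eqref{eq:fiveweights}. Since $2^{m-2}-2^{(m-1)/2}>0$ for $m\ge5$, no nonzero $a$ gives the zero codeword, so the dimension is $m$.

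It remains to show that at least three distinct nonzero weights occur; the upper bound of five is already established. I will use Proposition~\ref{prop:freq}. We have $N_0=2^{m-1}-1+3B>0$ always. Suppose only two values $v_1,v_2$ occurred among $a\ne0$. The moment identities
\[
\sum_{a\ne0}T=2^m,\qquad \sum_{a\ne0}T^2=2^{2m},\qquad \#\{a\ne0\}=2^m-1
\]
give three equations in two unknown counts. Taking $0$ as one of the values (it always occurs), the other value $v$ with count $N$ must satisfy $Nv=2^m$ and $Nv^2=2^{2m}$. Then $v=2^m$, which is impossible. Hence at least two nonzero $T$-values, and so at least three weights, occur.

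The main obstacle is the transformation step: matching the exponents $5/6$, $3/6$, $1/6$ and the linear term to a known five-valued family with a nonzero coefficient triple. If the BCH identification fails, for example through a bookkeeping error in the twist, I would fall back on a direct radical computation for $\Tr(ay^5+cy^3+dy)$ via Lemma~\ref{lem:quadratic}, bounding the radical dimension by $3$.
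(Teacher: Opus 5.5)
Your proposal is correct and follows the paper's proof essentially step for step: the same substitution $x=y^6$ and Frobenius twist giving $\Tr\bigl(ay^5+cy^3+ab^2y\bigr)$ with $c=ab+(au)^{2^{m-1}}$, the same weight formula and dimension argument, and the same first- and second-moment argument ruling out a single nonzero value $\mu$ (it would force $\mu=2^m$). The only real difference is that you take the five-valued spectrum from Kasami's weight distribution of the dual triple-error-correcting BCH code, whereas the paper uses your fallback: it raises the radical equation to the fourth power to get $a^4z^{16}+c^4z^8+c^2z^2+az=0$, so the radical has dimension at most $4$, hence $1$ or $3$ by parity. One small correction: $x^{5/6}$ is not a Payne monomial, but it is an $o$-monomial as the image of Segre's $x^6$ under $x^k\mapsto x^{(k-1)/k}$, and that is all your two-to-one step for $b=0$ needs.
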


\begin{proof}
Since the Payne function is an $o$-polynomial, the standard $o$-polynomial characterization \cite{Maschietti1998,Ding2016} gives
$
|D_3|=2^{m-1}.
$
Since \(m\) is odd, \(6\) is invertible modulo \(2^m-1\).
Putting \(x=y^6\), we have
\[
p_b(y^6)=y^5+by^3+b^2y.
\]
Using
$
\operatorname{Tr}(cw^2)
=
\operatorname{Tr}(c^{2^{m-1}}w),
$
the corresponding character sum can be written as
\[
T_{b,u}(a)
=
\sum_{y\in\mathbb F_{2^m}}(-1)^{Q_a(y)},
\]
where
$
Q_a(y)
=
\operatorname{Tr}\!\left(
ay^5+cy^3+ab^2y
\right),$
and
$c=ab+(au)^{2^{m-1}}.
$

The radical equation of the quadratic part is
$
az^4+(az)^{2^{m-2}}
+cz^2+(cz)^{2^{m-1}}=0.
$
Raising this equation to the fourth power gives
\[
a^4z^{16}+c^4z^8+c^2z^2+az=0.
\]
This is a nonzero polynomial of degree \(16\), so the radical has
dimension at most \(4\). Since \(m\) is odd and the associated
bilinear form is alternating, its dimension is odd. Hence it is
either \(1\) or \(3\).

By Lemma~\ref{lem:quadratic},
\[
T_{b,u}(a)
\in
\left\{
0,\,
\pm2^{(m+1)/2},\,
\pm2^{(m+3)/2}
\right\}.
\]
The weight formula therefore gives
\[
\operatorname{wt}(\mathbf{c}(a))
\in
\left\{
2^{m-2},\,
2^{m-2}\pm2^{(m-3)/2},\,
2^{m-2}\pm2^{(m-1)/2}
\right\}.
\]
Since the smallest candidate weight is positive, we have
$
\dim(\mathcal C_{D_3})=m.
$

It remains to determine how many of the candidate weights can occur.
By Proposition~\ref{prop:freq},
\[
N_0=2^{m-1}-1+3B>0,
\]
so the central weight always occurs. If only one nonzero
character-sum value \(\mu\) occurred besides \(0\), by the first two moment identities in Lemma~\ref{lem:2to1}, we have
$
\mu N=2^m,
$ and
$\mu^2N=2^{2m},
$
where $N=\#\left\{a\in\mathbb F_{2^m}^{*}:T_{b,u}(a)=\mu\right\}.$

Then \(\mu=2^m\), contradicting
$
|\mu|\le2^{(m+3)/2}<2^m.
$
Thus, at least two nonzero character-sum values occur in addition to
\(0\). Therefore \(\mathcal C_{D_3}\) has exactly three, four, or
five nonzero weights.
\end{proof}

For a special case, we can determine the weight enumerator of $\C_{D_3}$.

\begin{theorem}\label{thm:C3eta1}
If $bu^2=1$, then for every odd $m\ge5$ the code $\C_{D_3}$ has weight enumerator
\begin{align*}
W_{\C_3}(z)={}&1+\left(2^{m-2}+2^{(m-3)/2}\right)z^{2^{m-2}-2^{(m-3)/2}}\notag\\
&+\left(2^{m-1}-1\right)z^{2^{m-2}}
+\left(2^{m-2}-2^{(m-3)/2}\right)z^{2^{m-2}+2^{(m-3)/2}}.
\end{align*}
\end{theorem}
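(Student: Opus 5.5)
The plan is to show that, when $bu^2=1$, the two outer values $\pm2^{(m+3)/2}$ never occur. In the notation of Proposition~\ref{prop:freq} this means $B=0$, hence $N_{\pm2}=0$ and automatically $\Delta=0$. Proposition~\ref{prop:freq} then gives
\[
N_{\pm1}=\frac{2^{m-1}\pm2^{(m-1)/2}}{2}=2^{m-2}\pm2^{(m-3)/2},\qquad N_0=2^{m-1}-1 .
\]
Since $\wt(\mathbf c(a))=2^{m-2}-\frac14T(a)$, the value $T(a)=+2^{(m+1)/2}$ gives the weight $2^{m-2}-2^{(m-3)/2}$ with multiplicity $2^{m-2}+2^{(m-3)/2}$. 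Likewise $T(a)=-2^{(m+1)/2}$ gives the weight $2^{m-2}+2^{(m-3)/2}$, and $T(a)=0$ gives $2^{m-2}$. This is exactly the stated enumerator.

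First I would reduce to $b=u=1$. Substitute $x=y^6$ as in the proof of Theorem~\ref{thm:C3general}, so that $D_3=\{y^5+by^3+b^2y+uy^6\}$. Then put $y=tw$ with $t=u^{-1}$. Since $bu^2=1$ we have $b=t^2$, and
\[
t^5w^5+bt^3w^3+b^2tw+ut^6w^6=t^5\bigl(w^5+w^3+w+w^6\bigr).
\]
Hence $D_3=t^5\cdot\operatorname{Im}(p_1(x)+x)$. Multiplication by $t^5$ is an invertible $\F_2$-linear map, so by the remark in the preliminaries the code is permutation equivalent to the case $b=u=1$.

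In the case $b=u=1$ we have $Q_a(y)=\Tr(ay^5+cy^3+ay)$ with $c=a+a^{2^{m-1}}$, so $c^2=a^2+a$ and $c^4=a^4+a^2$. The radical polynomial from the proof of Theorem~\ref{thm:C3general} becomes
\[
L(z)=a^4z^{16}+(a^4+a^2)z^8+(a^2+a)z^2+az.
\]
A direct expansion shows that $L(z)=M(z^2+z)$, where
\[
M(w)=a^4w^8+a^2w^4+a^2w^2+aw .
\]
Thus the radical is $\{z:\ z^2+z\in\ker M\}$. Because $z\mapsto z^2+z$ is two-to-one onto the trace-zero hyperplane, the radical has dimension $1$ exactly when $\ker M$ contains no nonzero element of trace $0$. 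If that holds for every $a\ne0$, then $r=1$ and Lemma~\ref{lem:quadratic} gives $T(a)\in\{0,\pm2^{(m+1)/2}\}$, which is $B=0$.

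The main obstacle is proving that $\ker M\cap\{w:\Tr(w)=0\}=\{0\}$ for every $a\ne0$. I would first try to factor $M$ further in the same way, for example by substituting $s=aw$ or $w=v^2$, or by looking for a relation of the form $M=N^{(2)}+N$ with $N$ a linearized polynomial. From such a factorization one would show that a nonzero root $w$ forces either $\Tr(w)=1$ or an element satisfying $y^2+y+1=0$. The latter is impossible for odd $m$, as in the proof of Proposition~\ref{prop:C1moments}.

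If this direct analysis stalls, the fallback is the moment route. Since $b=u=1$ does not depend on $a$, I would compute $\sum_{a\ne0}T(a)^4$ by additive orthogonality, as in Proposition~\ref{prop:C1moments}. This reduces $B$ to a count of solutions of $F(x_1)+F(x_2)+F(x_3)+F(x_4)=0$ with $x_1+x_2+x_3+x_4=0$, where $F(y)=y^5+y^3+y+y^6$. The goal would then be to show that every nonempty fibre of each derivative of $F$ has size exactly two.
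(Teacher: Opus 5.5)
\textbf{There is a genuine gap.} Your reduction to $b=u=1$ is correct, and so is the identity $L(z)=M(z^2+z)$. The final bookkeeping via Proposition~\ref{prop:freq} is also fine once $B=0$ is known. The gap is the step you flag as the main obstacle, and it cannot be closed, because the statement you aim for is false.

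For every $a\ne0$ put $w_0=a^{-1/3}$, so that $aw_0^3=1$. Then $a^4w_0^8=a^2w_0^2=w_0^{-4}$ and $a^2w_0^4=aw_0=w_0^{-2}$, hence $M(w_0)=0$. In fact $M(w)=N(w)^2+N(w)$ with $N(w)=a^2w^4+aw$, which is exactly the factorization you hoped for, and $\ker N=\{0,w_0\}$.

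Since $a\mapsto a^{-1/3}$ permutes $\F_{2^m}^{*}$, there are $2^{m-1}-1$ values of $a$ with $\Tr(w_0)=0$. For these $a$, $\ker M\cap\ker\Tr\ne\{0\}$. The radical then contains $\{0,1,z_0,z_0+1\}$, where $z_0^2+z_0=w_0$, so by parity it is $3$-dimensional. Thus $r=1$ fails for about half of the $a$'s. For those $a$ the outer values are excluded not by the radical dimension but because $Q_a$ does not vanish on the radical:
\[
Q_a(z_0)=\Tr\bigl(a(w_0^3+w_0^2+w_0)\bigr)=\Tr(1)+\Tr(w_0^{-1})+\Tr(w_0^{-2})=1,
\]
so $T(a)=0$.

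Your fallback fails too. Orthogonality in $a$ yields only the single condition $\sum_i F(x_i)=0$, with no linear constraint on the $x_i$. The constraint in Proposition~\ref{prop:C1moments} came from rescaling to Weil sums with a varying linear coefficient, which is unavailable here because $a$ multiplies all of $F$. Moreover, the target fibre condition is false, since $F(y+1)=F(y)$ for all $y$.

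\textbf{The missing idea} is the composition $y^6+y^5+y^3+y=(y^2+y+1)^3+1$, which the paper uses. The map $y\mapsto y^2+y+1$ is two-to-one onto $\{w:\Tr(w)=1\}$, and $w\mapsto w^3$ permutes $\F_{2^m}$. Together these give
\[
T(a)=-(-1)^{\Tr(a)}\sum_{w\in\F_{2^m}}(-1)^{\Tr(aw^3+w)}.
\]
This is a Gold-type sum whose radical is always $1$-dimensional. Hence $T(a)\in\{0,\pm2^{(m+1)/2}\}$ for every $a\ne0$, i.e.\ $B=0$, and your remaining steps go through.

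Your identity $L(z)=M(z^2+z)$ is the linear-algebra shadow of this composition. Working with the Gold form on the trace-one coset, rather than its pullback to $y$, avoids the $3$-dimensional radicals altogether. Alternatively, you can stay in your framework: use $M=N^2+N$ to get $r=1$ when $\Tr(a^{-1/3})=1$, and use the computation $Q_a(z_0)=1$ above to get $T(a)=0$ when $\Tr(a^{-1/3})=0$.
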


\begin{proof}
After $x=y^6$ and $y=u^{-1}z$, the relevant map is, up to an invertible output factor,
\begin{equation*}
F_\eta(z)=z^6+z^5+\eta z^3+\eta^2z.
\end{equation*}
For $\eta=1$,
\begin{equation*}
F_1(z)=(z^2+z+1)^3+1.
\end{equation*}
The map $z\mapsto z^2+z+1$ is two-to-one onto $H_1=\{w:\Tr(w)=1\}$. Hence, for $a\ne0$,
\begin{equation*}
T(a)=-(-1)^{\Tr(a)}\sum_{w\in \F_{2^m}}(-1)^{\Tr(aw^3+w)}.
\end{equation*}
The radical of $\Tr(aw^3+w)$ is one-dimensional because the nonzero radical equation reduces to
$z^{2^{m-1}-2}=a^{1-2^{m-1}},$
whose exponent is invertible modulo $2^m-1$ for odd $m$. Thus,
\[
T(a)\in\left\{0,\pm2^{(m+1)/2}\right\}.
\]
Let
$$
N_{+}
=
\#\left\{
a\in\mathbb F_{2^m}^{*}:
T(a)=2^{(m+1)/2}
\right\},
\qquad
N_{-}
=
\#\left\{
a\in\mathbb F_{2^m}^{*}:
T(a)=-2^{(m+1)/2}
\right\},
$$
and
\[
N_{0}
=
\#\left\{
a\in\mathbb F_{2^m}^{*}:
T(a)=0
\right\}.
\]
Since \(T(a)\) takes only these three values, we have
\begin{equation}\label{eq:NN12}
N_{+}+N_{-}+N_{0}=2^m-1.
\end{equation}
By the first two moment identities in Lemma~\ref{lem:2to1},
\begin{equation}\label{eq:NNN12}
2^{(m+1)/2}(N_{+}-N_{-})=2^m
\qquad
2^{m+1}(N_{+}+N_{-})=2^{2m}.
\end{equation}

From (\ref{eq:NN12}) and (\ref{eq:NNN12}) we obtain
\[
N_{+}
=
2^{m-2}+2^{(m-3)/2},
\qquad
N_{-}
=
2^{m-2}-2^{(m-3)/2},
\qquad
N_{0}
=
2^m-1-N_{+}-N_{-}
=
2^{m-1}-1.
\]
Substituting these multiplicities into the weight formula gives the
stated weight enumerator.
\end{proof}

\subsection{Conjecture 33: The Welch family}

The fourth family is obtained from the Welch APN power function \cite{WangKadirLiXia2020,HellesethLiXia2025,Ding2016}. For odd
$m\ge5$, let
$
 F(x)=x^{2^{(m-1)/2}+3}
$
and define
\begin{equation}\label{eq:C4Phi}
 \Phi(x)=F(x+1)+F(x)+1,
 \qquad
 D_4={\rm Im}(\Phi).
\end{equation}
The binary trace code associated with this defining set is
\begin{equation*}
 \mathcal C_{D_4}
 =
 \left\{
 \mathbf{c}(a)=\bigl(\Tr(ad)\bigr)_{d\in D_4}:
 a\in\mathbb F_{2^m}
 \right\}.
\end{equation*}
Thus, Conjecture~33 is a statement about the number of distinct nonzero
Hamming weights occurring among the codewords $\mathbf{c}(a)$.

\begin{conj}(Conjecture 33 \cite{Ding2016})\label{Conjecture 33}
Let $m\ge5$ be odd and let $D_4$ be defined by \eqref{eq:C4Phi}. If
$m\in\{5,7\}$, then $\mathcal C_{D_4}$ is a three-weight code with
length $2^{m-1}$ and dimension $m$. If $m\ge9$, then
$\mathcal C_{D_4}$ is a five-weight code with length $2^{m-1}$ and
dimension $m$.
\end{conj}

We next transform the Hamming-weight problem in Conjecture~\ref{Conjecture 33}
into a quadratic character-sum problem. This is the point at which the
Welch permutation enters the argument.

Set
\begin{equation}\label{eq:C4GL}
 G(z)=z^{2^{(m+1)/2}+1}+z^3+z,
 \qquad
 L(x)=x+x^{2^{(m-1)/2}}.
\end{equation}
A direct calculation gives
\begin{equation}\label{eq:C4factor}
 \Phi(x)=F(x+1)+F(x)+1=G(L(x)).
\end{equation}

Indeed, $\Tr(L(x))=0$ for every $x\in\mathbb F_{2^m}$, so
$L(\mathbb F_{2^m})\subseteq\ker\Tr$. Moreover,
$\ker L=\mathbb F_2$, and hence
$|L(\mathbb F_{2^m})|=2^{m-1}=|\ker\Tr|$.
Therefore,
$
L(\mathbb F_{2^m})=\ker\Tr.
$
Consequently, every element of $\ker\Tr$ has exactly two preimages
under $L$.

The polynomial $G$ is a permutation of $\mathbb F_{2^m}$; its
differential and Walsh spectra were studied in
\cite{WangKadirLiXia2020,HellesethLiXia2025}. Since the Welch power is APN \cite{WangKadirLiXia2020,HellesethLiXia2025}, its derivative $\Phi$ is two-to-one onto $D_4$ (equivalently, every value of a nonzero derivative has at most two preimages; see also \cite[Ch.~11]{Carlet2021}). Thus Lemma~\ref{lem:2to1} gives, for $a\ne0$,
\begin{equation}\label{eq:C4weightT}
 \wt(\mathbf{c}(a))
 =2^{m-2}-\frac{T_m(a)}4,
 \qquad
 T_m(a)=\sum_{x\in\mathbb F_{2^m}}\chi\bigl(a\Phi(x)\bigr).
\end{equation}
where $\Phi(x)$ is given in (\ref{eq:C4factor}).

Using \eqref{eq:C4factor} and the fact that $L$ is two-to-one onto
$\ker\Tr$, we obtain
\begin{align*}
T_m(a)=2\sum_{z\in\ker\Tr}\chi(aG(z))\notag=\sum_{z\in\mathbb F_{2^m}}
 \bigl(1+\chi(z)\bigr)\chi(aG(z))\notag=\sum_{z\in\mathbb F_{2^m}}\chi(aG(z))
 +\sum_{z\in\mathbb F_{2^m}}\chi(aG(z)+z).
\end{align*}
Because $G(x)$ in (\ref{eq:C4GL}) is a permutation and $a\ne0$, we have
$
 \sum_{z\in\mathbb F_{2^m}}\chi(aG(z))=0.
$
Therefore,
\begin{equation*}
T_m(a)=W_m(a),
\end{equation*}
where
\begin{equation*}
W_m(a)=\sum_{z\in\mathbb F_{2^m}}(-1)^{Q_a(z)}  \,\,\,\text{and}\,\,\,
Q_a(z)=\Tr\!\left(
 a\bigl(z^{2^{(m+1)/2}+1}+z^3\bigr)+(a+1)z
 \right).
\end{equation*}
Consequently, from (\ref{eq:C4weightT}) we have
\begin{equation}\label{eq:C4weightW}
 \wt(\mathbf{c}(a))=2^{m-2}-\frac{W_m(a)}4.
\end{equation}
Thus the number of nonzero weights of $\mathcal C_{D_4}$ is determined
by the number of values taken by the Walsh sums $W_m(a)$ for
$a\in\mathbb F_{2^m}^{*}$. In particular, to prove the five-weight
part of Conjecture~\ref{Conjecture 33}, it is not enough to show that
five values are possible; one must also show that all five corresponding
weights actually occur.

The quadratic part of $Q_a$ is
\[
q_a(z)
=
\Tr\!\left(
a\left(z^{2^{(m+1)/2}+1}+z^3\right)
\right).
\]
By \cite{WangKadirLiXia2020}, the rank of this quadratic form is
$m-1$ or $m-3$. Equivalently, its radical has dimension $1$ or $3$.
Hence, Lemma~\ref{lem:quadratic} gives
\begin{equation}\label{eq:C4fiveWalsh}
W_m(a)\in
\left\{
0,\,
\pm 2^{(m+1)/2},\,
\pm 2^{(m+3)/2}
\right\}.
\end{equation}
Combining \eqref{eq:C4weightW} and \eqref{eq:C4fiveWalsh} yields the
following general restriction.

\begin{theorem}\label{thm:C4support}
For every odd $m\ge5$, the code $\mathcal C_{D_4}$ has parameters
$[2^{m-1},m]$, and every nonzero weight belongs to
\begin{equation}\label{eq:C4candidateweights}
\left\{
2^{m-2},\
2^{m-2}\pm2^{(m-3)/2},\
2^{m-2}\pm2^{(m-1)/2}
\right\}.
\end{equation}
In particular, $\mathcal C_{D_4}$ has at most five nonzero weights.
\end{theorem}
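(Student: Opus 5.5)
The plan is to assemble the pieces already established above \eqref{eq:C4weightW}, check the one remaining hypothesis, and then argue about the dimension. First, I will recall that the Welch power function $F(x)=x^{2^{(m-1)/2}+3}$ is APN. Since $\Phi(x+1)=\Phi(x)$, the derivative $\Phi$ is exactly two-to-one onto $D_4$, so $|D_4|=2^{m-1}$. This gives the length. Lemma~\ref{lem:2to1} requires $\Phi(0)=0$, but this does not hold here: $\Phi(0)=F(1)+F(0)+1=0$, so in fact it does hold. I will still verify this point rather than take it for granted, because the weight formula $\wt(\mathbf c(a))=2^{m-2}-\tfrac14T_m(a)$ needs only the two-to-one property together with $T_m(a)=2\sum_{d\in D_4}\chi(ad)$.

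Next, I will use the reduction \eqref{eq:C4factor}: $\Phi=G\circ L$ with $L$ two-to-one onto $\ker\Tr$. Combined with the fact that $G$ is a permutation, this gives $T_m(a)=W_m(a)$ for $a\ne0$, as derived before the statement. The quadratic part $q_a$ has radical dimension $1$ or $3$ by the rank result of \cite{WangKadirLiXia2020}. So Lemma~\ref{lem:quadratic} gives \eqref{eq:C4fiveWalsh}, and substituting into \eqref{eq:C4weightW} places every weight in \eqref{eq:C4candidateweights}.

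The main obstacle is the rank statement for $q_a$ for every $a\ne0$. If I cannot rely on the citation as stated, I would verify it directly with the radical equation described after Lemma~\ref{lem:quadratic}, taking $h=(m+1)/2$ and $c=d=a$. Writing $e=(m+1)/2$, the equation is $az^{2^{e}}+(az)^{2^{e-1}}+az^2+(az)^{2^{m-1}}=0$, using $2^{m-e}=2^{e-1}$. I would raise it to a suitable power of $2$ so that the exponents lie in a window of small width, substitute $y=z^{2^{e-1}}$ where convenient, and bound the number of roots by $2^4$ in the style of the proof of Theorem~\ref{thm:C3general}. The radical dimension is then at most $4$, hence odd and at most $3$.

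Finally, I will settle the dimension. For $m\ge5$ the smallest candidate weight satisfies $2^{m-2}-2^{(m-1)/2}>0$, so $\mathbf c(a)\ne0$ for every $a\ne0$. The map $a\mapsto\mathbf c(a)$ is therefore injective and the dimension is $m$. Since \eqref{eq:C4candidateweights} contains five values, there are at most five nonzero weights.
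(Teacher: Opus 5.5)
Your main line is the paper's argument. The APN property of the Welch power together with $\Phi(x+1)=\Phi(x)$ makes $\Phi$ two-to-one onto $D_4$. The factorization $\Phi=G\circ L$ and the permutation property of $G$ give $T_m(a)=W_m(a)$. The cited rank result gives radical dimension $1$ or $3$, Lemma~\ref{lem:quadratic} then gives \eqref{eq:C4fiveWalsh}, and positivity of the smallest candidate weight gives dimension $m$. Your sentence on $\Phi(0)$ contradicts itself, but its conclusion $\Phi(0)=1+0+1=0$ is correct.

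The fallback you sketch for the rank statement, which you call the main obstacle, would not work as described.

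\begin{itemize}
\item In the radical equation the Frobenius shifts are $1,\,-1,\,(m-1)/2,\,(m+1)/2$ modulo $m$. The largest cyclic gap between them is only $(m-3)/2$.
\item Raising to a power of $2$, or substituting $y=z^{2^{e-1}}$, only rotates these shifts. So the best ordinary polynomial you can reach has degree $2^{(m+3)/2}$.
\item That yields only $\dim R\le (m+3)/2$. Already for $m=7$ this leaves $\dim R\in\{1,3,5\}$.
\end{itemize}

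This differs from Theorem~\ref{thm:C3general}, where the shifts $\pm1,\pm2$ fit in a window of width $4$.

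The direct verification does go through if you change the Frobenius. Take $\sigma(z)=z^{2^{(m+1)/2}}$. It generates $\mathrm{Gal}(\F_{2^m}/\F_2)$, and it satisfies $\sigma^{2}(z)=z^{2}$, $\sigma^{-1}(z)=z^{2^{(m-1)/2}}$ and $\sigma^{-2}(z)=z^{2^{m-1}}$. The radical equation then reads
\[
a\,\sigma(z)+\sigma^{-1}(a)\,\sigma^{-1}(z)+a\,\sigma^{2}(z)+\sigma^{-2}(a)\,\sigma^{-2}(z)=0 .
\]
This is a $\sigma$-linearized polynomial with $\sigma$-exponents $-2,-1,1,2$ and nonzero extreme coefficients. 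After applying $\sigma^{2}$ it has $\sigma$-exponents $0,1,3,4$.

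Now use the Gow--Quinlan bound: if $\sigma$ generates the Galois group, the kernel of a nonzero $\sigma$-linearized polynomial of $\sigma$-degree $k$ has $\F_2$-dimension at most $k$. This gives $\dim R\le 4$, hence $\dim R\le 3$ by parity. An ordinary degree count cannot replace this step.
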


Theorem~\ref{thm:C4support} proves the ``at most five weights'' part
suggested by Conjecture~\ref{Conjecture 33}. The remaining issue is an
occurrence problem: for odd $m\ge9$, do all four nonzero values in
\eqref{eq:C4fiveWalsh}, together with $0$, actually occur? We next give
an infinite family for which the answer is affirmative.

For $a\in\F_{2^m}^{*}$, let
$
B_a(x,z)
=
q_a(x+z)+q_a(x)+q_a(z)
$
be the symmetric bilinear form associated with $q_a(z)$, and define
\[
R_{m,a}
=
\left\{
z\in\F_{2^m}:
B_a(x,z)=0
\text{ for all }x\in\F_{2^m}
\right\}.
\]
Thus, $R_{m,a}(z)$ is the radical of $q_a(z)$.
By \cite{WangKadirLiXia2020}, we have
$
\dim_{\F_2}R_{m,a}(z)\in\{1,3\}.
$

\begin{theorem}\label{thm:C4lift}
Let $s\ge5$ and $d\ge1$ be odd integers with $3\nmid d$, and let
$a\in\F_{2^s}^{*}$. Then
$
R_{sd,a}=R_{s,a}.
$
Moreover,
$$
W_{sd}(a)=0
\quad\Longleftrightarrow\quad
W_s(a)=0.
$$
If $W_s(a)\ne0$, then
$
W_{sd}(a)
=
\left(\frac{2}{d}\right)
2^{(sd-s)/2}W_s(a),
$
where $\left(\frac{2}{d}\right)$ denotes the Jacobi symbol.
\end{theorem}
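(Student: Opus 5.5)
The plan is to work directly with the linearized radical equation and to observe that, for parameters $a\in\F_{2^s}^{*}$, the Welch-type exponents reduce compatibly from $\F_{2^{sd}}$ to $\F_{2^s}$. Put $m=sd$ and $k_m=(m+1)/2$. Since $d$ is odd,
\[
k_{sd}=s\cdot\tfrac{d-1}{2}+\tfrac{s+1}{2},\qquad sd-k_{sd}=s\cdot\tfrac{d-1}{2}+\tfrac{s-1}{2}.
\]
Hence $z^{2^{k_{sd}}}=z^{2^{k_s}}$ and $z^{2^{sd-k_{sd}}}=z^{2^{s-k_s}}$ for every $z\in\F_{2^s}$. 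Also $\Tr_{sd}(y)=d\,\Tr_s(y)=\Tr_s(y)$ for $y\in\F_{2^s}$, again because $d$ is odd. Three consequences follow. First, the function $P_{e}(z)=a\bigl(z^{2^{(se+1)/2}+1}+z^3\bigr)+(a+1)z$ on $\F_{2^{se}}$, for $e\mid d$, restricts to $P_{e'}$ on each subfield $\F_{2^{se'}}$ with $e'\mid e$, so this is a compatible family in the sense of Lemma~\ref{lem:lifting}. Second, $Q_a$ on $\F_{2^{sd}}$ restricts to $Q_a$ on $\F_{2^s}$. Third, the radical polynomial
\[
L_{sd}(z)=az^{2^{k}}+(az)^{2^{sd-k}}+az^2+(az)^{2^{sd-1}},\qquad k=k_{sd},
\]
coincides on $\F_{2^s}$ with the corresponding polynomial $L_s$. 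Since $z\in R_{m,a}$ if and only if $L_m(z)=0$, this gives $R_{s,a}\subseteq R_{sd,a}$.

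For the reverse inclusion I will use the Frobenius $\sigma:z\mapsto z^{2^s}$. Because $a\in\F_{2^s}$ and $\sigma$ commutes with every $2$-power map, $\sigma$ preserves $\ker L_{sd}$, so $R_{sd,a}$ is a $\sigma$-stable $\F_2$-space. Its elements outside $\F_{2^s}$ fall into $\sigma$-orbits whose lengths are divisors of $d$ larger than $1$. Since $3\nmid d$, every such length is at least $5$. By \cite{WangKadirLiXia2020}, both radicals have dimension $1$ or $3$. If $\dim R_{s,a}=1$ and $\dim R_{sd,a}=3$, then $R_{sd,a}\setminus\F_{2^s}$ would contain exactly $6$ elements. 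These $6$ elements would have to be partitioned into orbits of odd lengths, each at least $5$, which is impossible. Hence $R_{sd,a}=R_{s,a}$.

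Next I compare the linear parts. On $R:=R_{s,a}=R_{sd,a}\subseteq\F_{2^s}$ the two quadratic functions $Q_a$ agree by the trace remark above. Therefore $Q_a|_R=0$ holds over $\F_{2^{sd}}$ if and only if it holds over $\F_{2^s}$, and Lemma~\ref{lem:quadratic} gives $W_{sd}(a)=0\Leftrightarrow W_s(a)=0$. In the nonzero case $|W_{sd}(a)|=2^{(sd+r)/2}$ and $|W_s(a)|=2^{(s+r)/2}$ with the same $r$. For the sign I will apply Lemma~\ref{lem:lifting} to the family $P_e$ above. Its hypotheses hold for the following reasons: the phase is constant on Frobenius orbits because $\Tr(y^2)=\Tr(y)$; and every intermediate sum lies in $\{0,\pm1,\pm2\}\cdot 2^{(se+1)/2}$ by \eqref{eq:C4fiveWalsh}. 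Alternatively, one can run its proof prime by prime: for each prime $p\mid d$ we have $p\ge5$, so the congruence $c_{vp}\equiv(2/p)c_v\pmod p$, obtained from Euler's criterion, forces equality. Iterating over the prime factors of $d$ yields $W_{sd}(a)=\left(\frac2d\right)2^{(sd-s)/2}W_s(a)$.

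The main obstacle I expect is the radical equality, specifically the exclusion of radical growth in the extension. The orbit-counting argument depends on three ingredients: the dimension bound $\{1,3\}$ from \cite{WangKadirLiXia2020}, $\sigma$-stability, and $3\nmid d$, which rules out orbits of length $3$. I will check carefully that $L_{sd}$ genuinely commutes with $\sigma$, given that its exponents depend on $sd$.
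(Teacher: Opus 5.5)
Your proposal is correct and follows essentially the same route as the paper: the Frobenius $z\mapsto z^{2^s}$ acts on the radical to rule out growth from dimension $1$ to $3$ when $3\nmid d$, and then the prime-by-prime congruence modulo $p\ge5$ with Euler's criterion fixes the sign. The differences are minor. You count orbits, noting that six non-fixed vectors cannot split into odd orbits of length $\ge5$, where the paper uses the odd element orders $1,3,7$ in $\mathrm{GL}_3(\F_2)$; and you obtain the zero-equivalence directly from $Q_a|_R$ on the common radical rather than from the congruence. Note also that your coincidence $L_{sd}=L_s$ on $\F_{2^s}$ gives the equality $R_{sd,a}\cap\F_{2^s}=R_{s,a}$, not just the inclusion, and this equality is exactly what your six-element count needs.
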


\begin{proof}
We first compare the radicals. Let $z\in\F_{2^s}$. Since $d$ is
odd, we have
$$
\frac{sd+1}{2}
=
\frac{s+1}{2}
+
s\frac{d-1}{2}.
$$
Hence, using $z^{2^s}=z$, it is clear that
$
z^{2^{(sd+1)/2}}
=
z^{2^{(s+1)/2}}.
$
Moreover, for every $u\in\F_{2^s}$,
\[
\Tr_{\F_{2^{sd}}/\F_2}(u)
=
d\,\Tr_{\F_{2^s}/\F_2}(u)
=
\Tr_{\F_{2^s}/\F_2}(u),
\]
because $d$ is odd. Therefore, the quadratic form over
$\F_{2^{sd}}$, when restricted to $\F_{2^s}$, is exactly the
corresponding quadratic form over $\F_{2^s}$. Consequently,
$
R_{s,a}\subseteq R_{sd,a}.
$

By \cite{WangKadirLiXia2020}, for every odd $v$ the corresponding
quadratic form has rank $v-1$ or $v-3$. Hence,
$
\dim_{\F_2}R_{v,a}\in\{1,3\}.
$
Thus, if the above inclusion were strict, then necessarily
$
\dim_{\F_2}R_{s,a}=1$
and
$\dim_{\F_2}R_{sd,a}=3.
$

Consider the map
$
\phi(z)=z^{2^s}
$
on $R_{sd,a}$. Since $a\in\F_{2^s}$, the defining equation of
$R_{sd,a}$ is preserved under the map $z\mapsto z^{2^s}$. Hence
\[
z\in R_{sd,a}
\quad\Longrightarrow\quad
z^{2^s}\in R_{sd,a}.
\]
Moreover,
$
\phi^d(z)=z^{2^{sd}}=z.
$
The elements of $\F_{2^{sd}}$ fixed by $\phi$ are exactly those in
$\F_{2^s}$. Therefore,
\[
\{z\in R_{sd,a}:\phi(z)=z\}
=
R_{sd,a}\cap\F_{2^s}
=
R_{s,a}.
\]

Assume that
$
\dim_{\F_2}R_{s,a}=1$
and
$\dim_{\F_2}R_{sd,a}=3.
$
Then $\phi$ is an invertible linear map on the three-dimensional
$\F_2$-space $R_{sd,a}$, and its fixed subspace has dimension $1$.
Since $\phi^d=1$ and $d$ is odd, the order of $\phi$ is odd.

The possible odd orders of an invertible linear map on a
three-dimensional $\F_2$-space are $1$, $3$, and $7$. Order $1$
would fix the whole three-dimensional space, whereas an element of
order $7$ has no nonzero fixed vector. Since the fixed subspace of
$\phi$ has dimension $1$, $\phi$ must have order $3$. Hence,
$
3\mid d,
$
contrary to $3\nmid d$. Therefore,
$
R_{sd,a}=R_{s,a}.
$

We next compare the Walsh sums. Let $\ell$ be an odd prime divisor
of $d$, and consider
$
\F_{2^v}\subseteq\F_{2^{v\ell}},
$
where $v$ is odd. The map
$
x\longmapsto x^{2^v}
$
fixes exactly the elements of $\F_{2^v}$. Since $\ell$ is prime,
every element outside $\F_{2^v}$ belongs to a set of $\ell$
distinct elements
$
x,\ x^{2^v},\ x^{2^{2v}},\ldots,
x^{2^{(\ell-1)v}}.
$
The corresponding terms in the Walsh sum are equal. Indeed,
$a\in\F_{2^v}$ and the absolute trace is invariant under the map
$u\mapsto u^{2^v}$.

For $x\in\F_{2^v}$, since $\ell$ is odd,
\[
\frac{v\ell+1}{2}
=
\frac{v+1}{2}
+
v\frac{\ell-1}{2},
\]
and hence,
$
x^{2^{(v\ell+1)/2}}
=
x^{2^{(v+1)/2}}.
$
Also, for every $u\in\F_{2^v}$,
$
\Tr_{\F_{2^{v\ell}}/\F_2}(u)
=
\Tr_{\F_{2^v}/\F_2}(u).
$
Thus, the contribution of the elements of $\F_{2^v}$ to
$W_{v\ell}(a)$ is exactly $W_v(a)$, while every remaining set
contributes a multiple of $\ell$. Therefore,
\[
W_{v\ell}(a)\equiv W_v(a)\pmod{\ell}.
\]

By Lemma \ref{lem:quadratic}, the Walsh sum is either zero or has absolute value
$
2^{(v+\dim_{\F_2}R_{v,a})/2}.
$
Since $\ell$ is odd, every nonzero value of this form is not
divisible by $\ell$. Hence the above congruence gives
\[
W_{v\ell}(a)=0
\quad\Longleftrightarrow\quad
W_v(a)=0.
\]

Suppose now that $W_v(a)\ne0$. Since $3\nmid\ell$, the first part
of the proof gives
$
R_{v\ell,a}=R_{v,a}.
$
Set
$
r=\dim_{\F_2}R_{v,a}
=
\dim_{\F_2}R_{v\ell,a}.
$
By Lemma \ref{lem:quadratic},, there exist
$\varepsilon_v,\varepsilon_{v\ell}\in\{\pm1\}$ such that
$
W_v(a)=\varepsilon_v2^{(v+r)/2}$
and
$W_{v\ell}(a)
=
\varepsilon_{v\ell}2^{(v\ell+r)/2}.
$
Using
$
W_{v\ell}(a)\equiv W_v(a)\pmod{\ell},
$
we obtain
$
\varepsilon_{v\ell}
2^{v(\ell-1)/2}
\equiv
\varepsilon_v
\pmod{\ell}.
$
Since $v$ is odd, Euler's criterion gives
\[
2^{v(\ell-1)/2}
\equiv
\left(\frac{2}{\ell}\right)
\pmod{\ell}.
\]
Hence
$
\varepsilon_{v\ell}
=
\left(\frac{2}{\ell}\right)\varepsilon_v,
$
and therefore,
$
W_{v\ell}(a)
=
\left(\frac{2}{\ell}\right)
2^{v(\ell-1)/2}W_v(a).
$

Applying this argument successively to the prime factors of $d$
gives
$
W_{sd}(a)=0
\quad\Longleftrightarrow\quad
W_s(a)=0,
$
and, when $W_s(a)\ne0$,
\[
W_{sd}(a)
=
\left(\frac{2}{d}\right)
2^{(sd-s)/2}W_s(a).
\]
This completes the proof.
\end{proof}

For dimension $9$, the Walsh spectrum contains the four nonzero values
$\pm32$ and $\pm64$ \cite{HellesethLiXia2025}. These four base-field
values can be lifted to an infinite family.

\begin{theorem}\label{thm:C4family}
Let $ m=9d,$ where $d$ is odd and $3\nmid d$.
Then $\mathcal C_{D_4}$ has exactly the five nonzero weights in
\eqref{eq:C4candidateweights}. Conjecture~\ref{Conjecture 33} holds for this infinite family.
\end{theorem}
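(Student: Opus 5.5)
By Theorem~\ref{thm:C4support}, it suffices to prove that, for $m=9d$, all five values
\[
0,\qquad \pm2^{(m+1)/2},\qquad \pm2^{(m+3)/2}
\]
occur among $W_m(a)$ with $a\in\F_{2^m}^{*}$. The weight formula \eqref{eq:C4weightW} then turns these into exactly the five weights in \eqref{eq:C4candidateweights}. The plan is to find witnesses over the base field $\F_{2^9}$ and transport them to $\F_{2^{9d}}$ with Theorem~\ref{thm:C4lift}, taking $s=9$.

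\textbf{Base field.} By \cite{HellesethLiXia2025}, the four nonzero values $\pm32=\pm2^{(9+1)/2}$ and $\pm64=\pm2^{(9+3)/2}$ all occur in $\{W_9(a):a\in\F_{2^9}^{*}\}$; this can also be confirmed by a Magma check. The value $0$ needs a separate argument, because \cite{HellesethLiXia2025} is only cited for the nonzero values. We use the moment identities of Proposition~\ref{prop:freq}, which apply since $\Phi$ is two-to-one. Over $\F_{2^m}$ they give $N_0=2^{m-1}-1+3B\ge 2^{m-1}-1>0$, so $0$ occurs directly in every odd dimension $m$. Hence no lifting is needed for $0$.

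\textbf{Lifting the nonzero values.} Fix $a_1,\dots,a_4\in\F_{2^9}^{*}$ with $W_9(a_i)\in\{32,-32,64,-64\}$ realising all four values. Since $d$ is odd and $3\nmid d$, Theorem~\ref{thm:C4lift} applies to each $a_i\in\F_{2^9}^{*}\subseteq\F_{2^{9d}}^{*}$ and gives
\[
W_{9d}(a_i)=\left(\frac2d\right)2^{(9d-9)/2}\,W_9(a_i).
\]
Multiplication by $2^{(9d-9)/2}$ sends $\pm2^5$ to $\pm2^{(m+1)/2}$ and $\pm2^6$ to $\pm2^{(m+3)/2}$. The Jacobi symbol is a common sign $\pm1$, so it at most swaps the two signs within each absolute value. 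The image of $\{a_1,\dots,a_4\}$ therefore still realises all four nonzero values $\pm2^{(m+1)/2}$ and $\pm2^{(m+3)/2}$.

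\textbf{Conclusion and main obstacle.} Combining the two steps, all five character-sum values occur, so all five weights in \eqref{eq:C4candidateweights} occur. Since $9d\ge9$, this is the five-weight assertion of Conjecture~\ref{Conjecture 33} for this family. The only real difficulty is the base-field input: the claim that the full four-value nonzero spectrum occurs in dimension $9$ for the specific one-parameter family $Q_a$, including the linear term $(a+1)z$. I would verify this with an explicit Magma computation of $W_9(a)$ and would not rely solely on the citation. The lifting step itself is already handled entirely by Theorem~\ref{thm:C4lift}.
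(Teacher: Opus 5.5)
Your proposal is correct and follows essentially the same route as the paper's proof. The ingredients are the same: Magma witnesses $a_1,\dots,a_4\in\F_{2^9}^{*}$ for $\pm32,\pm64$; lifting to $\F_{2^{9d}}$ by Theorem~\ref{thm:C4lift} up to the common sign $\left(\frac{2}{d}\right)$; the value $0$ from $N_0=2^{m-1}-1+3B>0$ in Proposition~\ref{prop:freq}; and Theorem~\ref{thm:C4support} to rule out all other values. You are right to insist that the base-field input be checked for the specific one-parameter family $Q_a$ rather than read off from the full Walsh spectrum of $G$, since the paper's proof also rests on that direct computation.
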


\begin{proof}
A direct Magma computation \cite{BosmaCannonPlayoust1997} over $\F_{2^9}$ shows that the Walsh
spectrum contains
$
32,\ -32,\ 64,\ -64.
$
Thus, there exist
$a_1,a_2,a_3,a_4\in\F_{2^9}^{*}$ such that
\[
W_9(a_1)=32,\qquad
W_9(a_2)=-32,\qquad
W_9(a_3)=64,\qquad
W_9(a_4)=-64.
\] Since $3\nmid d$,
Theorem~\ref{thm:C4lift} preserves the corresponding radical
dimensions and lifts these four values, up to the common Jacobi sign
$\left(\frac2d\right)$, to
$
\pm2^{(m+1)/2},
$ and
$\pm2^{(m+3)/2}.
$
By Proposition~\ref{prop:freq}, $N_0=2^{m-1}-1+3B>0$, so the value $0$ also occurs. Theorem~\ref{thm:C4support} shows that
no other Walsh values, and hence no other nonzero weights, are
possible. Therefore all five weights in \eqref{eq:C4candidateweights}
occur.
\end{proof}

\begin{remark}
The preceding theorem proves the five-weight assertion of
Conjecture~\ref{Conjecture 33} for the infinite family
$m=9d$ with $d$ odd and $3\nmid d$. It does not prove the conjecture
for every odd $m\ge9$. When the extension factor is divisible by $3$,
the radical dimension may change, so an additional analysis is needed.
Thus the general occurrence problem in the remaining odd dimensions
is still separate from the universal five-value restriction proved in
Theorem~\ref{thm:C4support}.
\end{remark}

\subsection{Conjecture 34: The Dillon--Dobbertin family}

The fifth family is derived from the Dillon--Dobbertin APN power function \cite{DillonDobbertin2004,Ding2016}. Let $m$ be odd, $\gcd(h,m)=1$, and put
\[
 d_h=2^{2h}-2^h+1.
\]
Define
\begin{equation*}
 D_5
 =
 \left\{
 (x+1)^{d_h}+x^{d_h}+1:
 x\in\F_{2^m}
 \right\},
\end{equation*}
and let
\begin{equation*}
\C_{D_5}
=
\left\{
 \mathbf{c}(a)=\bigl(\Tr(ad)\bigr)_{d\in D_5}:
 a\in\F_{2^m}
\right\}.
\end{equation*}
Thus, Conjecture~34 in \cite{Ding2016} concerns the number of distinct nonzero Hamming
weights occurring in the trace code $\C_{D_5}$, which is stated as follows.

\begin{conj}(Conjecture 34 \cite{Ding2016})\label{Conjecture 34}
Assume $m$ is odd and $\gcd(h,m)=1$. For $h=1$, the code $\C_{D_5}$ is predicted to
have parameters
$
[2^{m-1},m-1,2^{m-2}].
$
For $h\ge2$, the code $\C_{D_5}$ is predicted to have parameters
$[2^{m-1},m]$ and three or five nonzero weights. A further clause
predicts a three-weight distribution when $h=3$ and $3\nmid m$.
\end{conj}

We first give an example to show that the general three-or-five-weight assertion in
Conjecture~\ref{Conjecture 34} is false.

\begin{example}\label{ex:C5counterexample}
Let
$
m=25$ and $h=4$.
A direct Magma computation \cite{BosmaCannonPlayoust1997} over $\F_{2^{25}}$ shows that
the corresponding quadratic sums $G_4(b)$ attain values producing
the six distinct weights
$
2^{23}$,
$2^{23}\pm2^{11}$,
$2^{23}\pm2^{12}$ and
$2^{23}+2^{13}.
$
Therefore, the assertion that $\mathcal C_{D_5}$ always has three or five nonzero weights is false.
\end{example}

Although Conjecture~\ref{Conjecture 34} is false in general, the same
quadratic reduction yields a uniform restriction on the possible
weights. To study the conjecture, we first express the codeword
weights in terms of a quadratic character sum. This makes the relation
between the code construction and the later quadratic-form analysis
explicit.

For a fixed $a\in\F_{2^m}^{*}$, put
$
b=a^{-1/(2^h+1)}.
$
Define
\begin{equation}\label{eq:C5Gh}
 G_h(b)
 =
 \sum_{z\in\F_{2^m}}
 \chi\!\left(z^{2^h+1}+bz^3\right).
\end{equation}

\begin{proposition}\label{prop:C5reduction}
For every $a\in\F_{2^m}^{*}$,
\begin{equation}\label{eq:C5weight}
 \wt(\mathbf{c}(a))
 =
 2^{m-2}
 -\frac14
 G_h\!\left(b\right).
\end{equation}
Consequently, determining the nonzero weights of $\C_{D_5}$ is
equivalent to determining the values of $G_h(b)$.
\end{proposition}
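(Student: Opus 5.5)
The plan is to start from the two-to-one weight formula and then convert the resulting character sum into the quadratic sum $G_h$ by an algebraic substitution together with a scaling. First we check the hypotheses of Lemma~\ref{lem:2to1} for $\Phi(x)=(x+1)^{d_h}+x^{d_h}+1$. Since $\gcd(h,m)=1$, the Dillon--Dobbertin (Kasami) power $x^{d_h}$ is APN on $\F_{2^m}$~\cite{DillonDobbertin2004}. Hence its derivative in direction $1$ is two-to-one, and so is $\Phi$. Moreover $\Phi(0)=1+0+1=0$. Lemma~\ref{lem:2to1} then gives
\[
\wt(\mathbf{c}(a))=2^{m-2}-\tfrac14T(a),\qquad T(a)=\sum_{x\in\F_{2^m}}\chi(a\Phi(x)),
\]
for every $a\ne0$. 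It therefore suffices to prove $T(a)=G_h(b)$ with $b=a^{-1/(2^h+1)}$. This exponent is well defined, because $\gcd(2^h+1,2^m-1)=1$ when $m$ is odd and $\gcd(h,m)=1$.

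The core step is the Dillon--Dobbertin identity for the derivative of $x^{d_h}$. It uses $(2^h+1)d_h=2^{3h}+1$ and is the same identity that will later be used for Conjecture~37. In the form we need, it says the following: after a rational substitution of the type $x=1/(y^{2^h}+y)$, which maps the relevant domain two-to-one or bijectively, the value $\Phi(x)$ becomes a Gold-type expression in a new variable $z$. The result is
\[
\sum_{x}\chi(a\Phi(x))=\sum_{z\in\F_{2^m}}\chi\bigl(a\,(z^{2^h+1}+z^3)^{\,\cdot}\bigr)
\]
in a form where only the monomials $z^{2^h+1}$ and $z^3$ survive, with coefficients that are powers of $a$. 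The exceptional points of the substitution, namely $x\in\F_2$ and the poles, must be tracked separately. Their contribution has to cancel by the orthogonality relation, using $\sum_z\chi(cz)=0$ for $c\ne0$.

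The final step is a normalising substitution $z=\lambda w$. We choose $\lambda$ so that the coefficient of $w^{2^h+1}$ becomes $1$, that is, $\lambda^{2^h+1}=a^{-1}$, so $\lambda=b$ up to the choice of branch. The coefficient of $w^3$ then becomes $b$ after the Frobenius twist $\Tr(cw^{2^j})=\Tr(c^{2^{m-j}}w)$. This rewrites the sum as exactly $G_h(b)$ in~\eqref{eq:C5Gh}, and the bijection $a\mapsto b$ on $\F_{2^m}^*$ gives the final equivalence claim.

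I expect the main obstacle to be the explicit rational identity relating $(x+1)^{d_h}+x^{d_h}+1$ to $z^{2^h+1}+z^3$, and in particular the bookkeeping of exponents modulo $2^m-1$ needed to land precisely on the normalisation $b=a^{-1/(2^h+1)}$. I would first try to import this identity verbatim from~\cite{DillonDobbertin2004,Hertel2006}. I would then verify it numerically for small $m$, say $m=5,7$ with $h=2,3$, before fixing the exact power of $a$ that appears.
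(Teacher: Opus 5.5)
Your skeleton is the paper's proof. The APN property of $x^{d_h}$ makes $\Phi$ two-to-one with $\Phi(0)=0$, and Lemma~\ref{lem:2to1} then gives $\wt(\mathbf{c}(a))=2^{m-2}-\frac14T(a)$. Everything else is the identity $T(a)=G_h(a^{-1/(2^h+1)})$, which the paper does not derive either but quotes as \eqref{eq:C5transform} from \cite{DillonDobbertin2004,CanteautEtAl2021}. Your first step is correct, and you check $\Phi(0)=0$ and $\gcd(2^h+1,2^m-1)=1$, which the paper leaves implicit.

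The gap is that you never state the identity you intend to import. Your display has an unspecified exponent, the power of $a$ is left to be ``fixed'' numerically, and the final normalization is asserted rather than derived. Beyond Lemma~\ref{lem:2to1}, the proposition consists of exactly this normalization, so the argument is incomplete as written. Concretely, $z=bw$ with $b^{2^h+1}=a^{-1}$ turns $\sum_z\chi(az^{2^h+1}+cz^3)$ into $\sum_w\chi(w^{2^h+1}+cb^3w^3)$. Its summand is that of $G_h(b)$ exactly when $c=b^{-2}=a^{2/(2^h+1)}$, and no Frobenius twist enters. So you need a source, or a proof, giving precisely $T(a)=\sum_z\chi(az^{2^h+1}+a^{2/(2^h+1)}z^3)$, equivalently $T(a)=\sum_w\chi(a^{1/3}w^{2^h+1}+w^3)$.

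The mechanism you sketch for that identity would also not work as described. Since $\gcd(h,m)=1$, the map $y\mapsto y^{2^h}+y$ is two-to-one onto $\ker\Tr$. Hence $x=1/(y^{2^h}+y)$ only reaches those $x\ne0$ with $\Tr(1/x)=0$, and the other half of $\F_{2^m}$ needs a separate parametrization. Moreover, the exceptional summands ($x\in\F_2$, the poles, $z=0$) are a bounded number of isolated terms. An orthogonality relation is a statement about complete sums and does not dispose of them; they must be matched explicitly. Either replace the sketch by the precise citation, as the paper does, or actually prove the identity.

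For a hand check of the normalization in a nontrivial case, take $h=2$. By the proof of Proposition~\ref{prop:C3C5equiv}, $D_5^{\,2}$ is the image of the two-to-one map $y\mapsto y^6+y^5$. Hence $T(a)=2\sum_{d\in D_5}\chi(a^2d^2)=\sum_{y\in\F_{2^m}}\chi(a^2(y^6+y^5))=\sum_{y\in\F_{2^m}}\chi(a^2y^5+ay^3)$. The substitution $y=a^{-2/5}z$ then gives exactly $G_2(a^{-1/5})$.
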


\begin{proof}
Since $x^{d_h}$ is APN for $\gcd(h,m)=1$
\cite{DillonDobbertin2004}, the derivative
\[
 x\longmapsto (x+1)^{d_h}+x^{d_h}+1
\]
is two-to-one. Hence Lemma~\ref{lem:2to1} gives, for $a\ne0$,
\begin{equation}\label{eq:C5derivativesum}
 \wt(\mathbf{c}(a))
 =
 2^{m-2}
 -\frac14
 \sum_{x\in\F_{2^m}}
 \chi\!\left(
 a\bigl((x+1)^{d_h}+x^{d_h}+1\bigr)
 \right).
\end{equation}
For the Dillon--Dobbertin exponent $d_h=2^{2h}-2^h+1$, the
Walsh-transform identity for this derivative family gives
\begin{equation}\label{eq:C5transform}
 \sum_{x\in\F_{2^m}}
 \chi\!\left(
 a\bigl((x+1)^{d_h}+x^{d_h}+1\bigr)
 \right)
 =
 G_h\!\left(b\right);
\end{equation}
see \cite{DillonDobbertin2004,CanteautEtAl2021}. Substituting
\eqref{eq:C5transform} into \eqref{eq:C5derivativesum} yields
\eqref{eq:C5weight}.
\end{proof}

Let
\begin{equation*}
 Q_b(z)
 =
 \Tr\!\left(z^{2^h+1}+bz^3\right).
\end{equation*}
From (\ref{eq:C5Gh}) and Proposition \ref{prop:C5reduction}, the weight problem in Conjecture~\ref{Conjecture 34} is reduced
to the quadratic Boolean functions $Q_b(z)$.
Let
\[
B_b(x,z)=Q_b(x+z)+Q_b(x)+Q_b(z)
\]
be the associated bilinear form, and define
\[
R_b
=
\left\{
 z\in\F_{2^m}:B_b(x,z)=0
 \text{ for all }x\in\F_{2^m}
\right\}.
\]
Hence $R_b$ is the radical of $Q_b$. A direct expansion of $B_b$
shows that every $z\in R_b$ satisfies
\begin{equation}\label{eq:C5rad}
 z^{2^{2h}}
 +b^{2^h}z^{2^{h+1}}
 +b^{2^{h-1}}z^{2^{h-1}}
 +z
 =0.
\end{equation}
Equation~\eqref{eq:C5rad} will control the possible values of
$G_h(b)$, and hence the possible weights of $\C_{D_5}$.

\begin{theorem}\label{thm:C5h3}
Let $m\ge5$ be odd and $3\nmid m$, and let $h=3$. Then
$\mathcal C_{D_5}$ is a three-weight binary linear code with
parameters
\[
\left[
2^{m-1},
m,
2^{m-2}-2^{(m-3)/2}
\right].
\]
Its weight enumerator is
\[
\begin{aligned}
1+\left(2^{m-2}+2^{(m-3)/2}\right)
 z^{\,2^{m-2}-2^{(m-3)/2}}+\left(2^{m-1}-1\right)z^{\,2^{m-2}}+\left(2^{m-2}-2^{(m-3)/2}\right)
 z^{\,2^{m-2}+2^{(m-3)/2}}.
\end{aligned}
\]
Consequently, the special case $h=3$ and $3\nmid m$ in
Conjecture~\ref{Conjecture 34} holds.
\end{theorem}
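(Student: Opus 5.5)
Via Proposition~\ref{prop:C5reduction}, the weight $\wt(\mathbf{c}(a))$ equals $2^{m-2}-\frac14G_3(b)$ with $b=a^{-1/9}$, and $a\mapsto b$ permutes $\F_{2^m}^{*}$ because $\gcd(9,2^m-1)=1$ for odd $m$. So it suffices to show that, for $3\nmid m$, the sums $G_3(b)$ with $b\ne0$ take only the values $0,\pm2^{(m+1)/2}$. After that, the frequency count copies the proof of Theorem~\ref{thm:C3eta1}. Set $N_\pm=\#\{a\ne0:T(a)=\pm2^{(m+1)/2}\}$ and $N_0=\#\{a\ne0:T(a)=0\}$. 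The zeroth moment and the first two moment identities of Lemma~\ref{lem:2to1} then give
\[
N_\pm=2^{m-2}\pm2^{(m-3)/2},\qquad N_0=2^{m-1}-1,
\]
which is the stated enumerator. The length $2^{m-1}$ holds because $x^{d_3}$ is APN, so the derivative is two-to-one. The minimum weight $2^{m-2}-2^{(m-3)/2}>0$ gives dimension $m$.

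The key step is to show that the radical $R_b$ of $Q_b(z)=\Tr(z^9+bz^3)$ has dimension exactly $1$ for every $b\ne0$ when $3\nmid m$. Lemma~\ref{lem:quadratic} then gives $G_3(b)\in\{0,\pm2^{(m+1)/2}\}$. The plan is to take equation~\eqref{eq:C5rad} with $h=3$,
\[
z^{64}+b^{8}z^{16}+b^{4}z^{4}+z=0,
\]
whose solution set has dimension at most $6$, and which is odd since $m$ is odd. I would exclude dimensions $3$ and $5$ as follows. The equation depends only on $w=z^{4}$ composed with the $\F_{4}$-structure: it is a $2^{2}$-linearized polynomial in $z$, namely $z^{4^3}+b^8z^{4^2}+b^4z^{4}+z$. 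Its root set in $\overline{\F}_2$ is an $\F_4$-vector space of dimension $3$, which is stable under $\sigma:z\mapsto z^{4}$ composed with the coefficient twist. Intersecting with $\F_{2^m}$, I would argue that because $m$ is odd, $\F_4\cap\F_{2^m}=\F_2$. The $\F_2$-dimension of the roots in $\F_{2^m}$ then equals the $\F_4$-dimension of the roots in $\F_{2^{2m}}$ that are fixed by the semilinear map $z\mapsto z^{2^m}$.

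This is the main obstacle: getting down to dimension $1$ rather than $\le3$. The natural first tool is the method already used in Theorem~\ref{thm:C4lift}. Frobenius acts on a radical of dimension $3$, and the condition $3\nmid m$ should kill the order-$3$ action. Concretely, I would try to factor the linearized polynomial as a composition $L_2\circ L_1$ with $L_1(z)=z^4+cz$, where $c^3$ is related to $b$. A root space of dimension $\ge3$ would then force an element $c$ with $c^{(2^m-1)/\gcd}=1$ lying in $\F_{8}$, or would force a cube root of unity in the relevant field. Since $3\nmid m$, $\F_8\not\subseteq\F_{2^m}$, and since $m$ is odd, $\F_4\not\subseteq\F_{2^m}$; either way the extra roots are excluded. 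I expect this factorization and field-of-definition argument to take most of the work.

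If it resists, the fallback is the Dillon--Dobbertin route. One can use the known fact that for $3\nmid m$ the Kasami-type exponent with $h=3$ relates to the Gold exponent $2^3+1$ with $\gcd(3,m)=1$, whose quadratic forms $\Tr(az^{9})$ always have a one-dimensional radical. The Walsh identity of \cite{DillonDobbertin2004} then transfers the three-valued spectrum directly, as in the Conjecture~37 treatment.
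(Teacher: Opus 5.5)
Your reduction to $G_3(b)$, the length and dimension arguments, and the moment count giving $N_\pm=2^{m-2}\pm2^{(m-3)/2}$ and $N_0=2^{m-1}-1$ are all fine once three-valuedness is known. The gap is the step you call the key step. It is false that the radical $R_b$ of $Q_b(z)=\Tr(z^9+bz^3)$ is one-dimensional for every $b\ne0$, so no factorization of $z^{64}+b^8z^{16}+b^4z^4+z$ and no field-of-definition argument can prove it.

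To see this, put $S(b,c)=\sum_{z\in\F_{2^m}}\chi(z^9+bz^3+cz)$ and $r_b=\dim_{\F_2}R_b$. For fixed $b$, exactly $2^{m-r_b}$ values of $c$ give $S(b,c)^2=2^{m+r_b}$, and all others give $0$. Hence $\sum_cS(b,c)^4=2^{3m+r_b}$. On the other hand, use orthogonality in $b$ and $c$ together with $(x+y+w)^3+x^3+y^3+w^3=(x+y)(y+w)(w+x)$. Only the $3\cdot2^{2m}-2^{m+1}$ trivial quadruples, with the $z_i$ equal in pairs, contribute, each with phase $1$. Hence
\[
2^{3m}\sum_{b\in\F_{2^m}}2^{r_b}=2^{2m}\left(3\cdot2^{2m}-2^{m+1}\right),
\qquad\text{i.e.}\qquad
\sum_{b\in\F_{2^m}}\left(2^{r_b}-2\right)=2^m-2>0 .
\]
Since $r_0=\gcd(3,m)=1$, there are nonzero $b$ with $r_b\ge3$. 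For $m=5$ there are exactly five of them, each with $r_b=3$. The three-valued conclusion still holds, but only because $Q_b$ does not vanish on $R_b$ for those $b$, so that $G_3(b)=0$. A radical-dimension argument cannot detect this.

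The missing idea is a one-line substitution, and it is exactly the paper's proof. Since $m$ is odd, $\gcd(3,2^m-1)=1$, so $y=z^3$ permutes $\F_{2^m}$ and $z^9+bz^3=y^3+by$. Thus $G_3(b)=\sum_{y}\chi(y^3+by)$ is the Walsh transform of the Gold function $\Tr(y^3)$ at $b$. Its radical is $\F_2$, so $G_3(b)\in\{0,\pm2^{(m+1)/2}\}$, nonzero exactly when $\Tr(b)=1$. Your moment computation then gives the stated enumerator.

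Your fallback does not supply this step. $Q_b$ is not of the form $\Tr(az^9)$ plus a linear term, so the one-dimensional radicals of those Gold forms say nothing directly about $G_3(b)$. Also, no further Dillon--Dobbertin identity is needed once Proposition~\ref{prop:C5reduction} has been applied.
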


\begin{proof}
By Proposition~\ref{prop:C5reduction}, we have
\begin{equation}\label{eq:sd0928}
\wt(\mathbf{c}(a))
=
2^{m-2}
-\frac14G_3\!\left(b\right),
\end{equation}
where
$
G_3(b)
=
\sum_{z\in\F_{2^m}}
\chi(z^9+bz^3)
$ and $b=a^{-1/9}$.
Since $m$ is odd and $3\nmid m$, it is clear that $z\mapsto z^3$ and $a\mapsto a^{-1/9}$ are permutations of
$\F_{2^m}$. With $y=z^3$,
\[
G_3(b)
=
\sum_{y\in\F_{2^m}}
\chi(y^3+by),
\]
which is the Walsh transform of the Gold function
$\Tr(y^3)$.
For odd $m$, from \cite{Gold1968,Cosgun2018}, we know that when $b$ runs through $\F_{2^m}^*$, the values $0$, $2^{(m+1)/2}$ and $-2^{(m+1)/2}$ in $G_3(b)$ occur $2^{m-1}-1$, $2^{m-2}+2^{(m-3)/2}$ and $2^{m-2}-2^{(m-3)/2}$ times, respectively.
From (\ref{eq:sd0928}), the weight enumerator of $\mathcal{C}_{D_5}$ can be obtained.
\end{proof}

\begin{theorem}\label{thm:C5general}
Let $m\ge5$ be odd and $\gcd(h,m)=1$.
\begin{enumerate}
\item If $h\equiv\pm1\pmod m$, then
$
D_5=\ker\Tr,
$
and $\C_{D_5}$ has parameters
$
[2^{m-1},m-1,2^{m-2}],
$
with weight enumerator
$
1+\left(2^{m-1}-1\right)z^{2^{m-2}}.
$

\item Otherwise, choose the equivalent exponent representative satisfying
$
2\le h\le\frac{m-1}{2}.
$
Then $\C_{D_5}$ has length $2^{m-1}$, dimension $m$, minimum distance
at least $2^{m-3}$, and at most $2h+1$ nonzero weights. More
precisely, every nonzero weight belongs to
\begin{equation*}
\left\{2^{m-2}\right\}
\cup
\left\{
2^{m-2}\pm2^{(m+r-4)/2}:
 r=1,3,\ldots,2h-1
\right\}.
\end{equation*}
\end{enumerate}
\end{theorem}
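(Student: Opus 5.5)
Our approach starts from Proposition~\ref{prop:C5reduction}, which gives $\wt(\mathbf{c}(a))=2^{m-2}-\frac14G_h(b)$ with $b=a^{-1/(2^h+1)}$. As $a$ runs over $\F_{2^m}^{*}$, $b$ runs over $\F_{2^m}^{*}$, because $\gcd(2^h+1,2^m-1)=1$ for odd $m$. We first normalize $h$. The exponent $d_h$ only depends on $h$ up to the symmetries $h\mapsto h+m$ and $h\mapsto m-h$ (more precisely, $d_{m-h}$ is a cyclotomic shift of $d_h$: $2^{2h}d_{m-h}\equiv d_h \pmod{2^m-1}$, so $D_5$ is unchanged up to Frobenius, which induces a permutation equivalence). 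We check this directly, so that $h$ may be reduced modulo $m$ and replaced by $m-h$. This leaves either $h\equiv\pm1$ or a representative $2\le h\le (m-1)/2$.

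For part (1) we take $h=1$. Then $d_1=3$ and $(x+1)^3+x^3+1=x^2+x$. Hence $D_5=\{x^2+x\}=\ker\Tr$ by the same argument used for $L$ in the Welch section. The code $\C_{D_5}$ is then the code of all $\Tr(ay)$ restricted to the hyperplane $\ker\Tr$. Its kernel is $\{0,1\}$, so the dimension is $m-1$. For $a\notin\{0,1\}$, $\Tr(ay)$ is a nonzero linear functional on $\ker\Tr$, so it has weight exactly $2^{m-2}$. This gives the weight enumerator $1+(2^{m-1}-1)z^{2^{m-2}}$. Alternatively, we can simply note that $G_1(b)=\sum\chi((1+b)z^3)$ and use Gold facts, but the direct argument is cleaner.

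For part (2) we apply Lemma~\ref{lem:quadratic} to $Q_b(z)=\Tr(z^{2^h+1}+bz^3)$. This gives $G_h(b)\in\{0\}\cup\{\pm2^{(m+r)/2}\}$ with $r=\dim R_b$ odd. The key step is to bound $r$. Every $z\in R_b$ is a root of the linearized polynomial \eqref{eq:C5rad}. That polynomial is nonzero, since its leading term is $z^{2^{2h}}$, and its $2$-degree is $2h$. Its roots form an $\F_2$-space of dimension at most $2h$. As $r$ is odd, $r\in\{1,3,\dots,2h-1\}$. We expect the main obstacle to be justifying \eqref{eq:C5rad} as a genuine, nonzero equation with the stated $2$-degree, in the normalized range $h\le (m-1)/2$. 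For this we would compute $L(z)$ exactly as in the preliminaries, with $c=1$ and $d=b$. This gives $L(z)=z^{2^h}+z^{2^{m-h}}+bz^2+(bz)^{2^{m-1}}$. We then raise $L(z)$ to the $2^h$-th power, which produces exponents $2^{2h},2^{h+1},2^{h-1},2^0$. The requirement $2h<m$ guarantees that this really has $2$-degree $2h$, rather than wrapping around modulo $m$.

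Substituting into the weight formula gives $\wt=2^{m-2}\mp\frac14 2^{(m+r)/2}=2^{m-2}\mp2^{(m+r-4)/2}$, which is the stated set of at most $1+2h$ weights. The smallest possible weight is $2^{m-2}-2^{(m+2h-5)/2}\ge2^{m-2}-2^{m-3}=2^{m-3}$, since $2h-1\le m-2$. This gives the minimum-distance bound. Because this bound is positive, no $a\ne0$ yields the zero codeword, so the dimension is $m$. The length is $2^{m-1}$ by the two-to-one property already used in Proposition~\ref{prop:C5reduction}.
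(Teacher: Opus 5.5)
Your proposal is correct and follows essentially the same route as the paper. For $h\equiv\pm1$ you reduce to $x^2+x$ and the trace-zero hyperplane; your shift $2^{2h}d_{m-h}\equiv d_h$ is the general form of the paper's $4d_h\equiv3$, and you should add the one-line remark that $\ker\Tr$ is Frobenius-invariant to get $D_5=\ker\Tr$ exactly rather than up to equivalence. Part (2) uses the same chain as the paper (Proposition~\ref{prop:C5reduction}, the $2$-degree-$2h$ radical equation \eqref{eq:C5rad}, oddness of $\dim R_b$, and Lemma~\ref{lem:quadratic}), and your explicit check that $2h-1\le m-2$ gives the minimum-distance bound $2^{m-3}$, which the paper leaves implicit.
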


\begin{proof}
We first consider $h\equiv\pm1\pmod m$.
If $h\equiv1\pmod m$, then
\[
d_h=2^{2h}-2^h+1\equiv3\pmod{2^m-1}.
\]
Hence,
$
(x+1)^{d_h}+x^{d_h}+1
=
(x+1)^3+x^3+1
=
x^2+x.
$
The standard trace criterion for the Artin--Schreier map gives \cite[Ch.~3]{LidlNiederreiter1997}
\[
\{x^2+x:x\in\F_{2^m}\}=\ker\Tr.
\]
Therefore,
$
D_5=\ker\Tr.
$

If $h\equiv-1\pmod m$, then
$
4d_h\equiv3\pmod{2^m-1}.
$
Thus, for
$
F_h(x)=(x+1)^{d_h}+x^{d_h}+1,
$
we have
$
F_h(x)^4
=
(x+1)^{4d_h}+x^{4d_h}+1
=
(x+1)^3+x^3+1
=
x^2+x.
$
Since $x\mapsto x^4$ is a permutation of $\F_{2^m}$ and
$\ker\Tr$ is invariant under this map, it follows again that
$
D_5=\ker\Tr.
$

Hence, in both cases, $D_5=\ker\Tr$. The corresponding trace code is
the standard code defined by the trace-zero hyperplane and has
parameters
$
[2^{m-1},m-1,2^{m-2}]
$
with weight enumerator
$
1+\left(2^{m-1}-1\right)z^{2^{m-2}};
$
see, for example, \cite{Ding2016}.

We now consider the remaining case. Replacing $h$ by an equivalent
representative if necessary, we may assume
$
2\le h\le\frac{m-1}{2}.
$
The Dillon--Dobbertin power
$
x^{2^{2h}-2^h+1},
$
is APN when $\gcd(h,m)=1$ \cite{DillonDobbertin2004}. Hence, for
every nonzero difference, its derivative is two-to-one. In
particular, the map defining $D_5$ is two-to-one, and therefore
$
|D_5|=2^{m-1}.
$
Thus, $\mathcal C_{D_5}$ has length $2^{m-1}$.

By Proposition~\ref{prop:C5reduction}, for every
$a\in\F_{2^m}^{*}$ the Hamming weight of $\mathbf{c}(a)$ is determined by
the quadratic sum $G_h(b)$. Let
$
R_b
$
be the radical of the quadratic Boolean function
\[
Q_b(z)=\Tr\!\left(z^{2^h+1}+bz^3\right).
\]
By equation~\eqref{eq:C5rad}, every element of $R_b$ satisfies a
nonzero linearized equation of $2$-degree $2h$. Hence
$
\dim_{\F_2}R_b\le2h.
$

The bilinear form associated with $Q_b$ is alternating, so its rank
is even. Since $m$ is odd,
\[
\dim_{\F_2}R_b
=
m-\operatorname{rank}(B_b)
\]
is odd. Consequently,
$
\dim_{\F_2}R_b
\in
\{1,3,\ldots,2h-1\}.
$

By Lemma~\ref{lem:quadratic}, $G_h(b)$ is either zero or
\[
|G_h(b)|
=
2^{(m+r)/2},
\qquad
r\in\{1,3,\ldots,2h-1\}.
\]
Substituting these possible values into
\eqref{eq:C5weight}, every nonzero weight of
$\mathcal C_{D_5}$ belongs to
\[
\left\{2^{m-2}\right\}
\cup
\left\{
2^{m-2}\pm2^{(m+r-4)/2}:
r=1,3,\ldots,2h-1
\right\}.
\]
Hence, $\mathcal C_{D_5}$ has at most $2h+1$ nonzero weights.
It is clear that every codeword corresponding to
$a\ne0$ has positive weight. Then the dimension of $\mathcal C_{D_5}$ is $m$.
\end{proof}

\begin{proposition}\label{prop:C3C5equiv}
Let \(m\ge5\) be odd. Consider the defining set \(D_3\) of
Conjecture~\ref{Conjecture 30} with \(b=0\) and
\(u\in\F_{2^m}^{*}\), and the defining set \(D_5\) of
Conjecture~\ref{Conjecture 34} with \(h=2\). Then
$D_3=u^{-5}D_5^{\,2}$, where $D_5^{\,2}:=\{d^2:d\in D_5\}$.
Consequently, the trace codes \(\mathcal C_{D_3}\) and
\(\mathcal C_{D_5}\) are permutation equivalent.
\end{proposition}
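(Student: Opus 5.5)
The plan is to reduce both defining sets to images of explicit low-degree polynomials and then match those images directly.

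\textbf{Normalizing $D_3$.} For $b=0$ we have $D_3=\{x^{5/6}+ux:x\in\F_{2^m}\}$. Since $m$ is odd, $6$ is invertible modulo $2^m-1$, so the substitution $x=y^6$ permutes $\F_{2^m}$ and gives $D_3=\{y^5+uy^6:y\in\F_{2^m}\}$. Substituting $y=u^{-1}t$ then gives
\[
y^5+uy^6=u^{-5}\left(t^5+t^6\right),
\qquad\text{so}\qquad
D_3=u^{-5}\,\operatorname{Im}\!\left(t^6+t^5\right).
\]
So the proposition reduces to the $u$-free identity $\operatorname{Im}(t^6+t^5)=D_5^{\,2}$.

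\textbf{Expanding $D_5$.} For $h=2$ the exponent is $d_2=13=8+4+1$. Expanding $(x+1)^{13}=(x^8+1)(x^4+1)(x+1)$ gives
\[
F(x):=(x+1)^{13}+x^{13}+1=x^{12}+x^9+x^8+x^5+x^4+x,
\]
and $D_5^{\,2}=\{F(x)^2:x\in\F_{2^m}\}$. Both maps are two-to-one onto sets of size $2^{m-1}$: on one side because $x^{13}$ is APN, on the other because the Payne map is an $o$-polynomial. It therefore suffices to prove one inclusion. To do so I will look for an explicit map $\psi$ with $F(x)^2=\psi(x)^6+\psi(x)^5$, or, conversely, an expression of $t^6+t^5$ as $F(\rho(t))^2$.

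\textbf{Finding the substitution.} The first attempt, $t=(x^2+x)^2$, is close but not exact. With $s=x^2+x$ one gets $t^6+t^5=(s^6+s^5)^2$ and
\[
s^6+s^5=x^{12}+x^9+x^8+x^6+x^5+x^4,
\]
which differs from $F$ only by $x^6+x$. This suggests trying a modified substitution. A useful structural fact is the palindromic symmetry $x^{13}F(1/x)=F(x)$, which points to an inversion-type substitution. Concretely, I will test:
\begin{itemize}
\item $t=x^{-1}+c$ or $t=(x^2+x)^{-1}$ for suitable constants;
\item rational maps of the form $t=\frac{x^{2^i}+x}{x^{2^j}}$, combined with the Frobenius freedom $F^2$ versus $F$ and the rescalings $t\mapsto t^{2^k}$;
\item the Dillon--Dobbertin transform identity behind Proposition~\ref{prop:C5reduction}, which already links $F$ to the quadratic $z^5+bz^3$. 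This links $F$ to the same quadratic structure as $y^5+uy^6$ after $x=y^6$ in the Payne case.
\end{itemize}
Any substitution that is a bijection, or a rational map that is two-to-one compatible, on $\F_{2^m}$ identifies the two images. The exceptional points (poles, $x\in\F_2$) must be handled separately by checking that $0$ lies in both images.

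\textbf{Main obstacle and conclusion.} I expect finding the exact substitution to be the main obstacle. If a direct polynomial identity fails, the fallback is to compare indicator functions through their Walsh transforms, in the spirit of the Dillon--Dobbertin identity. I would show that $\sum_{t}\chi(a(t^6+t^5))$ and $\sum_x\chi(aF(x)^2)$ agree for every $a$ after suitable reparametrization. Since both sets have size $2^{m-1}$, this identifies $\operatorname{Im}(t^6+t^5)$ and $D_5^{\,2}$ as sets.

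Once $D_3=u^{-5}D_5^{\,2}$ is established, note that the map $d\mapsto u^{-5}d^2$ is an invertible $\F_2$-linear map of $\F_{2^m}$. The remark on invertible linear maps in the preliminaries then gives that $\C_{D_3}$ and $\C_{D_5}$ are permutation equivalent.
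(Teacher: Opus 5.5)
Your normalization $D_3=u^{-5}\operatorname{Im}(t^6+t^5)$ is correct and matches the paper's. So are the expansion $F(x)=x^{12}+x^9+x^8+x^5+x^4+x$, the reduction to a single inclusion by cardinality, and the closing linear-map argument. The gap is that you never establish the step that carries the content, namely $D_5^{\,2}\subseteq\operatorname{Im}(t^6+t^5)$. You list candidate substitutions and a fallback but carry out none of them, so as written the proposition is not proved.

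The fallback is also insufficient as formulated. If $\sum_t\chi(a(t^6+t^5))$ and $\sum_x\chi(aF(x)^2)$ agree only \emph{after a reparametrization} of $a$, you get equal weight distributions, not equal sets. You would need the identity for each $a$ itself; then Fourier inversion, using that both maps are two-to-one, gives equal indicator functions. Alternatively, the reparametrization must be the adjoint of an $\F_2$-linear map.

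The missing identity was one step from your first attempt, which an arithmetic slip derailed. With $s=x^2+x$ you have $s^5=x^{10}+x^9+x^6+x^5$ and $s^6=x^{12}+x^{10}+x^8+x^6$. Hence $s^6+s^5=x^{12}+x^9+x^8+x^5$; your version has spurious $x^6$ and $x^4$ terms. So $F(x)$ differs from $s^6+s^5$ by $x^4+x=s^2+s$, and
\[
F(x)=s^6+s^5+s^2+s=s(s+1)^5=x(x+1)(x^2+x+1)^5,
\]
which is exactly the factorization the paper uses. Now set $t=(s+1)^2=(x^2+x+1)^2$. Then $t+1=s^2$, so $F(x)^2=s^2(s+1)^{10}=t^5(t+1)$, which gives $D_5^{\,2}\subseteq\operatorname{Im}(t^6+t^5)$ directly. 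The paper phrases this as $t$ running over $\{y:\Tr(y)=1\}$, since $s$ runs over the trace-zero hyperplane. Your cardinality argument then finishes the proof.
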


\begin{proof}
When \(b=0\), the defining set in Conjecture~\ref{Conjecture 30} is
$D_3=\operatorname{Im}\bigl(x^{5/6}+ux\bigr)$.
Since \(m\) is odd, \(\gcd(6,2^m-1)=1\), so \(x\mapsto x^6\)
permutes \(\F_{2^m}\). Writing \(x=y^6\), we obtain
$D_3=\{y^5+uy^6:y\in\F_{2^m}\}$. After the substitution
\(z=uy\), this becomes $D_3=u^{-5}\{z^5(z+1):z\in\F_{2^m}\}$.

Now consider \(D_5\) with \(h=2\). Then
\(d_h=2^4-2^2+1=13\), and in characteristic two,
$ (x+1)^{13}+x^{13}+1=x(x+1)(x^2+x+1)^5$.
Put \(v=x^2+x\). Since \(\Tr(v)=0\), the standard Artin--Schreier map \(x\mapsto x^2+x\) is two-to-one from \(\F_{2^m}\) onto
$H_0:=\{v\in\F_{2^m}:\Tr(v)=0\}$ \cite[Ch.~3]{LidlNiederreiter1997}. Hence
$D_5=\{v(v+1)^5:v\in H_0\}$.
Squaring and putting \(y=(v+1)^2\), we have
\(y+1=v^2\). Moreover, since \(m\) is odd,
\(\Tr(1)=1\), and therefore \(\Tr(y)=1\). Thus
$D_5^{\,2}=\{y^5(y+1):y\in H_1\}$, where
$H_1:=\{y\in\F_{2^m}:\Tr(y)=1\}$. Clearly,
$D_5^{\,2}\subseteq\{y^5(y+1):y\in\F_{2^m}\}$.
The latter set is the defining set obtained from the Payne family
when \(b=0\) and \(u=1\), and hence has cardinality \(2^{m-1}\).
Also, \(|D_5|=2^{m-1}\), so \(|D_5^{\,2}|=2^{m-1}\). Therefore
$D_5^{\,2}=\{y^5(y+1):y\in\F_{2^m}\}$.
Combining this with the expression for \(D_3\) gives
\(D_3=u^{-5}D_5^{\,2}\).

Finally, the map \(M(d)=u^{-5}d^2\) is an invertible
\(\F_2\)-linear map from \(D_5\) onto \(D_3\). For
\(a,d\in\F_{2^m}\),
$\Tr(aM(d))=\Tr\!\left(au^{-5}d^2\right)=\Tr\!\left((au^{-5})^{2^{m-1}}d\right)$.
Thus, as \(a\) runs through \(\F_{2^m}\), the coordinates indexed by
\(D_5\) are merely relabeled by the bijection \(d\mapsto M(d)\).
Hence \(\mathcal C_{D_3}\) and \(\mathcal C_{D_5}\) are permutation
equivalent.
\end{proof}

\begin{remark}
Several special cases of Conjecture~\ref{Conjecture 34} admit
complete weight distributions. For $h=3$ and $3\nmid m$,
Theorem~\ref{thm:C5h3} follows from the classical three-valued
Walsh spectrum of the Gold function $\Tr(x^3)$; see
\cite{Gold1968,Cosgun2018}. For $h=2$, Proposition~\ref{prop:C3C5equiv} reduces the problem to the corresponding Payne case. For $h=5$ with $\gcd(m,5)=1$, the problem is related to the two-zero cyclic-code distributions studied by Boston and McGuire \cite{BostonMcGuire2010}; a separate identification is required before invoking those distributions.
These special cases do not imply the general ``three or five
nonzero weights'' assertion, which is false in general.
\end{remark}

\subsection{Conjecture 37: Complete Resolution}
Ahmadi and Shafaeiabr proved that the punctured value sets of all eleven
trinomials in Table~\ref{tab:trinomial} are Singer difference sets and
established the coding-theoretic assertion of Conjecture~37 for classes
$(c)$--$(k)$ \cite{AhmadiShafaeiabr2023}. It remains only to treat classes
$(a)$ and $(b)$.

\begin{conj}(Conjecture 37 \cite{Ding2016})\label{Conjecture 37}
For the eleven trinomials $f_i$ in Table~\ref{tab:trinomial}, let $ D_6={\rm Im}(f_i)\setminus\{0\}$ for $1\le i\le11$.
For every odd $m\ge5$, the binary code $\mathcal{C}_{D_6}$ has parameters $ [2^{m-1},m,2^{m-2}-2^{(m-3)/2}]$ and weight enumerator
$$1+\left(2^{m-2}-2^{(m-3)/2}\right)z^{2^{m-2}-2^{(m-3)/2}}+\left(2^{m-1}-1\right)z^{2^{m-2}}+\left(2^{m-2}+2^{(m-3)/2}\right)z^{\,2^{m-2}+2^{(m-3)/2}}.
$$
The dual of  $\mathcal{C}_{D_6}$ has parameters $[2^{m-1},2^{m-1}-m,3]$.
\begin{table}[h]
\centering
\caption{The eleven trinomials in Conjecture 37.}
\label{tab:trinomial}
\begin{tabular}{c l}
\toprule
Class & $f_i(x)$ \\
\midrule
(a) & $x^{2^m-17}+x^{(2^m+19)/3}+x$ \\
(b) & $x^{2^m-2^{m-4}-1}+x^{2^m-(2^{m-2}+4)/3}+x$ \\
(c) & $x^{2^m-3}+x^{3\cdot 2^{(m+1)/2}+4}+x$ \\
(d) & $x^{2^m-2^{(m-1)/2}-1}+x^{2^{m-1}-2^{(m-1)/2}}+x$ \\
(e) & $x^{2^m-2-(2^{m-1}-4)/3}+x^{2^m-4-(2^m-8)/3}+x$ \\
(f) & $x^{2^m-2^{(m+1)/2}+2^{(m-1)/2}}+x^{2^m-2^{(m+1)/2}-1}+x$ \\
(g) & $x^{2^m-3(2^{(m+1)/2}-1)}+x^{2^{(m+1)/2}+2^{(m-1)/2}-2}+x$ \\
(h) & $x^{2^m-2^{m-2}-1}+x^{2^{m-1}-2}+x$ \\
(i) & $x^{2^m-2^{(m+3)/2}-3}+x^{2^{(m+1)/2}+2}+x$ \\
(j) & $x^{2^m-3(2^{(m-1)/2}+1)}+x^{2^{m-1}-1}+x$ \\
(k) & $x^{2^m-5}+x^6+x$ \\
\bottomrule
\end{tabular}
\end{table}
\end{conj}

We now complete the conjecture by treating the two remaining classes.

\begin{theorem}\label{thm:C6all}
The assertion of Conjecture~\ref{Conjecture 37} holds for all eleven
classes (a)--(k).
\end{theorem}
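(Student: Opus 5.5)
Classes $(c)$--$(k)$ are already covered by Ahmadi and Shafaeiabr \cite{AhmadiShafaeiabr2023}, so only classes $(a)$ and $(b)$ remain. For these two, the plan has four steps: identify the common defining set, compute its Walsh spectrum, read off the weight distribution, and then find the dual distance.

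\textbf{Step 1 (common defining set).} I would start from the value-set identities proved by Ahmadi and Shafaeiabr in their proof of Conjecture~36. These should show that for classes $(a)$ and $(b)$ the set $D=\operatorname{Im}(f_i)\setminus\{0\}$ is the same Singer difference set. That set is the one tied to the Dillon--Dobbertin family with $h=4$, in the form
\[
D^{17}=\F_{2^m}\setminus D_5 .
\]
Here $D_5$ is the derivative image set from Conjecture~34. Since $|D_5|=2^{m-1}$, this gives $|D|=2^{m-1}$ at once, and the code $\mathcal C_{D_6}$ has length $2^{m-1}$.

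\textbf{Step 2 (Walsh spectrum).} A multiplicative relation such as $D^{17}$ does not in general preserve additive character sums. Instead I would use the Dillon--Dobbertin/Hertel identity \cite{DillonDobbertin2004,Hertel2006} for the indicator function of this Singer set, namely
\[
W_{1_D}(a)=W_{\Tr(x^3)}\!\left(a^{17/3}\right).
\]
Because $m$ is odd, $\gcd(3,2^m-1)=1$, and $\gcd(17,2^m-1)=1$ needs to be checked using $\gcd(h,m)=1$. Granting this, $a\mapsto a^{17/3}$ is a permutation of $\F_{2^m}^*$. Hence the multiset $\{W_{1_D}(a):a\neq0\}$ equals the nonzero part of the Gold Walsh spectrum:
\begin{itemize}
\item $0$ occurs $2^{m-1}-1$ times;
\item $2^{(m+1)/2}$ occurs $2^{m-2}+2^{(m-3)/2}$ times;
\item $-2^{(m+1)/2}$ occurs $2^{m-2}-2^{(m-3)/2}$ times.
\end{itemize}
These counts are the ones already used in Theorem~\ref{thm:C5h3}.

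\textbf{Step 3 (weights).} The Walsh-transform subsection gives $\wt(\mathbf c(a))=2^{m-2}+\frac14W_{1_D}(a)$. Note the plus sign, which reverses the roles of the two multiplicities. Substituting the spectrum gives weight $2^{m-2}-2^{(m-3)/2}$ with multiplicity $2^{m-2}-2^{(m-3)/2}$, weight $2^{m-2}$ with multiplicity $2^{m-1}-1$, and weight $2^{m-2}+2^{(m-3)/2}$ with multiplicity $2^{m-2}+2^{(m-3)/2}$. This is exactly the enumerator in Conjecture~\ref{Conjecture 37}. All weights are positive, so the dimension is $m$ and the minimum distance is $2^{m-2}-2^{(m-3)/2}$.

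\textbf{Step 4 (dual distance).} The set $D$ contains no repeated elements and excludes $0$, so $A_1^\perp=A_2^\perp=0$. I would then put the weight distribution into the third Pless power moment and solve for $A_3^\perp$. It suffices to show $A_3^\perp>0$, which gives dual distance $3$. The computation is routine but tedious; the alternative is to compare with the analogous moment computation in \cite{AhmadiShafaeiabr2023}, which applies verbatim since the weight enumerator is the same.

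\textbf{Main obstacle.} The hardest part is Steps 1--2. One must verify that the value-set identities really place classes $(a)$ and $(b)$ in the Dillon--Dobbertin $h=4$ setting with exponent $17$, with the exact normalization. One must also track the exponent $17/3$ and the sign convention of $W_{1_D}$ against $T(a)$, since a sign error would swap the two outer multiplicities. I would check both points numerically at $m=5,7$ before writing the general argument.
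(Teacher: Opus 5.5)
Your proposal is correct and follows the paper's proof essentially step for step. Both use Ahmadi--Shafaeiabr for classes $(c)$--$(k)$, the relation $D_6^{17}=\F_{2^m}\setminus D_5$ with $h=4$, the Dillon--Dobbertin/Hertel identity $W_{1_{D_6}}(a)=W_{\Tr(x^3)}(a^{17/3})$, the Gold spectrum read through $\wt(\mathbf{c}(a))=2^{m-2}+\frac14W_{1_{D_6}}(a)$, and the third Pless moment. The points you leave hedged are handled directly in the paper, and each is easy: the value sets of $(a)$ and $(b)$ coincide because $f_1(x^k)=f_2(x)$ with $k=2^m-2^{m-4}-1$; the moment computation gives $A_3^\perp=2^m(2^m-2)/48$; and $\gcd(17,2^m-1)=1$ holds simply because the order of $2$ modulo $17$ is $8$, which cannot divide odd $m$.
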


\begin{proof}
For classes $(c)$--$(k)$, the result is exactly the theorem of Ahmadi and
Shafaeiabr \cite{AhmadiShafaeiabr2023}.
It remains to consider $(a)$ and $(b)$.

For classes $(a)$ and $(b)$, denote the corresponding trinomials by
\[
f_1(x)
=
x^{2^m-17}
+x^{(2^m+19)/3}
+x
\]
and
\[
f_2(x)
=
x^{2^m-2^{m-4}-1}
+x^{\,2^m-(2^{m-2}+4)/3}
+x.
\]
Let
$
k=2^m-2^{m-4}-1.
$
Since
$
\gcd(k,2^m-1)=1,
$
the map $x\mapsto x^k$ is a permutation of $\F_{2^m}$. Moreover,
\[
(2^m-17)k\equiv1 \pmod {2^m-1}
\qquad
\frac{2^m+19}{3}k
\equiv
2^m-\frac{2^{m-2}+4}{3} \pmod {2^m-1}.
\]

Hence,
$
f_1(x^k)=f_2(x)
$
for every $x\in\F_{2^m}$. Therefore,
\[
\operatorname{Im}(f_1)\setminus\{0\}
=
\operatorname{Im}(f_2)\setminus\{0\}.
\]
We denote this common punctured value set by $D_6$.

By Theorem~15 of \cite{AhmadiShafaeiabr2023}, this common punctured
value set is $D_6=D(F)^*$, where
$F(x)=x^{-48}+x^{20}+x^3$. Moreover, Theorem~20 of
\cite{AhmadiShafaeiabr2023} shows that the $17$th-power image of
$D(F)^*$ is the complement in $\F_{2^m}^{*}$ of the punctured value
set of $G(x)=(x+1)^{241}+x^{241}+1$.
For $h=4$, we have $2^{2h}-2^h+1=241$, so the value set of $G$ is
precisely the set $D_5$ in Conjecture~\ref{Conjecture 34}. Since
$0\in D_5$, it follows that $D_6^{17}=\F_{2^m}\setminus D_5$.

Let $f=1_{D_6}$ be the indicator function of $D_6$, and let
$W_f(a)=\sum_{x\in\F_{2^m}}(-1)^{f(x)+\Tr(ax)}$ be its Walsh transform.
By the Dillon--Dobbertin Walsh identity,
$W_f(a)=W_{\Tr(x^3)}\!\left(a^{17/3}\right)$;
see \cite{DillonDobbertin2004,Hertel2006}. Since $m$ is odd,
$\gcd(3,2^m-1)=\gcd(17,2^m-1)=1$, and hence
$a\mapsto a^{17/3}$ permutes $\F_{2^m}$.

The classical Walsh distribution of the Gold function $\Tr(x^3)$ is
\[
0,\qquad 2^{(m+1)/2},\qquad -2^{(m+1)/2},
\]
with multiplicities
\[
2^{m-1},\qquad
2^{m-2}+2^{(m-3)/2},\qquad
2^{m-2}-2^{(m-3)/2},
\]
respectively \cite{Gold1968,Cosgun2018}. Therefore the same value
distribution holds for $W_f(a)$.

In particular, $W_f(0)=0$, which gives $|D_6|=2^{m-1}$. For
$a\ne0$, $W_f(a)=-2\sum_{d\in D_6}(-1)^{\Tr(ad)}$, and hence
$\wt(\mathbf{c}(a))=2^{m-2}+\frac14W_f(a)$.
Thus the weight enumerator of $\mathcal C_{D_6}$ is exactly the one
stated in Conjecture~\ref{Conjecture 37}. Since every $a\ne0$ gives
a nonzero codeword, the evaluation map is injective and
$\dim\mathcal C_{D_6}=m$. Therefore $\mathcal C_{D_6}$ has parameters
$[2^{m-1},\,m,\,2^{m-2}-2^{(m-3)/2}]$.

Finally, since $0\notin D_6$ and the elements of $D_6$ are distinct,
a generator matrix of $\mathcal C_{D_6}$ has no zero column and no
two equal columns. Hence $A_1^\perp=A_2^\perp=0$. Applying the third Pless power moment \cite{Pless1963} to the above weight distribution gives
$A_3^\perp=\frac{2^m(2^m-2)}{48}>0$. Thus $d^\perp=3$, and
$\mathcal C_{D_6}^{\perp}$ has parameters $[2^{m-1},\,2^{m-1}-m,\,3]$.
This completes the proof for classes $(a)$ and $(b)$, and hence for
all eleven classes.
\end{proof}

\section{Conclusion}

In this paper, we investigated six conjectural families of binary
trace codes proposed by Ding \cite{Ding2016}. A common character-sum framework was
used to translate their Hamming-weight problems into questions on
Walsh spectra, exponential sums, and quadratic forms.

For Conjectures~19 and~27, the relevant character sums were reduced
to a common five-valued setting, yielding a uniform restriction on
the possible nonzero weights and several infinite five-weight
families. For Conjecture~30, we proved that every admissible code has
three, four, or five nonzero weights. In particular, a four-weight
example shows that the original ``three or five weights'' assertion
is false, while an infinite subfamily admits a complete
weight distribution.

For Conjecture~33, the quadratic-form description and odd-extension
lifting give an infinite five-weight family for
$m=9d$, where $d$ is odd and $3\nmid d$. For Conjecture~34, the
weight problem was reduced to quadratic Gauss sums. This yields a
general upper bound of $2h+1$ nonzero weights, and an explicit six-weight counterexample shows that the original general ``three or five weights'' assertion is false. The case $h=3$ with $3\nmid m$ has a complete weight distribution, while the cases $h=2$ and $h=5$ are connected to previously studied families.

Finally, Conjecture~37 is completely resolved. The previously known
cases $(c)$--$(k)$, together with the treatment of the remaining
classes $(a)$ and $(b)$ through the corresponding value-set relation,
the Dillon--Dobbertin Walsh identity, and the classical Gold spectrum,
give the conjectured three-weight distribution and dual distance
three for all eleven classes.

Thus Conjectures~30 and~34 require correction in their general forms,
whereas Conjecture~37 is fully proved. The remaining open parts are
confined to certain parameter ranges of Conjectures~19, 27, 33, and 34, together with the complete weight-distribution classification of the Payne family.

\medskip
\noindent {\bf Acknowledgements.} AI was used for improving the presentation of this paper.


\begin{thebibliography}{99}

\bibitem{AhmadiShafaeiabr2023}
O.~Ahmadi and M.~Shafaeiabr,
``Difference sets and three-weight linear codes from trinomials,''
\emph{Finite Fields Appl.}, vol.~89, Art.~no.~102195, 2023.

\bibitem{ref1}
R.~Anderson, C.~Ding, T.~Helleseth, and T.~Kl{\o}ve,
``How to build robust shared control systems,''
\emph{Des. Codes Cryptogr.}, vol.~15, pp.~111--124, 1998.

\bibitem{AubryKatzLangevin2015}
Y.~Aubry, D.~J.~Katz, and P.~Langevin,
``Cyclotomy of Weil sums of binomials,''
\emph{J. Number Theory},
vol.~154, pp.~160--178, 2015.

\bibitem{BosmaCannonPlayoust1997}
W.~Bosma, J.~Cannon, and C.~Playoust,
``The Magma algebra system. I. The user language,''
\emph{J. Symbolic Comput.}, vol.~24, nos.~3--4, pp.~235--265, 1997.

\bibitem{BostonMcGuire2010}
N.~Boston and G.~McGuire,
``The weight distributions of cyclic codes with two zeros and zeta functions,''
\emph{J. Symbolic Comput.},
vol.~45, no.~7, pp.~723--733, 2010.

\bibitem{ref3}
A.~R.~Calderbank and J.~M.~Goethala,
``Three-weight codes and association schemes,''
\emph{Philips J. Res.}, vol.~39, pp.~143--152, 1984.

\bibitem{ref4}
A.~R.~Calderbank and W.~M.~Kantor,
``The geometry of two-weight codes,''
\emph{Bull. Lond. Math. Soc.}, vol.~18, pp.~97--122, 1986.

\bibitem{CanteautEtAl2021}
A.~Canteaut, L.~K\"olsch, C.~Li, C.~Li, K.~Li, L.~Qu, and F.~Wiemer,
``Autocorrelations of vectorial Boolean functions,''
in \emph{Progress in Cryptology--LATINCRYPT 2021},
Lecture Notes in Computer Science, vol.~12912.
Cham, Switzerland: Springer, 2021, pp.~233--253.

\bibitem{Carlet2021}
C.~Carlet,
\emph{Boolean Functions for Cryptography and Coding Theory}.
Cambridge, U.K.: Cambridge Univ. Press, 2021.

\bibitem{ref5}
C.~Carlet, C.~Ding, and J.~Yuan,
``Linear codes from perfect nonlinear mappings and their secret sharing schemes,''
\emph{IEEE Trans. Inf. Theory}, vol.~51, no.~6, pp.~2089--2102, 2005.

\bibitem{ref7}
G.~Cohen, S.~Mesnager, and H.~Randriam,
``Yet another variation on minimal linear codes,''
\emph{Adv. Math. Commun.}, vol.~10, no.~1, pp.~53--61, 2016.

\bibitem{Cosgun2018}
A.~Co\c{s}gun,
``Explicit evaluation of Walsh transforms of a class of Gold type
functions,''
\emph{Finite Fields Appl.},
vol.~50, pp.~66--83, 2018.

\bibitem{DillonDobbertin2004}
J.~F.~Dillon and H.~Dobbertin,
``New cyclic difference sets with Singer parameters,''
\emph{Finite Fields Appl.}, vol.~10, no.~3, pp.~342--389, 2004.

\bibitem{Ding2016}
C.~Ding,
``A construction of binary linear codes from Boolean functions,''
\emph{Discrete Math.}, vol.~339, no.~9, pp.~2288--2303, 2016.

\bibitem{ref13}
C.~Ding,
``Linear codes from some 2-designs,''
\emph{IEEE Trans. Inf. Theory}, vol.~61, no.~6, pp.~3265--3275, 2015.

\bibitem{ref15}
C.~Ding,
``The construction and weight distributions of all projective binary linear codes,''
arXiv:2010.03184.

\bibitem{ref18}
C.~Ding and H.~Niederreiter,
``Cyclotomic linear codes of order 3,''
\emph{IEEE Trans. Inf. Theory}, vol.~53, no.~6, pp.~2274--2277, 2007.

\bibitem{ref19}
C.~Ding and X.~Wang,
``A coding theory construction of new systematic authentication codes,''
\emph{Theory Comput. Sci.}, vol.~330, no.~1, pp.~81--99, 2005.

\bibitem{ref16}
K.~Ding and C.~Ding,
``Binary linear codes with three weights,''
\emph{IEEE Commun. Lett.}, vol.~18, no.~11, pp.~1879--1882, 2014.

\bibitem{Glynn1983}
D. G. Glynn,
``Two new sequences of ovals in finite Desarguesian planes of even order,''
in \emph{Combinatorial Mathematics X},
Lecture Notes in Mathematics, vol. 1036.
Berlin, Germany: Springer, 1983, pp. 217--229.

\bibitem{Gold1968}
R.~Gold,
``Maximal recursive sequences with 3-valued recursive
cross-correlation functions,''
\emph{IEEE Trans. Inf. Theory},
vol.~14, no.~1, pp.~154--156, 1968.

\bibitem{GologluKrasnayova2019}
F.~G\"olo\u{g}lu and D.~Krasnayov\'a,
``Proofs of several conjectures on linear codes from Boolean functions,''
\emph{Discrete Math.}, vol.~342, no.~2, pp.~572--583, 2019.

\bibitem{HellesethLiXia2025}
T.~Helleseth, C.~Li, and Y.~Xia,
``Investigation of the permutation and linear codes from the Welch APN function,''
\emph{Des. Codes Cryptogr.}, vol.~93, no.~4, pp.~937--959, 2025.

\bibitem{ref23}
Z.~Heng and Q.~Yue,
``A class of binary linear codes with at most three weights,''
\emph{IEEE Commun. Lett.}, vol.~19, no.~9, pp.~1488--1491, 2015.

\bibitem{ref25}
Z.~Heng, W.~Wang, and Y.~Wang,
``Projective binary linear codes from special Boolean functions,''
\emph{Appl. Algebra Eng. Commun. Comput.}, 2020.

\bibitem{Hertel2006}
D.~Hertel,
\emph{Crosscorrelation Properties between Perfect Sequences},
Ph.D. dissertation, Otto-von-Guericke-Universit\"at Magdeburg,
Magdeburg, Germany, 2006, Result~6.8, pp.~59--61.

\bibitem{HollmannXiang2001}
H.~D.~L.~Hollmann and Q.~Xiang,
``On binary cyclic codes with few weights,'' in \emph{Finite Fields and Applications},
Berlin, Germany: Springer, 2001, pp.~251--275.

\bibitem{HuffmanPless2003}
W.~C.~Huffman and V.~Pless,
\emph{Fundamentals of Error-Correcting Codes}.
Cambridge, U.K.: Cambridge Univ. Press, 2003.

\bibitem{ref27}
G.~Jian, Z.~Lin, and R.~Feng,
``Two-weight and three-weight linear codes based on Weil sums,''
\emph{Finite Fields Appl.}, vol.~57, pp.~92--107, 2019.

\bibitem{JohansenHelleseth2009}
A.~Johansen and T.~Helleseth,
``A family of $m$-sequences with five-valued cross correlation,''
\emph{IEEE Trans. Inf. Theory}, vol.~55, no.~2, pp.~880--887, 2009.

\bibitem{ref31}
C.~Li, Q.~Yue, and F.~Fu,
``A construction of several classes of two-weight and three-weight linear codes,''
\emph{Appl. Algebra Eng. Commun. Comput.}, vol.~28, pp.~11--30, 2017.

\bibitem{ref29}
F.~Li and X.~Li,
``Weight distributions of several families of 3-weight binary linear codes,''
arXiv:2002.01853v2.

\bibitem{ref32}
F.~Li, Q.~Wang, and D.~Lin,
``A class of three-weight and five-weight linear codes,''
\emph{Discrete Appl. Math.}, vol.~241, pp.~25--38, 2018.

\bibitem{LidlNiederreiter1997}
R.~Lidl and H.~Niederreiter,
\emph{Finite Fields}, 2nd~ed.,
Encyclopedia of Mathematics and Its Applications, vol.~20.
Cambridge, U.K.: Cambridge Univ. Press, 1997.

\bibitem{ref34}
G.~Luo, X.~Cao, S.~Xu, and J.~Mi,
``Binary linear codes with two or three weights from Niho exponents,''
\emph{Cryptogr. Commun.}, vol.~10, pp.~301--318, 2018.

\bibitem{Maschietti1998}
A. Maschietti,
``Difference sets and hyperovals,''
\emph{Des. Codes Cryptogr.},
vol. 14, pp. 89--98, 1998.

\bibitem{ref37}
S.~Mesnager,
``Linear codes with few weights from weakly regular bent functions based on a generic construction,''
\emph{Cryptogr. Commun.}, vol.~9, pp.~71--84, 2017.

\bibitem{Pless1963}
V.~Pless,
``Power moment identities on weight distributions in error correcting codes,''
\emph{Information and Control}, vol.~6, no.~2, pp.~147--152, 1963.

\bibitem{ref40}
P.~Tan, Z.~Zhou, D.~Tang, and T.~Helleseth,
``The weight distribution of a class of two-weight linear codes derived from Kloosterman sums,''
\emph{Cryptogr. Commun.}, vol.~10, pp.~291--299, 2018.

\bibitem{ref41}
C.~Tang, N.~Li, Y.~Qi, Z.~Zhou, and T.~Helleseth,
``Linear codes with two or three weights from weakly regular bent functions,''
\emph{IEEE Trans. Inf. Theory}, vol.~62, no.~3, pp.~1166--1176, 2016.

\bibitem{ref43}
Q.~Wang, K.~Ding, and R.~Xue,
``Binary linear codes with two weights,''
\emph{IEEE Commun. Lett.}, vol.~19, no.~7, pp.~1097--1100, 2015.

\bibitem{ref45}
X.~Wang, D.~Zheng, and H.~Liu,
``Several classes of linear codes and their weight distributions,''
\emph{Appl. Algebra Eng. Commun. Comput.}, vol.~30, pp.~75--92, 2019.

\bibitem{ref44}
X.~Wang, D.~Zheng, L.~Hu, and X.~Zeng,
``The weight distributions of two classes of binary codes,''
\emph{Finite Fields Appl.}, vol.~34, pp.~192--207, 2015.

\bibitem{WangKadirLiXia2020}
Y.~Wang, W.~K.~Kadir, C.~Li, and Y.~Xia,
``On cryptographic properties of the Welch permutation and a related conjecture,''
in \emph{Proc. 11th Int. Conf. Sequences and Their Applications (SETA)},
2020.

\bibitem{ref46}
Y.~Wu, N.~Li, and X.~Zeng,
``Linear codes with few weights from cyclotomic classes and weakly regular bent functions,''
\emph{Des. Codes Cryptogr.}, vol.~12, pp.~1255--1272, 2020.

\bibitem{ref47}
Y.~Xia and C.~Li,
``Three-weight ternary linear codes from a family of power functions,''
\emph{Finite Fields Appl.}, vol.~46, pp.~17--37, 2017.

\bibitem{ref49}
J.~Yuan and C.~Ding,
``Secret sharing schemes from three classes of linear codes,''
\emph{IEEE Trans. Inf. Theory}, vol.~52, no.~1, pp.~206--212, 2006.

\bibitem{ref50}
D.~Zheng and J.~Bao,
``Four classes of linear codes from cyclotomic cosets,''
\emph{Des. Codes Cryptogr.}, vol.~86, pp.~1007--1022, 2018.

\bibitem{ref51}
Z.~Zhou, N.~Li, C.~Fan, and T.~Helleseth,
``Linear codes with two or three weights from quadratic bent functions,''
\emph{Des. Codes Cryptogr.}, vol.~81, pp.~1--13, 2015.

\end{thebibliography}
\end{document}